\newtheorem{theorem}{Theorem}
\newtheorem{lemma}{Lemma}
\newtheorem{proof}[theorem]{Proof}
\renewcommand{\IEEEQED}{\IEEEQEDopen}
\begin{document}
\title{Coding Delay Analysis of Dense and Chunked Network Codes over Line Networks${}^{\dagger}$ \footnote{${}^{\dagger}$A preliminary version of this work has been presented partly in NetCod 2012, Cambridge, MA, USA, June 2012, and in part in ISIT 2012, Cambridge, MA, USA, July 2012.}}

\author{\IEEEauthorblockN{Anoosheh~Heidarzadeh and Amir H. Banihashemi\\}
\IEEEauthorblockA{\small{Department of Systems and Computer Engineering, Carleton University, Ottawa, ON, Canada}\\
}}

\maketitle

\begin{abstract}In this paper, we analyze the coding delay and the average coding delay of random linear network codes (a.k.a. dense codes) and chunked codes (CC), which are an attractive alternative to dense codes due to their lower complexity, over line networks with Bernoulli losses and deterministic regular or Poisson transmissions. Our results, which include upper bounds on the delay and the average delay, are (i) for dense codes, in some cases more general, and in some other cases tighter, than the existing bounds, and provide a more clear picture of the speed of convergence of dense codes to the (min-cut) capacity of line networks; and (ii) the first of their kind for CC over networks with such probabilistic traffics. In particular, these results demonstrate that a stand-alone CC or a precoded CC provide a better tradeoff between the computational complexity and the convergence speed to the network capacity over the probabilistic traffics compared to arbitrary deterministic traffics which have previously been studied in the literature.\end{abstract}

\section{Introduction}
Random linear network codes (a.k.a. dense codes) achieve the min-cut capacity over various network scenarios, e.g., unicast over line networks, but at the cost of a rather high computational complexity~\cite{LMKE:2008}. Targeting the design of more computationally efficient network codes, Maymounkov \emph{et al.}~\cite{MHL:2006} proposed \emph{chunked codes} (CC), which generalize dense codes, and operate by partitioning the message of the source into non-overlapping (disjoint) sub-messages of equal size, called \emph{chunks}~\cite{MHL:2006}. Recently, a generalized version of chunked codes, referred to as \emph{overlapped chunked codes} (OCC), were also independently proposed in \cite{SZK:2009} and \cite{HB:2010}. It has been analytically shown in \cite{HB:2011} that, for sufficiently small chunks, OCC provide a better tradeoff between the speed of convergence to (achieve or approach) the min-cut capacity and the message or packet error rate, compared to CC, over line networks with arbitrary deterministic traffics. This is while earlier in \cite{HB:2010} it was demonstrated that CC provide a better tradeoff between the speed of convergence to the min-cut capacity and the message error rate for sufficiently large chunks (also see~\cite{HBJ:2011} for more details). In this paper, our focus is on chunked codes. The extension of the analysis to OCC is not straightforward and is beyond the scope of this work. In chunked coding, each node at each transmission time randomly chooses a chunk, and transmits it by using a dense code. In fact, a dense code is a CC with only one chunk of the message size. Thus, CC require less complex coding operations due to applying coding on chunks smaller than the original message. This however comes at the cost of lower speed of convergence to the min-cut capacity compared to dense codes.

The speed of convergence of dense codes and chunked codes to the min-cut capacity of line networks with arbitrary deterministic traffics (with deterministic transmission schedules and loss models) was studied in~\cite{MHL:2006,HBJ:2011}. It is not however straightforward to apply the results to the networks with probabilistic traffics. In particular, it has been shown that for arbitrary deterministic traffics (i) a dense code always achieves the capacity; (ii) a CC achieves the min-cut capacity, so long as the size of the chunks is lower bounded by a function super-logarithmic in the message size and super-log-cubic in the network length, and (iii) a CC, preceded by a capacity-achieving erasure code, approaches the min-cut capacity with an arbitrarily small but non-zero constant gap, so long as the size of the chunks is lower bounded by a function constant in the message size and log-cubic in the network length.

Aside from the results for arbitrary deterministic traffics, Lun \emph{et al.} \cite{LMKE:2008} showed that dense codes achieve the min-cut capacity of line networks with probabilistic traffics specified by stochastic processes with bounded average rate. They however did not discuss the speed of convergence of such codes to the min-cut capacity. This issue was later studied in~\cite{PFS:2005,DDHE:2009}, by analyzing the coding delay\footnote{The \emph{coding delay} of a code over a network with a given traffic (schedule of transmissions and losses) is the minimum time that the code takes to transmit all the message vectors from the source to the sink. The coding delay is a random variable due to the randomness in both the code and the traffic.} and the average coding delay\footnote{The \emph{average coding delay} of a code with respect to a class of traffics is the coding delay of the code averaged out over all the traffics (but not the codes), and hence is a random variable due to the randomness in the code.} of dense codes over some probabilistic traffics. There is however no result on CC over the networks with probabilistic traffics in the literature.

Pakzad \emph{et al.} \cite{PFS:2005}, for the first time, studied the average coding delay of dense codes (operating in $\mathbb{F}_2$) over line networks with deterministic regular transmissions and Bernoulli losses, where the special case of two identical links in tandem was considered. The analysis however did not provide any insight about how the coding delay (which is random with respect to both the codes and the traffics) can deviate from the average coding delay (which is random with respect to the codes but not the traffics).

More recently, Dikaliotis \emph{et al.} \cite{DDHE:2009} studied both the average coding delay and the coding delay of dense codes (operating in a finite field of infinitely large size) over the line networks of arbitrary length with traffics similar to those in \cite{PFS:2005}, but under the assumption that there exists a unique worst link (i.e., a unique link with the minimum probability of transmission success) in the network.

In this paper, we study the coding delay and the average coding delay of dense codes, and for the first time, chunked codes for different ranges of the chunk sizes, operating in the field of size two ($\mathbb{F}_2$), over line networks with traffics similar to those in \cite{LMKE:2008,PFS:2005,DDHE:2009}. Our study has no limiting assumption on the traffic parameters or the length of the network. It is worth noting that any upper bound on the coding delay or on the average coding delay of any coding scheme over $\mathbb{F}_2$ serves as an upper bound for the underlying code over any finite field of larger size. The method of analysis in this paper is itself, however, generalizable to finite fields of larger size, but the generalization is not trivial and is beyond the scope of this paper.

The main contributions of this paper are:

\begin{itemize}
\item We derive upper bounds on the coding delay and the average coding delay of a dense code, or a CC alone, or a CC with precoding, in the asymptotic setting, i.e., as the message size tends to infinity, over the traffics with deterministic regular transmissions or Poisson transmissions and Bernoulli losses with arbitrary parameters.\footnote{The scenario of deterministic regular transmissions and Bernoulli losses has been studied in \cite{PFS:2005,DDHE:2009}, and the scenario of Poisson transmissions with Bernoulli losses has been studied in \cite{LMKE:2008} as a special case of the probabilistic traffics over line networks.} The upper bounds are functions of the message size, the length of the network, and the parameters of the traffic and the code. We also consider a special case with unequal traffic parameters, where no two links have equal traffic parameters. The upper bounds, in this case, indicate how the coding delay or the average coding delay change as a function of the minimum of the (absolute value of the) difference between the traffic parameters of any two consecutive links in the network.
\item We show that: (i) our upper bounds on the average coding delay of dense codes are in some cases more general, and in some other cases tighter, than what were presented in \cite{PFS:2005,DDHE:2009}, and (ii) the coding delay of dense codes may have a large deviation from the average coding delay in both cases of identical and non-identical links; for non-identical links, our upper bound on such a deviation is smaller than what was previously shown in \cite{DDHE:2009}. It is noteworthy that, for identical links, upper bounding such a deviation has been an open problem (see \cite{DDHE:2009}).
\item We also show that: (i) a CC achieves the min-cut capacity, so long as the size of the chunks is bounded from below by a function super-logarithmic in the message size and super-log-linear in the network length, and (ii) the combination of a CC and a capacity-achieving erasure code approaches the min-cut capacity with an arbitrarily small non-zero constant gap, so long as the size of the chunks is bounded from below by a function constant in the message size and log-linear in the network length. The lower bounds in both cases are smaller than those over the networks with arbitrary deterministic traffics. Thus both coding schemes (i.e., stand-alone CC and CC with precoding) are less computationally complex (require smaller chunks), for the same speed of convergence (with or without a gap) to the min-cut capacity, over such probabilistic traffics, compared to arbitrary deterministic traffics.
\item In a capacity-achieving scenario, for such probabilistic traffics, we show that for CC: (i) the upper bound on the overhead (the difference between the coding delay and the min-cut capacity\footnote{For the definition of min-cut capacity used here, see Section~\ref{subsec:TLM}}) grows sub-log-linearly with the message size and the network length, and decays sub-linearly with the size of the chunks, and (ii) the upper bound on the average overhead (the difference between the average coding delay and the min-cut capacity) grows sub-log-linearly (or poly-log-linearly) with the message size, and sub-log-linearly (or log-linearly) with the network length, and decays sub-linearly (or linearly) with the size of the chunks, in the case with arbitrary (or unequal) traffic parameters. For arbitrary deterministic traffics, the upper bound on the overhead or that on the average overhead was shown in~\cite{HBJ:2011} to be similar to the case (i), mentioned above, but with a larger (super-linear) growth rate with the network length.
\end{itemize}

This paper is an extended version of our earlier works \cite{HB1P:2012,HB2P:2012}, and contains more details on the methodology of the analysis and the details of the proofs.



\section{Network Model and Problem Setup}
\subsection{Transmission and Loss Model}\label{subsec:TLM}
We consider a unicast problem (one-source one-sink) over a line network with $L$ links connecting $L+1$ nodes $\{v_i\}_{0\leq i\leq L}$ in tandem. The source node $v_0$ has a message of $k$ vectors, called \emph{message vectors}, from a vector space $\mathcal{F}$ over $\mathbb{F}_2$, and the sink node $v_L$ requires all the message vectors.

Each (non-sink) node at each transmission time transmits a (coded) packet, which is a vector in $\mathcal{F}$. The packet transmissions are assumed to occur in discrete-time, and the transmission times over different links are assumed to follow independent stochastic processes. The transmission times over the $i\textsuperscript{th}$ link are specified by (i) a deterministic process where there is a packet transmission at each time instant, or (ii) a Poisson process with parameter $\lambda_i: 0<\lambda_i\leq 1$, where $\lambda_i$ is the average number of transmissions per time unit over the $i\textsuperscript{th}$ link. The transmission schedules resulting from (i) and (ii) are referred to as \emph{deterministic regular} and \emph{Poisson}, respectively.

Each transmitted packet either succeeds (\emph{successful packet}) or fails (\emph{lost packet}) to be received. The successful packets are assumed to arrive with zero delay, and the lost packets will never arrive. The packets are assumed to be successful independently over different links. The successful packets over the $i\textsuperscript{th}$ link are specified by a Bernoulli process with (success) parameter $p_i: 0<p_i\leq 1$, where $p_i$ is the average number of successes per transmission over the $i\textsuperscript{th}$ link. The loss model defined as above is referred to as \emph{Bernoulli}.

For each traffic model as above, the parameters $\{p_i\}$ or $\{\lambda_i,p_i\}$ are called the \emph{traffic parameters}. In the case of traffics with parameters $\{p_i\}$ or $\{\lambda_i,p_i\}$, the \emph{min-cut capacity} is defined as the ratio of the message size $k$ to the minimum (equivalent) traffic parameter $\min_{1\leq i\leq L}p_i$ or $\min_{1\leq i\leq L}\lambda_i p_i$, respectively. For simplifying the terminology, hereafter, we refer to the ``min-cut capacity'' as the ``capacity.''

\subsection{Assumptions}
We assume that there is no feedback information in the network before the time that the sink node recovers all the message packets. Whenever the decoding process is successful, the sink node $v_L$ sends an acknowledge message to the node $v_{L-1}$. The node $v_{L-1}$ then stops transmitting new packets to the sink node, and relays the acknowledge message to the node $v_{L-2}$. The feedback relaying process continues over the links till the time that the source node $v_0$ receives the acknowledge message, and stops transmitting new packets. The feedback transmissions are assumed to be error-free and with zero delay.

We also assume that the size of the memory at the network nodes is unbounded, i.e., all the packets, received by a node, will remain in the memory of that node till the end of the transmission time.

\subsection{Problem Setup}
The goal in this paper is to derive upper bounds on the coding delay and the average coding delay of dense codes and chunked codes over line networks with deterministic regular or Poisson transmissions and Bernoulli losses.

For some fixed $0<\epsilon<1$, the coding delay of a class of codes over a network with a class of traffics is upper bounded by $N$ with failure probability (w.f.p.) bounded above by (b.a.b.) $\epsilon$, so long as the coding delay of a randomly chosen code over the network with a randomly chosen traffic is larger than $N$ with probability (w.p.) b.a.b. $\epsilon$. The average coding delay of a class of codes over a network with respect to a class of traffics is upper bounded by $N$ w.f.p. b.a.b. $\epsilon$, so long as the average coding delay of a randomly chosen code over the network with respect to the class of traffics is larger than $N$ w.p. b.a.b. $\epsilon$.

\subsection{Asymptotic Notations}
Throughout the paper, we will use the asymptotic notations $O(.)$, $o(.)$, $\Omega(.)$ and $\omega(.)$ defined as follows. For non-negative functions $f(n)$ and $g(n)$, we write: (i) $f(n)=O(g(n))$ if and only if $\limsup_{n\rightarrow\infty}\frac{f(n)}{g(n)}<\infty$; (ii) $f(n)=o(g(n))$ if and only if $\lim_{n\rightarrow\infty}\frac{f(n)}{g(n)}=0$; (iii) $f(n)=\Omega(g(n))$ if and only if $\limsup_{n\rightarrow\infty}\frac{f(n)}{g(n)}>0$; (iv) $f(n)=\omega(g(n))$ if and only if $\lim_{n\rightarrow\infty}\frac{f(n)}{g(n)}=\infty$; and (v) $f(n)\sim g(n)$ if and only if $\lim_{n\rightarrow\infty}\frac{f(n)}{g(n)}=1$.

\section{Deterministic Regular Transmissions and Bernoulli Losses}\label{sec:BernoulliLossRegularTraffic}
In this section, for each coding scheme, we first consider arbitrary traffic parameters $\{p_i\}$; and next, we consider a special case with unequal traffic parameters.

\subsection{Dense Codes}\label{subsec:DC}
In a dense coding scheme, the source node, at each transmission opportunity, transmits a packet by randomly linearly combining the message vectors, and each non-source non-sink (interior) node transmits a packet by randomly linearly combining its previously received packets. The vector of coefficients of the linear combination associated with a packet is called the \emph{local encoding vector} of the packet, and the vector of the coefficients representing the mapping between the message vectors and a coded packet is called the \emph{global encoding vector} of the packet. The global encoding vector of each packet is assumed to be included in the packet header. The sink node can recover all the message vectors as long as it receives an innovative collection of packets (with linearly independent global encoding vectors) of the size equal to the number of message vectors at the source node.

The first step in our analysis is to lower bound the size of a maximal collection of packets at any non-source node until a certain time, referred to as the \emph{decoding time}, where the entries of the global encoding vectors of the packets in the collection are independent and uniformly distributed (i.u.d.) Bernoulli random variables. Such packets are called the \emph{globally dense packets}. Based on the result of~\cite[Lemma~1]{HBJ:2011}, the size of a maximal collection of globally dense packets at a node can be lower bounded by the number of packets with linearly independent local encoding vectors at that node. With a slight abuse of terminology, the packets with linearly independent local encoding vectors are called the \emph{dense packets}. (By the above argument, the set of dense packets at each node is a subset of the globally dense packets at that node.\footnote{One should however note that the local encoding vectors being linearly independent (i.e., forming a ``dense'' collection of packets) is not a necessary condition for the packets to form a ``globally dense'' collection. In particular, the collection of all the packets successfully transmitted by the source node is globally dense (by the definition) but some packets in this collection might have local encoding vectors linearly dependent on those of the rest (and hence such packets do not belong to the dense collection of the packets successfully transmitted by the source node).}) The set of dense packets are of main importance in our analysis. In particular, by studying the linear dependence/independence of the local encoding vectors of the successful packets over a link, the number of dense packets, and further, the size of a maximal collection of globally dense packets, at the receiving node of that link can be lower bounded. We, next, upper bound the decoding time such that the probability that the underlying collection fails to include an innovative sub-collection of a sufficiently large size (equal to the message size) is upper bounded (this probability upper bounds the probability of the failure of a dense code to recover all the message packets till the underlying decoding time).

Let $\mathcal{O}_{i}$ ($\mathcal{I}_i$) be the set of labels of the successful (i.e., not lost) packets transmitted (received) by the $i\textsuperscript{th}$ node and let $\mathcal{D}_i$ be the set of labels of the dense packets at the $i\textsuperscript{th}$ node. Let $Q_{i+1}$ and $Q_{i}$, with entries over $\mathbb{F}_2$, be the decoding matrices\footnote{The global encoding vectors of the received packets at a node form the rows of the \emph{decoding matrix} at that node.} at the $(i+1)\textsuperscript{th}$ and $i\textsuperscript{th}$ nodes, respectively, and $T_{i}$, the \emph{transfer matrix} at the $i\textsuperscript{th}$ node, be a matrix over $\mathbb{F}_2$ such that $Q_{i+1}=T_{i}Q_{i}$. The rows of $T_i$ are the local encoding vectors of the successful packets transmitted by the $i\textsuperscript{th}$ node, i.e., $(T_i)_{n,j}=\lambda_{n,j}$, $\forall {n}\in \mathcal{O}_{i}$ and $\forall {j}\in \mathcal{I}_{i}$, where $\lambda_n$ is the local encoding vector of the $n\textsuperscript{th}$ successful packet. Let $\hat{Q}_{i}$, the \emph{modified decoding matrix} at the $i\textsuperscript{th}$ node, be $Q_{i}$ restricted to its rows pertaining to the global encoding vectors of the dense packets. Let $\hat{T}_{i}$, the \emph{modified transfer matrix} at the $i\textsuperscript{th}$ node, be a matrix over $\mathbb{F}_2$ such that $Q_{i+1}=\hat{T}_{i}\hat{Q}_{i}$, i.e., \[(\hat{T}_{i})_{{n},j}={\lambda}_{{n},j}+\sum_{\ell\in \mathcal{I}_{i}\setminus \mathcal{D}_{i}}\lambda_{n,\ell}\gamma_{\ell,j}, \forall n\in \mathcal{O}_i, \forall j\in\mathcal{D}_i\] and $\{\gamma_{\ell,j}\}$ are in $\mathbb{F}_2$ satisfying $\sum_{j\in \mathcal{D}_i}\gamma_{\ell,j}\lambda_{j,k}=\lambda_{\ell,k}$, $\forall k\in\mathcal{I}_{i}$. The $n\textsuperscript{th}$ row of $\hat{T}_{i}$ indicates the labels of dense packets at the $i\textsuperscript{th}$ node which contribute to the $n\textsuperscript{th}$ successful packet sent by the $i\textsuperscript{th}$ node, and the $j\textsuperscript{th}$ column of $\hat{T}_{i}$ indicates the labels of successful packets sent by the $i\textsuperscript{th}$ node to which the $j\textsuperscript{th}$ dense packet contributes.


Let $Q$ be a matrix over $\mathbb{F}_2$. The \emph{density} of $Q$, denoted by $\mathcal{D}(Q)$, is the size of a maximal \emph{dense} collection of rows in $Q$, where a collection of rows is dense if the rows have all i.u.d. entries over $\mathbb{F}_2$. Further, $Q$ is called a \emph{dense matrix} if all its rows form a dense collection. Let $T$ be a matrix over $\mathbb{F}_2$. The \emph{rank} of $T$, denoted by $\text{rank}(T)$, is the size of a maximal collection of linearly independent rows in $T$ over $\mathbb{F}_2$.

\begin{lemma}\label{lem:DensityTM} Let $Q$ be a dense matrix over $\mathbb{F}_2$, and $T$ be a matrix over $\mathbb{F}_2$, where the number of rows in $Q$ and the number of columns in $T$ are equal. If $\text{rank}(T)\geq \gamma$, then $\mathcal{D}(TQ)\geq \gamma$.\end{lemma}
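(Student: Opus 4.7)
Let $n$ denote the common value of the number of rows of $Q$ and the number of columns of $T$, and let $m$ be the number of columns of $Q$. The plan is to exhibit $\gamma$ specific rows of $TQ$ whose combined set of entries are i.u.d.\ Bernoulli, which, by the definition of density in the paper, is exactly what is needed to certify $\mathcal{D}(TQ)\geq \gamma$.

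First I would invoke the rank hypothesis to extract a $\gamma\times n$ submatrix $T'$ of $T$ whose $\gamma$ rows are linearly independent over $\mathbb{F}_2$. The rows of $T'Q$ are then $\gamma$ rows of $TQ$, so it suffices to show that the entire matrix $T'Q$ is dense. Next, I would analyze $T'Q$ column by column. Writing $q^{(1)},\ldots,q^{(m)}$ for the columns of $Q$, the assumption that $Q$ is dense is equivalent to saying that these columns are mutually independent, each uniformly distributed over $\mathbb{F}_2^{n}$. Since the $j\textsuperscript{th}$ column of $T'Q$ equals $T'q^{(j)}$ and depends on $Q$ only through $q^{(j)}$, the columns of $T'Q$ are mutually independent as well. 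It then remains to verify that each column $T'q^{(j)}$ is uniformly distributed on $\mathbb{F}_2^{\gamma}$.

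For this last step I would use the elementary fact that a surjective linear map pushes a uniform distribution to a uniform distribution. Because $T'$ has rank $\gamma$, the map $q\mapsto T'q$ from $\mathbb{F}_2^{n}$ to $\mathbb{F}_2^{\gamma}$ is surjective, and each of its fibres has exactly $2^{n-\gamma}$ elements; hence every value in $\mathbb{F}_2^{\gamma}$ is hit with probability $2^{-\gamma}$ when $q^{(j)}$ is uniform on $\mathbb{F}_2^{n}$. This says precisely that the $\gamma$ coordinates of $T'q^{(j)}$ are i.u.d.\ Bernoulli. Combining the within-column uniformity with the between-column independence established above, all $\gamma m$ entries of $T'Q$ are i.u.d.\ Bernoulli, so these $\gamma$ rows of $TQ$ form a dense collection and $\mathcal{D}(TQ)\geq \gamma$.

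There is no substantive obstacle in the argument; the only nontrivial ingredient is the ``surjective linear map preserves uniformity'' step, which is completely standard but is the place where the hypothesis $\mathrm{rank}(T)\geq \gamma$ is actually used. Everything else is a reduction to that fact via the column-wise independence of the entries of $Q$.
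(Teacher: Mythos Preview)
Your argument is correct: selecting $\gamma$ linearly independent rows of $T$ to form $T'$, observing that the columns of $Q$ are independent uniform vectors in $\mathbb{F}_2^{n}$, and then using surjectivity of $q\mapsto T'q$ to conclude that each column of $T'Q$ is uniform on $\mathbb{F}_2^{\gamma}$ (hence all $\gamma m$ entries of $T'Q$ are i.u.d.) is exactly the right route, and the identification of the ``surjective linear map preserves uniformity'' step as the place where the rank hypothesis enters is accurate. The paper itself does not give a proof here but simply refers to~\cite{HBJ:2011}, so there is no in-paper argument to compare against; your proof is the standard direct one and would be a suitable self-contained replacement for that citation.
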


\begin{proof}The proof can be found in~\cite{HBJ:2011}.\end{proof}

Since $Q_{i+1}=\hat{T}_i \hat{Q}_i$, and $\hat{Q}_i$ is dense,\footnote{The rows in $\hat{Q}_i$ are the global encoding vectors of the dense packets at the $i\textsuperscript{th}$ node, and based on an earlier argument, the set of dense packets at a node belong to the set of globally dense packets at that node. Thus, the entries of all the rows in $\hat{Q}_i$ are i.u.d. over $\mathbb{F}_2$.} by applying the result of Lemma~\ref{lem:DensityTM}, it follows that $\mathcal{D}(Q_{i+1})$ can be bounded from below so long as $\text{rank}(\hat{T}_i)$ is bounded from below. The rank of the modified transfer matrix $\hat{T}_i$ is a function of the structure of $\hat{T}_i$, and the structure of such a matrix depends on the number of dense packet arrivals at the $i\textsuperscript{th}$ node and the number of successful packet departures from the $i\textsuperscript{th}$ node before or after any given point in time. Such parameters depend on the traffic over the $i\textsuperscript{th}$ and $(i+1)\textsuperscript{th}$ links, and are therefore random variables. It is however not straightforward to find the distribution of such random variables. We thus use a probabilistic technique as follows to study such variables.

Let $(0,N_T]$ be the period of time over which the transmissions occur ($N_T$ is the decoding time). We split the time interval $(0,N_T]$ into $w$ disjoint subintervals (partitions) of length $N_T/w$. The first partition represents the time interval $(0,N_T/w]$; the second partition represents the time interval $(N_T/w,2 N_T/w]$, and so forth. For every $1\leq i< L$ and $1\leq j<w$, all the arrivals over the $i\textsuperscript{th}$ link in the first $j$ partitions, i.e., in the time interval $(0,j N_T/w]$, occur before any departure over the $(i+1)\textsuperscript{th}$ link in the $(j+1)\textsuperscript{th}$ partition, i.e., in the time interval $(j N_T/w,(j+1)N_T/w]$. Thus the number of arrivals at the $i\textsuperscript{th}$ node before any given point in time within the $(j+1)\textsuperscript{th}$ partition is bounded from below by the sum of the number of arrivals at this node in the first $j$ partitions.

This method of counting is however suboptimal since there might be some extra arrivals in the $(j+1)\textsuperscript{th}$ partition, which arrive before the given point in time within this partition. To control the impact of sub-optimality, the length of partitions needs to be chosen with some care. To be specific, the length of partitions, on the one hand, needs to be sufficiently small such that there is not a large number of arrivals in one partition compared to the total number of arrivals in all the partitions. This should be the case so that ignoring a subset of arrivals in one partition does not result in a significant difference in the number of arrivals before each point in time within the same partition. On the other hand, the partitions need to be long enough such that the deviation of the number of arrivals from the expectation in one partition is negligible in comparison with the expectation itself. This ensures the validity of our analysis and the tightness of our results.

Let $I_{ij}$ represent the $j\textsuperscript{th}$ partition pertaining to the $i\textsuperscript{th}$ link for every $i$ and $j$. We focus on the set of all the packets over the $i\textsuperscript{th}$ link in the \emph{active partitions} pertaining to this link, where, for every $i,j$, $I_{ij}$ is an active partition if and only if $i\leq j\leq w-L+i$. Such a partition is active in the sense that (i) there exists some other partition over the upper link so that all its packets arrive before the departure of any packet in the underlying active partition, and (ii) there exists some other partition over the lower link so that all its packets depart after the arrival of any packet in the underlying active partition. In particular, the first $w-L+1$ partitions pertaining to the first link are all active; the $w-L+1$ partitions pertaining to the second link starting from the second partition are all active and so forth. Let $w_T$ represent the total number of active partitions pertaining to all the links, i.e., \begin{equation}\label{eq:wT}w_T\doteq L(w-L+1).\end{equation}

We start off with lower bounding the number of successful packets in all the $w_T$ active partitions. Let $I_{ij}$ be an active partition, and $\varphi_{ij}$ be the number of (successful) packets in $I_{ij}$. Since the length of the partition $I_{ij}$ is $N_{T}/w$, and by the assumption the packet successes over the $i\textsuperscript{th}$ link follow a Bernoulli process with the parameter $p_i$, $\varphi_{ij}$ is a binomial random variable with the expected value $\varphi_i \doteq p_i N_T /w$. Let \begin{equation}\label{eq:p}p\doteq\min_{1\leq i\leq L}p_i,\end{equation} and \begin{equation}\label{eq:varphi}\varphi\doteq pN_T/w.\end{equation} For any real number $x$, let $\dot{x}$ denote $\frac{x}{2}$. By applying the Chernoff bound, one can show that $\varphi_{ij}$ is not larger than or equal to \begin{equation}\label{eq:r}r\doteq (1-\gamma^{*})\varphi\end{equation} w.p. b.a.b. $\dot{\epsilon}/w_T$, so long as $0<\gamma^{*}<1$ is chosen such that $r$ is an integer, and $\gamma^{*}$ goes to $0$ as $N_T$ goes to infinity, where \begin{equation}\label{eq:gammastar}\gamma^{*}\sim \left(\frac{1}{\dot{\varphi}}\ln\frac{w_T}{\dot{\epsilon}}\right)^{\frac{1}{2}}.\end{equation}

For all $i,j$, suppose that $\varphi_{ij}$ is larger than or equal to $r$. This assumption fails if the number of packets in some active partition is less than $r$. Hence, the failure occurs w.p. b.a.b. $\dot{\epsilon}$.

Next, for every $1<i\leq L$ and $1\leq j\leq w-L+1$, we lower bound the number of dense packets in the first $j$ active partitions over the $i\textsuperscript{th}$ link. Before explaining the lower bounding technique in detail, let us introduce two lemmas which will be useful to lower bound the rank of the modified transfer matrix at each node (depending on whether the number of dense packet arrivals at the $i\textsuperscript{th}$ node over the $i\textsuperscript{th}$ link in a given partition is larger or smaller than the number of packet departures from that node over the $(i+1)\textsuperscript{th}$ link in the partition with the same index as the underlying partition pertaining to the $i\textsuperscript{th}$ link).

Let $w^{*}$, $r^{*}$ and $\{r^{*}_l\}_{1\leq l\leq w^{*}}$ be arbitrary non-negative integers, and let $r^{*}_{\text{max}}=\max_{1\leq l\leq w^{*}} r^{*}_l$ and $r^{*}_{\text{min}}=\min_{1\leq l\leq w^{*}} r^{*}_l$. For any pair $(i',j')$ such that $1\leq j'\leq i'\leq w^{*}$, let $T_{i',j'}$ be an $r^{*}\times r^{*}_{j'}$ dense matrix over $\mathbb{F}_2$; for any other pair $(i',j')$, let $T_{i',j'}$ be an arbitrary $r^{*}\times r^{*}_{j'}$ matrix over $\mathbb{F}_2$.\footnote{For any pair $(i',j')$ such that $1\leq i' < j' \leq w^{*}$, the entries of $T_{i',j'}$ might be dependent on the entries of $T_{i'',j''}$, for any other pair $(i'',j'')$ such that $1\leq j''\leq i''\leq w^{*}$.} Let $T=[T_{i',j'}]_{1\leq i',j'\leq w^{*}}$. The matrix $T$ is called \emph{random block lower-triangular} (RBLT) (see Figure~\ref{fig:MTM}).

\begin{figure}[t!]
  \centering
  \includegraphics[width=2.5in]{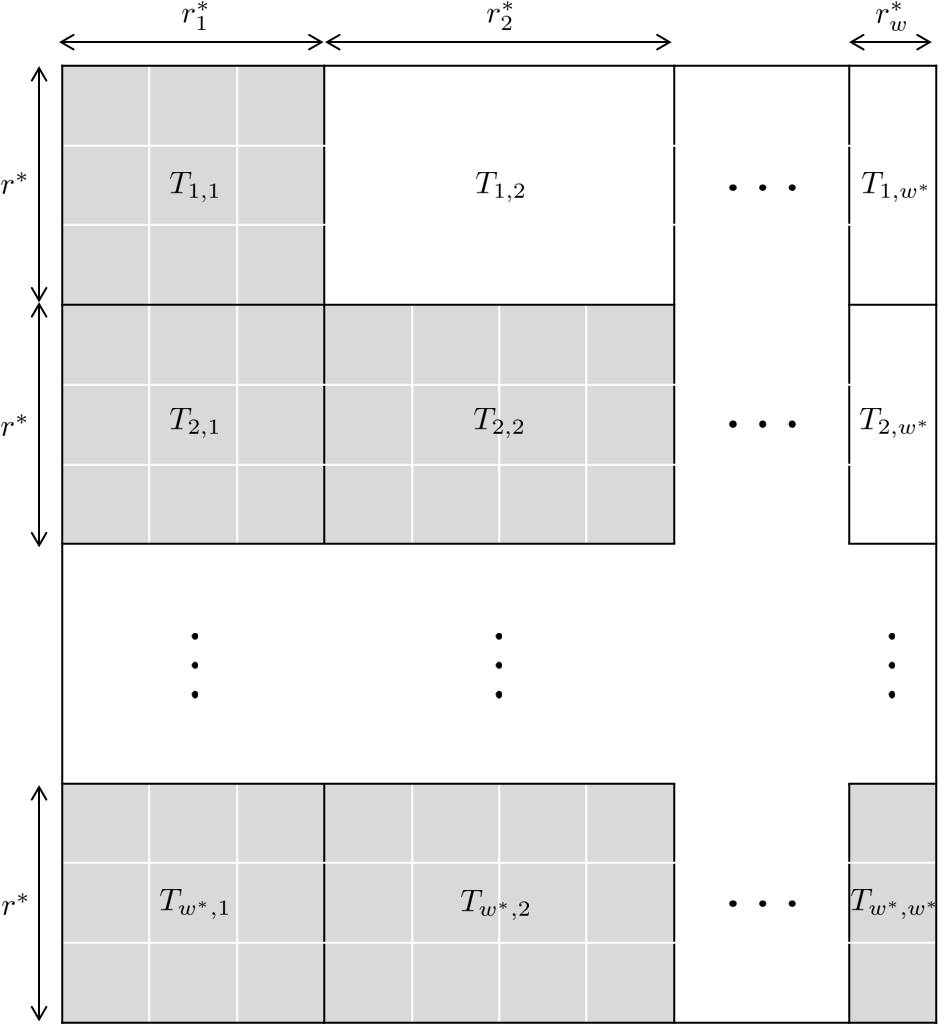}
  \caption{The structure of a random block lower-triangular (RBLT) matrix $T$ with parameters $w^{*},r^{*}$ and $\{r^{*}_l\}_{1\leq l\leq w^{*}}$. The shaded blocks represent the dense sub-matrices of $T$ with i.u.d. Bernoulli entries, and the blank blocks represent those sub-matrices of $T$ with arbitrarily dependent or independent entries with respect to the entries of the dense sub-matrices of $T$.}
  \label{fig:MTM}
\end{figure}

\begin{lemma}\label{lem:VerticalT} Let $T$ be an RBLT matrix with parameters $w^{*}$, $r^{*}$ and $\{r^{*}_{l}: 0\leq r^{*}_{l}\leq r^{*}\}_{1\leq l\leq w^{*}}$. Let $n^{*}= \sum_{1\leq l\leq w^{*}}r^{*}_{l}$. For every integer $0\leq \gamma\leq n^{*}-1$, \[\Pr\{\text{rank}(T)<n^{*}-\gamma\}\leq u^{*} \left(1-2^{-r^{*}_{\text{max}}}\right) 2^{-\gamma+n^{*}-w^{*}r^{*}+(r^{*}-r^{*}_{\text{min}})(u^{*}-1)},\] where $u^{*}=\left\lceil{(n^{*}-\gamma)}/{r^{*}_{\text{min}}}\right\rceil$.\end{lemma}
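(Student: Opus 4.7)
The plan is to construct a set of $n^*-\gamma$ linearly independent columns of $T$ by a greedy, left-to-right sweep over the block-columns. Since $n^*_{u^*}\doteq\sum_{l=1}^{u^*}r^*_l \geq u^* r^*_{\min}\geq n^*-\gamma$ by the definition of $u^*$, it suffices to show that, with high probability, all $n^*_{u^*}$ columns of the first $u^*$ block-columns are linearly independent; this alone will certify $\text{rank}(T)\geq n^*-\gamma$.

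First I would control the failure probability for a single column. For a column $c$ in block-column $j'$, the entries of $c$ lying in block-rows $j',j'+1,\ldots,w^*$ come from the dense blocks $T_{j',j'},\ldots,T_{w^*,j'}$ and are therefore $r^*(w^*-j'+1)$ i.u.d.\ Bernoulli bits, while the entries of $c$ above block-row $j'$ are set adversarially, possibly as a function of all the dense bits. Let $\pi$ denote the projection onto the dense-row coordinates. If $W$ is the column span of the previously selected columns, then $c\in W$ forces $\pi(c)\in\pi(W)$; and since $\pi(c)$ is uniform on $\mathbb{F}_2^{r^*(w^*-j'+1)}$ and independent of the random bits used to define $\pi(W)$ (they come from the dense blocks of the earlier block-columns), one obtains $\Pr\{c\in W\}\leq 2^{\dim W - r^*(w^*-j'+1)}$. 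The adversary cannot beat this bound: at best it can align its free entries so that whenever $\pi(c)\in\pi(W)$, some $w\in W$ with $\pi(w)=\pi(c)$ lifts to $w=c$, which attains the projection-based bound but does not exceed it.

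Next I would assemble the union bound. Within block-column $j'$, adding columns one at a time grows $\dim W$ by at most one per column, so summing the geometric series yields a failure bound of $(2^{r^*_{j'}}-1)\cdot 2^{n^*_{j'-1}-r^*(w^*-j'+1)}$ for that block-column, where $n^*_{j'-1}=\sum_{l<j'}r^*_l$. Using $r^*_{j'}\leq r^*_{\max}\leq r^*$ and the monotonicity of $1-2^{-x}$, the prefactor is bounded by $2^{r^*}(1-2^{-r^*_{\max}})$. Summing over $j'=1,\ldots,u^*$, with the rewriting $n^*_{j'-1}-r^*(w^*-j'+1)+r^*=n^*_{j'-1}-r^*(w^*-j')$, the largest term is at $j'=u^*$, and its exponent can be re-expressed using $n^*-w^*r^*=-\sum_l(r^*-r^*_l)$ together with the inequality $u^*r^*_{\min}\geq n^*-\gamma$ (from the definition of $u^*$) to produce $-\gamma+n^*-w^*r^*+(r^*-r^*_{\min})(u^*-1)$. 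Bounding the full sum by $u^*$ times the last term then yields the stated expression.

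The main obstacle is the adversarial structure of the upper-triangular blocks: these entries are permitted to be arbitrary functions of the dense entries, so any argument that treats the whole matrix as ``random'' fails. The projection-to-dense-rows argument is what neutralizes this, translating an unconstrained adversary into a bound that depends only on $\dim W$ and the number of dense rows. A secondary subtlety is that once $\dim W$ rivals $r^*(w^*-j'+1)$, the per-column bound becomes vacuous; this is precisely why the greedy process is truncated at exactly $u^*$ block-columns rather than continued to $w^*$, and it is also the source of the $(r^*-r^*_{\min})(u^*-1)$ slack term, which accounts for non-uniform block widths $r^*_{l}$ when comparing $u^*r^*_{\min}$ with the true $n^*_{u^*}$ and $n^*-\gamma$.
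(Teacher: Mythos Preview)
Your projection argument is sound and the per-column estimate $\Pr\{c\in W\}\le 2^{\dim W-r^*(w^*-j'+1)}$ is correct. The gap is in the final accounting: the greedy left-to-right sweep does \emph{not} collapse to the exponent in the lemma, and the ``re-expression'' you describe cannot be carried out.

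Take $w^*=2$, $r^*=2$, $r^*_1=r^*_2=1$, $\gamma=0$, so $n^*=2$, $r^*_{\min}=r^*_{\max}=1$, $u^*=2$. Your union bound is
\[
2^{0-2\cdot 2}+2^{1-2\cdot 1}=\tfrac{1}{16}+\tfrac{1}{2}=\tfrac{9}{16},
\]
whereas the lemma claims $u^*(1-2^{-r^*_{\max}})2^{-\gamma+n^*-w^*r^*+(r^*-r^*_{\min})(u^*-1)}=2\cdot\tfrac12\cdot 2^{-1}=\tfrac12$. Your last-block exponent is $n^*_{u^*-1}-r^*(w^*-u^*)=1$, but the target exponent is $-1$. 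More generally, matching the two would require $n^*_{u^*-1}+r^*_{\min}(u^*-1)+r^*\le n^*-\gamma$, and since $n^*_{u^*-1}\ge (u^*-1)r^*_{\min}$ and $n^*-\gamma\le u^*r^*_{\min}$, this forces $(u^*-2)r^*_{\min}+r^*\le 0$, impossible once $u^*\ge 2$.

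The discrepancy is structural rather than arithmetical. Both your approach and the paper's reduce to a sum of the form $\sum_{j} c_j\cdot 2^{-(w^*-\tau(j))r^*}$ over the first $n^*-\gamma$ column positions, where $\tau(j)+1$ is the block containing $j$. Your greedy method gives weights $c_j=2^{j-1}$ (the bound on $\dim W$), which \emph{increase} in $j$; the paper instead restricts to $T'$, the first $n^*-\gamma$ columns, and union-bounds over nonzero $\boldsymbol v$ in the right kernel, grouped by the position $j$ of the \emph{first} nonzero entry of $\boldsymbol v$, which gives weights $c_j=2^{(n^*-\gamma)-j}$, \emph{decreasing} in $j$. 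Since the factor $2^{-(w^*-\tau(j))r^*}$ is non-decreasing in $j$ (later block-columns have fewer dense rows beneath them), pairing it with decreasing weights yields the strictly smaller sum, and it is this pairing that produces the exponent $-\gamma+n^*-w^*r^*+(r^*-r^*_{\min})(u^*-1)$ after the geometric sums are bounded.

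To recover the stated bound you need to switch from the column-by-column greedy certificate to the paper's kernel-vector union bound: restrict to $T'$ and estimate $\Pr\{T'\boldsymbol v=0\}\le 2^{-(w^*-\tau)r^*}$ using the $(w^*-\tau)r^*$ dense rows seen from the first nonzero coordinate of $\boldsymbol v$; the remaining sum over $j$ then telescopes block by block to the claimed expression.
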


\begin{proof}For any integer $0\leq \gamma\leq n^{*}-1$, let $T'$ be $T$ restricted to its first $n^{*}-\gamma$ columns. Since $T'$ is an $w^{*}r^{*}\times (n^{*}-\gamma)$ sub-matrix of $T$, $\Pr\{\text{rank}(T)<n^{*}-\gamma\}\leq \Pr\{\text{rank}(T')<n^{*}-\gamma\}$. Suppose that $\text{rank}(T')<n^{*}-\gamma$. Then there exists a nonzero column vector $\boldsymbol{v}$ of length $n^{*}-\gamma$ over $\mathbb{F}_2$ such that the column vector $T'\boldsymbol{v}$ of length $w^{*}r^{*}$ is an all-zero vector. For a given integer $1\leq j\leq n^{*}-\gamma$, suppose that the first non-zero entry of $\boldsymbol{v}$ is the $j\textsuperscript{th}$. There exists $2^{n^{*}-\gamma-j}$ such vectors. Let us define $r^{*}_0\doteq 0$ for convenience. Let $\tau$ be an integer satisfying $\sum_{0\leq i\leq \tau}r^{*}_i<j\leq \sum_{0\leq i\leq \tau+1}r^{*}_i$, and $\tau_{\text{max}}$ be an integer satisfying $\sum_{0\leq i\leq \tau_{\text{max}}}r^{*}_i<n^{*}-\gamma\leq\sum_{0\leq i\leq \tau_{\text{max}}+1}r^{*}_i$. By the definition, it follows that $0\leq \tau_{\text{max}}\leq \min\{w^{*},u^{*}-1\}$. It should not be hard to see that $\tau$ and $\tau_{\text{max}}$ are unique. For every $0\leq \tau\leq \tau_{\text{max}}$, define $s^{*}_{\tau}=\sum_{0\leq i\leq \tau}r^{*}_i$. The $j\textsuperscript{th}$ column of $T'$ has at least $(w^{*}-\tau)r^{*}$ i.u.d. Bernoulli entries, and hence the vector $T'\boldsymbol{v}$ has at least $(w^{*}-\tau)r^{*}$ i.u.d. Bernoulli entries. Thus, $T'\boldsymbol{v}$ is all-zero w.p. b.a.b. $2^{-\gamma+n^{*}-w^{*}r^{*}}\sum_{1\leq j\leq n^{*}-\gamma}2^{\tau r^{*}-j}$, noting that $\tau$ depends on $j$. We rewrite the sum as:
\begin{eqnarray*}
&& \hspace*{-1 cm} \sum_{0<j\leq s^{*}_1} 2^{-j}+\sum_{s^{*}_1<j\leq s^{*}_2}2^{r^{*}-j}+ \\
& & \cdots+\sum_{s^{*}_{\tau_{\text{max}}}<j\leq n^{*}-\gamma}2^{\tau_{\text{max}}r^{*}-j} = \\
& & \sum_{0<j\leq r^{*}_1}2^{-j}+2^{r^{*}-s^{*}_1}\sum_{0<j\leq r^{*}_2}2^{-j}+ \\
& & \cdots+2^{\tau_{\text{max}}r^{*}-s^{*}_{\tau_{\text{max}}}}\sum_{0<j\leq n^{*}-\gamma-s^{*}_{\tau_{\text{max}}}}2^{-j}\leq \\
& & \sum_{0<j\leq r^{*}_{\text{max}}}2^{-j} + 2^{r^{*}-s^{*}_1}\sum_{0<j\leq r^{*}_{\text{max}}}2^{-j}+ \\
& & \cdots+2^{\tau_{\text{max}}r^{*}-s^{*}_{\tau_{\text{max}}}}\sum_{0<j\leq r^{*}_{\text{max}}}2^{-j}= \\
& & \sum_{0<j\leq r^{*}_{\text{max}}}2^{-j}\sum_{0\leq \tau'\leq \tau_{\text{max}}}2^{\tau' r^{*}-s^{*}_{\tau'}}\leq \\
& & \sum_{0<j\leq r^{*}_{\text{max}}}2^{-j}\sum_{0\leq \tau'\leq \tau_{\text{max}}}2^{(r^{*}-r^{*}_{\text{min}})\tau'}= \\
& & (1-2^{-r^{*}_{\text{max}}})\sum_{0\leq \tau'\leq \tau_{\text{max}}}2^{(r^{*}-r^{*}_{\text{min}})\tau'}.
\end{eqnarray*} The series $\sum_{0\leq \tau'\leq \tau_{\text{max}}}2^{(r^{*}-r^{*}_{\text{min}})\tau'}$ converges from below to $(\tau_{\text{max}}+1)2^{(r^{*}-r^{*}_{\text{min}})\tau_{\text{max}}}$ if $r^{*}-r^{*}_{\text{min}}$ goes to infinity. Thus the following is always true: $(1-2^{-r^{*}_{\text{max}}})$ $\sum_{0\leq \tau'\leq \tau_{\text{max}}}2^{(r^{*}-r^{*}_{\text{min}})\tau'}$ $\leq$ $(\tau_{\text{max}}+1)(1-2^{-r^{*}_{\text{max}}})2^{(r^{*}-r^{*}_{\text{min}})\tau_{\text{max}}}$ $\leq u^{*}(1-2^{-r^{*}_{\text{max}}})2^{(r^{*}-r^{*}_{\text{min}})(u^{*}-1)}$. This proves the lemma.\end{proof}

\begin{lemma}\label{lem:HorizontalT} Let $T$ be an RBLT matrix with parameters $w^{*}$, $r^{*}$ and $\{r^{*}_{l}: 0\leq r^{*}\leq r^{*}_{l}\}_{1\leq {l}\leq w^{*}}$. Let $n^{*} = w^{*}r^{*}$. For every integer $0\leq \gamma\leq n^{*}-1$, \[\Pr\{\text{rank}(T)<n^{*}-\gamma\}\leq u^{*} \left(1-2^{-r^{*}}\right) 2^{-\gamma+n^{*}-w^{*}r^{*}_{\text{min}}+(r^{*}_{\text{min}}-r^{*})(u^{*}-1)},\] where $u^{*}=\left\lceil{(n^{*}-\gamma)}/{r^{*}}\right\rceil$.\end{lemma}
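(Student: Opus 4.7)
The plan is to reduce Lemma~\ref{lem:HorizontalT} to Lemma~\ref{lem:VerticalT} by a column-restriction followed by a transpose--reorder step. First, I would form the column sub-matrix $T'$ of $T$ obtained by keeping, within each block-column $j'$, only its first $r^{*}_{\text{min}}$ columns; this is well defined because $r^{*}_{\text{min}} \leq r^{*}_{j'}$ by the hypothesis $r^{*} \leq r^{*}_{l}$ of the present lemma. Then $T'$ is itself an RBLT matrix, of size $w^{*}r^{*} \times w^{*}r^{*}_{\text{min}}$, with uniform block-row size $r^{*}$ and uniform block-column size $r^{*}_{\text{min}}$; its dense $r^{*} \times r^{*}_{\text{min}}$ i.u.d.\ Bernoulli blocks occupy the on- and below-diagonal block positions, and its above-diagonal blocks remain arbitrary (possibly dependent on the dense ones). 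Since $T'$ is a column sub-matrix of $T$, $\text{rank}(T') \leq \text{rank}(T)$, so
\[\Pr\{\text{rank}(T) < n^{*}-\gamma\} \leq \Pr\{\text{rank}(T') < n^{*}-\gamma\}.\]

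Next, I would form $\tilde{T}$ by transposing $T'$ and then reversing the orderings of its $w^{*}$ block-rows and $w^{*}$ block-columns. A direct check shows that $\tilde{T}$ is again an RBLT matrix, now of size $w^{*}r^{*}_{\text{min}} \times w^{*}r^{*}$, with uniform block-row size $r^{*}_{\text{min}}$ and uniform block-column size $r^{*}$: a dense block of $T'$ at position $(i',j')$ with $j' \leq i'$ is sent by the transpose to $(j',i')$, and then by the two reversals to position $(w^{*}+1-j',\, w^{*}+1-i')$ of $\tilde{T}$, which lies on or below the block diagonal. The dependence pattern of the upper-triangular (non-dense) blocks is preserved throughout. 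Since rank is invariant under transposition and permutations, $\text{rank}(\tilde{T}) = \text{rank}(T')$.

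The matrix $\tilde{T}$ now fits the hypothesis of Lemma~\ref{lem:VerticalT} exactly, with the role of ``$r^{*}$'' in that lemma played by $r^{*}_{\text{min}}$ and with all of its variable block-column sizes equal to the uniform value $r^{*}$; the inequality ``$r^{*}_{l} \leq r^{*}$'' required there becomes $r^{*} \leq r^{*}_{\text{min}}$, which holds by hypothesis. The corresponding parameters evaluate to total column count $w^{*}r^{*} = n^{*}$, maximum and minimum block-column sizes both equal to $r^{*}$, and $\lceil (n^{*}-\gamma)/r^{*}\rceil = u^{*}$, matching the $u^{*}$ of the present lemma. Substituting these values into the bound of Lemma~\ref{lem:VerticalT} yields exactly the right-hand side of Lemma~\ref{lem:HorizontalT}.

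I do not expect any hard step beyond setting up this reduction; the only place needing a moment of care is the bookkeeping for the permutations---verifying that reversing both block orderings after transposition does send the dense (originally on- and below-diagonal) blocks of $T'$ to on- and below-diagonal positions of $\tilde{T}$, and that the dependence structure of the non-dense blocks is transported correctly. Once this is checked, the proof is a direct invocation of Lemma~\ref{lem:VerticalT}.
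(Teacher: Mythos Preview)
Your reduction is correct. Restricting each block-column of $T$ to its first $r^{*}_{\text{min}}$ columns yields an RBLT matrix $T'$ with uniform $r^{*}\times r^{*}_{\text{min}}$ blocks, and the transpose followed by reversal of both block orderings sends the dense blocks of $T'$ (at positions $j'\leq i'$) to the on- and below-diagonal positions of $\tilde{T}$, carrying the dependence structure of the non-dense blocks along unchanged. With the substitutions $r^{*}\leftarrow r^{*}_{\text{min}}$ and $r^{*}_{l}\leftarrow r^{*}$ for all $l$ in Lemma~\ref{lem:VerticalT}, the parameters become $n^{*}=w^{*}r^{*}$, $r^{*}_{\max}=r^{*}_{\min}=r^{*}$, and $u^{*}=\lceil(n^{*}-\gamma)/r^{*}\rceil$, and the resulting bound is exactly that of Lemma~\ref{lem:HorizontalT}; the inequality $\text{rank}(T)\geq\text{rank}(T')=\text{rank}(\tilde{T})$ then completes the argument.

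The paper, by contrast, does not invoke Lemma~\ref{lem:VerticalT} but re-runs its proof in transposed form: it restricts $T$ to its last $n^{*}-\gamma$ \emph{rows}, posits a nonzero left null vector $\boldsymbol{v}$ of this sub-matrix, and bounds $\Pr\{\boldsymbol{v}T'=0\}$ by counting, for the first nonzero coordinate $j$ of $\boldsymbol{v}$, how many i.u.d.\ entries row $j$ contains, via $s^{*}_{\tau}=\sum_{i\leq w^{*}-\tau}r^{*}_{i}\geq (w^{*}-\tau)r^{*}_{\text{min}}$. That last lower bound is exactly the step where the paper discards the same information your column-restriction throws away up front, so the two routes are equivalent in strength. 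Your reduction is cleaner and avoids repeating the geometric-sum computation; the paper's direct argument keeps the two lemmas self-contained at the cost of near-duplicate calculations.
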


\begin{proof}We start the proof by noting that $T$ has a smaller number of rows than columns, and the minimum number of rows and columns gives an upper bound on the rank of the matrix. Let $T'$ be $T$ restricted to its last $n^{*}-\gamma$ rows. For every $0\leq\tau\leq w^{*}$, define $s^{*}_{\tau}=\sum_{0\leq j\leq w^{*}-\tau}r^{*}_j$. Thus, $T'$ is of size ${(n^{*}-\gamma)\times s^{*}_0}$. Suppose that there exists a nonzero row vector $\boldsymbol{v}$ of length $n^{*}-\gamma$ whose entries are over $\mathbb{F}_2$, and its first nonzero entry is the $j\textsuperscript{th}$, and the row vector $\boldsymbol{v}T'$ is all-zero. There are $2^{n^{*}-\gamma-j}$ such vectors. Let $\tau$ be the largest integer smaller than $j/r^{*}$. The $j\textsuperscript{th}$ row of $T'$ has at least $s^{*}_{\tau}$ i.u.d. Bernoulli entries, and hence the vector $\boldsymbol{v}T'$ has at least $s^{*}_{\tau}$ i.u.d. Bernoulli entries. Thus, $\boldsymbol{v}T'$ is all-zero w.p. b.a.b. $2^{-\gamma+n^{*}}\sum_{1\leq j\leq n^{*}-\gamma}2^{-j-s^{*}_{\tau}}$. By definition, $s^{*}_{\tau}\geq (w^{*}-\tau)r^{*}_{\text{min}}$, and the preceding sum can thus be upper bounded as follows: $\sum_{1\leq j\leq n^{*}-\gamma}2^{-j-s^{*}_{\tau}}\leq$ $\sum_{1\leq j\leq n^{*}-\gamma}2^{-j-(w^{*}-\tau)r^{*}_{\text{min}}}$. The latter sum can be rewritten itself as:
\begin{eqnarray*}&& \hspace*{-1 cm} \sum_{0<j\leq r^{*}}2^{-j-w^{*}r^{*}_{\text{min}}}+\sum_{r^{*}<j\leq 2r^{*}}2^{-j-(w^{*}-1)r^{*}_{\text{min}}}+\\
& & \cdots +\sum_{(u^{*}-1)r^{*}<j\leq n^{*}-\gamma}2^{-j-(w^{*}-u^{*}+1)r^{*}_{\text{min}}} =  \\
& & 2^{-w^{*}r^{*}_{\text{min}}}\sum_{0<j\leq r^{*}}2^{-j}+2^{-(w^{*}-1)r^{*}_{\text{min}}-r^{*}}\sum_{0<j\leq r^{*}}2^{-j}+ \\
& & \cdots+2^{-(w^{*}-1)r^{*}_{\text{min}}-(u^{*}-1)r^{*}}\sum_{0<j\leq n^{*}-\gamma-(u^{*}-1)r^{*}}2^{-j} \leq \\
& & 2^{-w^{*}r^{*}_{\text{min}}}\sum_{0<j\leq r^{*}}2^{-j}\sum_{0\leq \tau'\leq u^{*}-1}2^{(r^{*}_{\text{min}}-r^{*})\tau'}= \\
& & (1-2^{-r^{*}})2^{-w^{*}r^{*}_{\text{min}}}\sum_{0\leq \tau'\leq u^{*}-1}2^{(r^{*}_{\text{min}}-r^{*})\tau'}.\end{eqnarray*} The last sum is bounded from above by $2^{(r^{*}_{\text{min}}-r^{*})(u^{*}-1)}u^{*}$, and this completes the proof.\end{proof}

Now, for every $1<i\leq L$ and $1\leq j\leq w-L+1$, we explain how to lower bound the number of dense packets in the first $j$ active partitions over the $i\textsuperscript{th}$ link. The lower bounding technique works in a recursive manner as follows:

For every $1\leq l\leq j$, suppose that the number of dense packets in the first $l$ active partitions over the $(i-1)\textsuperscript{th}$ link is lower bounded. Let $\hat{T}_{i}^{j}$ be the modified transfer matrix at the $i\textsuperscript{th}$ node, restricted to the successful packet transmissions within the first $j$ active partitions over the $i\textsuperscript{th}$ link (by the assumption, the number of such packets in each partition is bounded from below by $r$). Then, one can see that the matrix $\hat{T}_{i}^{j}$ includes a sub-matrix $\hat{T}'$ with a structure similar to that in Lemma~\ref{lem:VerticalT} or the one in Lemma~\ref{lem:HorizontalT}.\footnote{In the case of identical links, the modified transfer matrix at each node includes a sub-matrix similar to that in Lemma~\ref{lem:VerticalT}. However, in the case of non-identical links, depending on the traffic parameters, the modified transfer matrix at a node might include a sub-matrix similar to that in Lemma~\ref{lem:VerticalT} or the one in Lemma~\ref{lem:HorizontalT}.} This can be seen precisely by the following replacements in Lemma~\ref{lem:VerticalT} or Lemma~\ref{lem:HorizontalT}: (i) $w^{*}$ with $j$ (i.e., the number of underlying active partitions), (ii) $r^{*}$ with $r$ (i.e., the lower bound on the number of successful packet transmissions in each of the first $j$ active partitions pertaining to the $i\textsuperscript{th}$ link), and (iii) $r^{*}_l$, for every ${1\leq l\leq w^{*}}$, with the difference between the two lower bounds on the number of dense packets in the first $l$ and the first $l-1$ active partitions pertaining to the $(i-1)\textsuperscript{th}$ link (note that $r_1$ is equal to the lower bound on the number of dense packets in the first active partition).

The lower bounding process then proceeds as follows. Each successful packet in any of the first $j$ active partitions, say the $m\textsuperscript{th}$ active partition, for some $1\leq m\leq j$, pertaining to the $i\textsuperscript{th}$ link can be written as a linear combination of the dense packets in the $l\textsuperscript{th}$ active partition pertaining to the $(i-1)\textsuperscript{th}$ link, for all $1\leq l\leq m$, and perhaps some extra dense packets in the $(m+1)\textsuperscript{th}$ active partition pertaining to the $(i-1)\textsuperscript{th}$ link. Thus, for every $1\leq l\leq m$, each row of the sub-matrix $\hat{T}'_{i-1,l}$ (in the matrix $\hat{T}'$) indicates the labels of (some subset of)\footnote{It is worth noting that there might be a number of dense packets which contribute to the linear combination of some packet transmission, but are not included in our lower bounding analysis. The exclusion of such (dense) packets weakens the tightness of the results, but does not affect the correctness of the analysis.} the dense packets in the $l\textsuperscript{th}$ active partition pertaining to the $(i-1)\textsuperscript{th}$ link which contribute to the linear combination of one packet (from the set of the $r$ chosen successful packets) in the $m\textsuperscript{th}$ active partition pertaining to the $i\textsuperscript{th}$ link; and each column of $\hat{T}'_{i-1,l}$ indicates the labels of the successful packets to which one dense packet (from the set of the $r_l$ chosen dense packets) in the $l\textsuperscript{th}$ active partition pertaining to the $(i-1)\textsuperscript{th}$ link contributes. For any other $m<l\leq w$, the set of dense packets in the $l\textsuperscript{th}$ active partition pertaining to the $(i-1)\textsuperscript{th}$ link which contribute to the linear combination of one packet in the $m\textsuperscript{th}$ active partition pertaining to the $i\textsuperscript{th}$ link is not tractable in our analysis. Hence, the rows (or the columns) of each sub-matrix $\hat{T}'_{i-1,l}$, for such values of $l$ (i.e., $m<l\leq w$), might have independent or dependent entries (over $\mathbb{F}_2$) with respect to the entries of the rows (or the columns) of the sub-matrices in the set of $\{\hat{T}'_{i-1,l}\}_{1\leq l\leq m}$. Next, by applying the proper lemma, the rank of the modified transfer matrix at the $i\textsuperscript{th}$ node, and finally, by applying Lemma~\ref{lem:DensityTM}, the number of dense packets in the first $j$ active partitions over the $i\textsuperscript{th}$ link can be bounded from below. This completes the lower bounding process.

Note that, because of its recursive nature, the above technique lower bounds the number of dense packets in the first $j$ active partitions over the $i\textsuperscript{th}$ link as a function of the number of dense packets in the active partitions pertaining to the first link. Further, the packets over the first link are all globally dense (by the definition), and hence by using the recursion, the required results can be derived as follows.

Let $\mathcal{D}(Q_i^j)$ be the number of ``globally dense'' packets in the first $j$ active partitions over the $i\textsuperscript{th}$ link. By the definition, $\mathcal{D}(Q_i^j)$ is bounded from below by the number of ``dense'' packets in the first $j$ active partitions over the $i\textsuperscript{th}$ link. Let $\mathcal{D}_{p}(Q_i^j)$ be a (probabilistic) lower bound on the number of dense packets in the first $j$ active partitions over the $i\textsuperscript{th}$ link, and of course a lower bound on $\mathcal{D}(Q_i^j)$,\footnote{$\mathcal{D}_{p}(Q_i^j)$ is a ``probabilistic'' lower bound on $\mathcal{D}(Q_i^j)$ and hence the subscript ``$p$.''} such that: if $\mathcal{D}(Q_{s}^{\tau})\geq\mathcal{D}_{p}(Q_{s}^{\tau})$, for every $1\leq s\leq i$ and $1\leq \tau\leq j$, except $(s,\tau)=(i,j)$, then the inequality $\mathcal{D}(Q_i^j)\geq\mathcal{D}_{p}(Q_i^j)$ fails w.p. b.a.b. $\dot{\epsilon}/w_T$. Let $\hat{r}_{ij}$ be defined in a recursive fashion as the largest integer satisfying \begin{equation}\label{eq:rij}\hat{r}_{ij}\leq \mathcal{D}_{p}(Q_i^j)-\sum_{1\leq \tau<j}\hat{r}_{i\tau}.\end{equation}

Note that, at each step of our lower bounding process, the number of dense packets in a collection of active partitions, but not the number of dense packets in one individual active partition, is lower bounded. Further, the difference between the two lower bounds corresponding to the two collections of the first $j$ and the first $j-1$ active partitions does not lower bound the number of dense packets in the $j\textsuperscript{th}$ active partition. However, due to the recursion, we need to choose a certain number of dense packets at each step of the process (and ignore the rest, if any), and study the density of the packets in the next partition, at the next step of the process, with respect to the dense packets chosen till the previous step. We, thus, construct a collection of dense packets at the $i\textsuperscript{th}$ node as follows: starting with an empty collection (at the step zero), for every $1\leq j\leq w-L+1$, at the $j\textsuperscript{th}$ step, we expose the packets in the active partitions over the $i\textsuperscript{th}$ link in order, one by one. We add a packet to the collection whenever the packet is dense (with respect to the current collection), until revealing $\hat{r}_{ij}$ new dense packets. The size of such a collection lower bounds the number of dense packets at the $i\textsuperscript{th}$ node; and in order to study the structure of the modified transfer matrix at this node, we fix the packets in the subsets of the underlying collection (each subset pertaining to one of the collection steps) and ignore the rest of packets.

The set of packets over the first link are globally dense, and hence, for every $1\leq j\leq w-L+1$, $\mathcal{D}(Q_1^j)\geq rj$ (by the assumption, each partition includes more than or equal to $r$ packets). Further, for every $1<i\leq L$ and $1\leq j\leq w-L+1$, $\mathcal{D}(Q_i^j)$ is bounded from below as follows.

\begin{lemma}\label{lem:Omegaii}Consider applying a dense code over a line network of $L$ links with deterministic regular transmissions and Bernoulli losses with parameters $\{p_i\}$. Let $w_T$ and $r$ be defined as in~\eqref{eq:wT} and~\eqref{eq:r}, respectively. For every $1< i\leq L$, the inequality \[\mathcal{D}(Q_i^1)\geq r-\log(w_T/\epsilon)- 1\] fails w.p. b.a.b. $i\dot{\epsilon}/w_T$.\end{lemma}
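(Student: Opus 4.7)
The plan is to prove the bound by induction on $i$, transporting globally dense packets through one link at a time inside a single active partition and paying a bounded rank-lemma failure of $\dot{\epsilon}/w_T$ at each step. The blanket Chernoff event $\{\varphi_{ij}\geq r\}$ over all active partitions (failure $\dot{\epsilon}$) has already been conditioned on earlier in Section~\ref{sec:BernoulliLossRegularTraffic}, so the induction only has to account for the per-step rank failures.

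For the base case $i=2$, all successful packets over link 1 are globally dense by definition, so $\hat{Q}_{1}^{1}$ is a dense matrix with at least $r$ rows. The modified transfer matrix $\hat{T}_{2}$ restricted to the first active partition on link 2 has $r$ rows and, after selecting any $r$ columns indexed by dense packets at node 1, becomes a single RBLT block with parameters $w^{*}=1$, $r^{*}=r$, $r_{1}^{*}=r$. Applying Lemma~\ref{lem:VerticalT} with $\gamma=1+\log(w_{T}/\epsilon)$ yields $\text{rank}(\hat{T}_{2})\geq r-1-\log(w_{T}/\epsilon)$ with failure probability at most $\dot{\epsilon}/w_{T}$, and Lemma~\ref{lem:DensityTM} transfers this into the claimed lower bound on $\mathcal{D}(Q_{2}^{1})$.

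For the inductive step, assume the bound at node $i-1$ with cumulative failure $(i-1)\dot{\epsilon}/w_{T}$, and restrict $\hat{T}_{i}$ to its $r$ rows and to $\hat{r}:=r-1-\log(w_{T}/\epsilon)$ columns indexed by chosen dense packets at node $i-1$. This is again a single RBLT block with $w^{*}=1$, $r^{*}=r$, $r_{1}^{*}=\hat{r}$. The key observation is that Lemma~\ref{lem:VerticalT} applied with $\gamma=0$ now produces a failure probability of at most $(1-2^{-\hat{r}})\,2^{\hat{r}-r}\leq \dot{\epsilon}/w_{T}$, so $\text{rank}(\hat{T}_{i})\geq \hat{r}$ and hence $\mathcal{D}(Q_i^1)\geq r-1-\log(w_T/\epsilon)$. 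The bound is therefore \emph{uniform} in $i$: the one-step drop in the available column count at node $i$ is precisely compensated by the smaller $\gamma$ slack required. A union bound over the inductive rank failures (together with the base-case failure) gives the asserted cumulative bound $i\dot{\epsilon}/w_{T}$.

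The main obstacle I expect is not arithmetic but structural: verifying that the chosen $r\times \hat{r}$ restriction of $\hat{T}_{i}$ genuinely satisfies the RBLT hypothesis of Lemma~\ref{lem:VerticalT}, i.e.\ that the entries of its shaded block are i.u.d.\ Bernoulli. This should follow from the definition of $\hat{T}_{i}$—each successful packet leaving node $i$ is a uniform linear combination of received packets at node $i-1$, and the columns corresponding to the chosen dense packets contribute exactly the shaded block, while non-chosen columns and later-partition contributions are absorbed into the unrestricted ``blank'' portions permitted by the RBLT structure. Once this structural claim is in place, the single-active-partition nature of the lemma makes the recursion telescope cleanly into the stated $i$-independent density gap.
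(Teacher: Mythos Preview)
Your proposal is correct and follows essentially the same approach as the paper: both argue by induction on $i$, apply Lemma~\ref{lem:VerticalT} with $w^{*}=1$ to a single RBLT block, choose $\gamma=1+\log(w_T/\epsilon)$ at the base step and $\gamma=0$ thereafter so that $\hat{r}_1-\gamma=r-\log(w_T/\epsilon)-1$ uniformly, and then union-bound over the $i$ links. The only slip is notational: the transfer matrix you should restrict is $\hat{T}_{i-1}$ (at the starting node of link $i$), not $\hat{T}_i$, since $Q_i=\hat{T}_{i-1}\hat{Q}_{i-1}$; the argument is otherwise identical to the paper's.
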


\begin{proof}Fix $1<i\leq L$. Let $\hat{T}_{i-1}$ be the modified transfer matrix at the starting node of the $i\textsuperscript{th}$ link. Let $\hat{T}^{1}_{i-1}$ be $\hat{T}_{i-1}$ restricted to the packets in the first active partition over the $i\textsuperscript{th}$ link. For every $1<s<i$, suppose $\mathcal{D}(Q_s^1)\geq \mathcal{D}_p(Q_s^1)$, where $\mathcal{D}_p(Q_s^1)=r-\log(w_T/\epsilon)-1$, and $\mathcal{D}(Q_1^1)=\mathcal{D}_p(Q_1^1)=r$. Then, by replacing $w^{*},r^{*}$ and $\{r^{*}_l\}_{1\leq l\leq w^{*}}$ with $1,r$ and $\hat{r}_1$, respectively, in Lemma~\ref{lem:VerticalT}, where $\hat{r}_1\doteq \hat{r}_{i-1,1}$,\footnote{We often drop the subscript $i$ in the notation $r_{ij}$ when there is no danger of confusion.} one can see that $\hat{T}^{1}_{i-1}$ includes an $r\times \hat{r}_1$ dense sub-matrix. Thus by applying Lemma~\ref{lem:VerticalT}, for every $0\leq \gamma\leq \hat{r}_1-1$, $\Pr\{\text{rank}(\hat{T}^{1}_{i-1})<\hat{r}_1-\gamma\}\leq u (1-2^{-\hat{r}_1}) 2^{-\gamma+\hat{r}_1-r+(r-\hat{r}_1)(u-1)}$, where $u=\lceil(\hat{r}_1-\gamma)/\hat{r}_1\rceil$. Thus, $\Pr\{\text{rank}(\hat{T}^{1}_{i-1})<(1-2^{-\hat{r}_1})2^{-\gamma+\hat{r}_1-r}$, since $u=1$. Taking $\gamma=\log(w_T/\epsilon)+\hat{r}_1-r+1$, it follows that $\Pr\{\text{rank}(\hat{T}^{1}_{i-1})<\hat{r}_1-\gamma\}\leq \dot{\epsilon}/w_T$. By Lemma~\ref{lem:DensityTM}, $\mathcal{D}(Q_i^1)<r-\log(w_T/\epsilon)-1$ w.p. b.a.b. $\dot{\epsilon}/w_T$. Thus, $\mathcal{D}_p(Q_i^1)=r-\log(w_T/\epsilon)-1$. Taking a union bound over the first $i$ links, $\mathcal{D}(Q_i^1)<r-\log(w_T/\epsilon)-1$ w.p. b.a.b. $i\dot{\epsilon}/w_T$.\end{proof}

\begin{lemma}\label{lem:Omegaij}Consider a scenario similar to the one in Lemma~\ref{lem:Omegaii}. Let $p$ be defined as in~\eqref{eq:p}. For every $1<i\leq L$ and $1<j\leq w-L+1$, the inequality \[\mathcal{D}(Q_i^j)\geq rj-\mathcal{L}_{ij}\] fails w.p. b.a.b. $ij\dot{\epsilon}/w_T$, so long as \begin{equation}\label{eq:Temp1}w\log\frac{w_T}{\epsilon}=o(p N_T)\end{equation} where $\mathcal{L}_{ij} = j(1+o(1))(\log(w_T/\epsilon)+1)+\log((j(1+o(1))+1)/\epsilon)+\log w_T+1$, and the $o(1)$ term is $(\log(w_T/\epsilon)+1)/r$.\end{lemma}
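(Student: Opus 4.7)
The plan is to prove this by induction on the link index $i$, following the same scheme used for Lemma~\ref{lem:Omegaii} but now tracking $j$ active partitions simultaneously. Fix $1 < i \leq L$ and $1 < j \leq w-L+1$, and assume inductively that $\mathcal{D}(Q_s^\tau) \geq \mathcal{D}_p(Q_s^\tau)$ holds for every $1 \leq s < i$ and $1 \leq \tau \leq w-L+1$, with $\mathcal{D}_p(Q_1^\tau) = r\tau$ and $\mathcal{D}_p(Q_s^\tau) = r\tau - \mathcal{L}_{s\tau}$ for $s>1$. The increments $\hat{r}_{i-1,\tau}$ are then determined recursively by~\eqref{eq:rij}, and by construction $\sum_{1\leq \tau\leq j}\hat{r}_{i-1,\tau} \leq \mathcal{D}_p(Q_{i-1}^j)$.

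First I would form the modified transfer matrix $\hat{T}^j_{i-1}$ at the $(i-1)\textsuperscript{th}$ node, restricted to $r$ chosen successful packets in each of the first $j$ active partitions over the $i\textsuperscript{th}$ link, and identify inside it an RBLT sub-matrix $\hat{T}'$ with parameters $w^{*} = j$, $r^{*} = r$, and $r^{*}_l = \hat{r}_{i-1,l}$ for $1 \leq l \leq j$. The block lower-triangularity follows from the partitioning argument introduced just before Lemma~\ref{lem:VerticalT}: every dense packet in partition $l$ over link $i-1$ arrives before any of the $r$ departures in partition $l'\geq l$ over link $i$, so the on-or-below-diagonal blocks have i.u.d.\ Bernoulli entries. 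Since each partition on link $i-1$ contributes at most $r$ new dense packets, we have $r^{*}_l \leq r^{*}$, so Lemma~\ref{lem:VerticalT} is the applicable tool.

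Next, pick $\gamma = \sum_l \hat{r}_{i-1,l} - (rj - \mathcal{L}_{ij})$, so that a rank lower bound of $n^{*}-\gamma$ from Lemma~\ref{lem:VerticalT} combined with Lemma~\ref{lem:DensityTM} would yield $\mathcal{D}(Q_i^j) \geq rj - \mathcal{L}_{ij}$. From the induction hypothesis applied to successive partitions and the definition in~\eqref{eq:rij}, each increment satisfies $r^{*}_l \geq r - (\log(w_T/\epsilon)+1)$; in particular $r - r^{*}_{\min} \leq \log(w_T/\epsilon)+1$ and $r^{*}_{\max} \leq r$. Consequently $u^{*} = \lceil (rj-\mathcal{L}_{ij})/r^{*}_{\min} \rceil$ satisfies $u^{*} \leq j(1+o(1))+1$, where the $o(1)$ factor is $(\log(w_T/\epsilon)+1)/r$; here the hypothesis $w\log(w_T/\epsilon) = o(pN_T)$ of~\eqref{eq:Temp1} precisely makes this $o(1)$ term vanish as $N_T \to \infty$, because $r \sim pN_T/w$, and it also guarantees $r^{*}_{\min} > 0$. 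Substituting these estimates into the bound of Lemma~\ref{lem:VerticalT} gives a failure probability of at most $u^{*}\, 2^{-\mathcal{L}_{ij} + (\log(w_T/\epsilon)+1)(u^{*}-1)}$, and demanding this be at most $\dot{\epsilon}/w_T$ forces $\mathcal{L}_{ij} \geq (u^{*}-1)(\log(w_T/\epsilon)+1) + \log(u^{*} w_T/\dot{\epsilon})$, which unpacks to exactly the stated expression for $\mathcal{L}_{ij}$.

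Finally, Lemma~\ref{lem:DensityTM} converts the rank estimate on $\hat{T}'$ into the density bound $\mathcal{D}(Q_i^j) \geq rj - \mathcal{L}_{ij}$ at step $(i,j)$ of the recursion, with failure probability b.a.b.\ $\dot{\epsilon}/w_T$; a union bound over the $ij$ preceding recursion steps that were invoked yields the stated $ij\dot{\epsilon}/w_T$ bound. The main obstacle is the bookkeeping: one must verify that the recursive definition~\eqref{eq:rij} does not allow the deficit $r - \hat{r}_{i-1,l}$ to accumulate disproportionately in some partition, so that $r^{*}_{\min}$ really is bounded below by $r - (\log(w_T/\epsilon)+1)$ uniformly in $l$, and then carefully match the resulting expression to the three summands in $\mathcal{L}_{ij}$ (the $j(1+o(1))(\log(w_T/\epsilon)+1)$ term from $(r-r^{*}_{\min})(u^{*}-1)$, the $\log((j(1+o(1))+1)/\epsilon)$ from $\log u^{*} - \log\dot{\epsilon}$ modulo the factor of two inside $\dot{\epsilon}$, and the $\log w_T + 1$ from the remaining constants).
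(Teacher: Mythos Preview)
Your proposal is correct and takes essentially the same approach as the paper: assume the density bounds at all earlier pairs $(s,\tau)$, exhibit an RBLT sub-matrix of $\hat{T}^{j}_{i-1}$ with parameters $w^{*}=j$, $r^{*}=r$, $r^{*}_l=\hat{r}_{i-1,l}$, apply Lemma~\ref{lem:VerticalT}, bound $u^{*}\leq j(1+o(1))+1$ via condition~\eqref{eq:Temp1}, choose $\gamma$ so the failure probability drops to $\dot{\epsilon}/w_T$, invoke Lemma~\ref{lem:DensityTM}, and union-bound over the $ij$ preceding steps. The one place the paper differs slightly is that it asserts (without detailed justification) that $\hat{r}_{\text{min}}=\hat{r}_1=r-\log(w_T/\epsilon)-1$ ``by our method of collecting the dense packets,'' whereas you phrase it as a uniform lower bound on every $r^{*}_l$; this is exactly the bookkeeping point you flag as the remaining obstacle, and the paper glosses over it in the same way.
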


\begin{proof}Fix $1<i\leq L$. For every $1<s\leq i$ and $1<\tau\leq j$, except $(s,\tau)=(i,j)$, suppose $\mathcal{D}(Q_s^{\tau})\geq \mathcal{D}_p(Q_s^{\tau})$, where $\mathcal{D}_p(Q_s^{\tau})=r\tau-\tau(1+o(1))(\log(w_T/\epsilon)+1)-\log((\tau(1+o(1))+1)/\epsilon)-\log w_T-1$, and the $o(1)$ term is $(\log(w_T/\epsilon)+1)/r$, and $\mathcal{D}(Q_s^{1})\geq \mathcal{D}_p(Q_s^{1})$, where $\mathcal{D}_p(Q_s^{1})=r-\log(w_T/\epsilon)-1$. Let $\hat{r}_{\tau}=\hat{r}_{i-1,\tau}$, for every $1\leq \tau\leq j$, $\hat{r}_{\text{min}}=\min_{\tau}\hat{r}_{\tau}$, and $\hat{r}_{\text{max}}=\max_{\tau}\hat{r}_{\tau}$. Let and $n=\mathcal{D}_p(Q_{i-1}^{j})=\sum_{1\leq\tau\leq j}\hat{r}_{\tau}$. Let us define $\hat{T}_{i-1}$ as in the proof of Lemma~\ref{lem:Omegaii}. Let $\hat{T}^{j}_{i-1}$ be $\hat{T}_{i-1}$ restricted to the packets in the first $j$ active partitions over the $i\textsuperscript{th}$ link. Then, by replacing $w^{*},r^{*}$ and $\{r^{*}_l\}_{1\leq l\leq w^{*}}$ with $j,r$ and $\{\hat{r}_{\tau}\}_{1\leq\tau\leq j}$, respectively, in Lemma~\ref{lem:VerticalT}, one can see that $\hat{T}^{j}_{i-1}$ includes an $rj\times n$ sub-matrix with a structure similar to the matrix $T$ as in Lemma~\ref{lem:VerticalT}. Thus by applying Lemma~\ref{lem:VerticalT}, for every $0\leq \gamma\leq n-1$, $\Pr\{\text{rank}(\hat{T}^{j}_{i-1})<n-\gamma\}\leq u(1-2^{-\hat{r}_{\text{max}}})2^{-\gamma+n-rj+(r-\hat{r}_{\text{min}})(u-1)}$, where $u=\lceil (n-\gamma)/\hat{r}_{\text{min}}\rceil$. It is not difficult to see that, by our method of collecting the dense packets, it follows that $\hat{r}_{\text{min}}=\hat{r}_1$. Further by applying Lemma~\ref{lem:Omegaii}, $\hat{r}_1=\mathcal{D}_p(Q_{i-1}^1)=r-\log(w_T/\epsilon)-1$. Thus, $u\leq \lceil rj/\hat{r}_{1}\rceil=\lceil (1+o(1))j\rceil\leq (1+o(1))j+1$, since $\hat{r}_1=r(1-o(1))$, given $\log(w_T/\epsilon)=o(r)$, where the $o(1)$ term is $(\log(w_T/\epsilon)+1)/r$. Since $r\sim \varphi=p N_T/w_T$, the latter condition can be written as $w\log(w_T/\epsilon)=o(p N_T)$. Taking $\gamma=n-rj+(1+o(1))j(\log(w_T/\epsilon)+1)+\log(((1+o(1))j+1)/\epsilon)+\log w_T+1$, it follows that $\Pr\{\text{rank}(\hat{T}^{j}_{i-1})<n-\gamma\}\leq \dot{\epsilon}/w_T$. Now, by applying Lemma~\ref{lem:DensityTM}, $\mathcal{D}(Q_i^j)<n-\gamma$ w.p. b.a.b. $\dot{\epsilon}/w_T$. Thus, $\mathcal{D}_p(Q_i^j)=n-\gamma$. Taking a union bound over the first $j$ active partitions of the first $i$ links, $\mathcal{D}(Q_i^j)<rj-(1+o(1))j(\log(w_T/\epsilon)+1)-\log(((1+o(1))j+1)/\epsilon)-\log w_T-1$ w.p. b.a.b. $ij\dot{\epsilon}/w_T$, where the $o(1)$ term is $(\log(w_T/\epsilon)+1)/r$. This completes the proof.\end{proof}

The result of Lemma~\ref{lem:Omegaij} lower bounds the number of dense packets at the sink node, $\mathcal{D}(Q_L)$, as follows.

\begin{lemma}\label{lem:DensityBound}Consider a scenario similar to the one in Lemma~\ref{lem:Omegaii}. Let $p$ and $\varphi$ be defined as in~\eqref{eq:p} and~\eqref{eq:varphi}, respectively. The inequality \begin{eqnarray}\label{eq:DensityBound}
 \lefteqn{\mathcal{D}(Q_L) \geq w_T\varphi/L -w_T\varphi/L \sqrt{(1/\dot{\varphi})\log(w_T/\dot{\epsilon})} } \nonumber\\
   && \hspace{0.75cm}{ }-(w_T/L)\log(w_T/\dot{\epsilon})-(w_T/L\varphi)\log^{2}(w_T/\epsilon) \nonumber \\
   && \hspace{0.75cm}{ }-(w_T/L\varphi)\log(w_T/\epsilon) -\log(w_T/{\epsilon}) \nonumber\\
   && \hspace{0.75cm}{ }-\log(w_T/L) - 1
\end{eqnarray} fails w.p. b.a.b. $\epsilon$, so long as \[w\log\frac{w_T}{\epsilon}=o(p N_T)\] where $w\sim\left(p N_T L^2/\log(p N_T L/\epsilon)\right)^{\frac{1}{3}}$.\end{lemma}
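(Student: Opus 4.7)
The plan is a direct application of Lemma~\ref{lem:Omegaij} at the maximal admissible indices $i = L$ and $j = w-L+1$, followed by algebraic simplification to cast the resulting expression in the form of~\eqref{eq:DensityBound}. Since restricting the decoding matrix at the sink to only those rows coming from packets in the $w-L+1$ active partitions over the $L\textsuperscript{th}$ link only removes rows, $\mathcal{D}(Q_L) \geq \mathcal{D}(Q_L^{w-L+1})$, so every lower bound on the latter carries over.

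First I would invoke Lemma~\ref{lem:Omegaij} with $(i,j)=(L, w-L+1)$ to obtain $\mathcal{D}(Q_L^{w-L+1}) \geq r(w-L+1) - \mathcal{L}_{L,w-L+1}$, with the failure probability bounded above by $L(w-L+1)\dot{\epsilon}/w_T$. By~\eqref{eq:wT} this is exactly $\dot{\epsilon}$. Combining with the earlier Chernoff event that every one of the $w_T$ active partitions contains at least $r$ successful packets, which fails w.p.\ b.a.b.\ $\dot{\epsilon}$, a union bound gives total failure probability at most $\epsilon$. The hypothesis~\eqref{eq:Temp1} of Lemma~\ref{lem:Omegaij} is inherited verbatim.

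Next I would rewrite $r(w-L+1) - \mathcal{L}_{L,w-L+1}$ into the form of~\eqref{eq:DensityBound}. Substituting $r = (1-\gamma^{*})\varphi$ and $w-L+1 = w_T/L$ splits $r(w-L+1)$ into $w_T\varphi/L - \gamma^{*}\,w_T\varphi/L$, and~\eqref{eq:gammastar} identifies the second piece with the $w_T\varphi/L\cdot\sqrt{(1/\dot{\varphi})\log(w_T/\dot{\epsilon})}$ term. Expanding $\mathcal{L}_{L,w_T/L}$ using $o(1) = (\log(w_T/\epsilon)+1)/r$ together with $r \sim \varphi$, the principal summand $(w_T/L)(1+o(1))(\log(w_T/\epsilon)+1)$ generates the $(w_T/L)\log(w_T/\dot{\epsilon})$ contribution together with the two $w_T/(L\varphi)$-scaled terms $(w_T/L\varphi)\log^{2}(w_T/\epsilon)$ and $(w_T/L\varphi)\log(w_T/\epsilon)$; meanwhile, the residual $\log((j(1+o(1))+1)/\epsilon) + \log w_T + 1$ contributes $\log(w_T/\epsilon) + \log(w_T/L) + 1$ after bounding $j(1+o(1))+1$ crudely by a constant multiple of $w_T/L$.

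The prescribed scaling $w \sim (pN_T L^2/\log(pN_T L/\epsilon))^{1/3}$ is the choice that balances the two dominant $w$-dependent losses in the resulting bound: the partitioning deficit $pN_T(L-1)/w$ hidden inside $w_T\varphi/L = pN_T\bigl(1-(L-1)/w\bigr)$ against the Chernoff deviation $w_T\varphi/L\cdot\gamma^{*}\sim\sqrt{w\cdot pN_T\log(w_T/\epsilon)}$; equating these two gives exactly $w^{3}\sim pN_T L^{2}/\log(w_T/\epsilon)$. A sanity check should then confirm that the remaining $w$-dependent terms $(w_T/L)\log$ and $(w_T/L\varphi)\log^{2}\sim w^{2}\log^{2}/(pN_T)$ are both dominated by $pN_T L/w$ under this choice, which reduces to $L\log(w_T/\epsilon)=o(pN_T)$, a condition weaker than~\eqref{eq:Temp1}. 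The main technical obstacle is the bookkeeping of the $o(1)$ factors through the expansion of $\mathcal{L}_{ij}$, ensuring that every summand on the right-hand side of~\eqref{eq:DensityBound} appears exactly once with the correct coefficient and that the approximations $r\sim\varphi$ and $j(1+o(1))+1\lesssim w_T/L$ do not hide any term that would dominate those already displayed.
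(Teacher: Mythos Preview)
Your proposal is correct and follows essentially the same approach as the paper: apply Lemma~\ref{lem:Omegaij} at $(i,j)=(L,w-L+1)$, use $\mathcal{D}(Q_L)\geq\mathcal{D}(Q_L^{w_T/L})$, combine the two $\dot{\epsilon}$ failure events by a union bound, expand $r=(1-\gamma^{*})\varphi$ and $\mathcal{L}_{L,w_T/L}$ term by term, and finally balance $O(pN_TL/w)$ against $O(\sqrt{pN_Tw\log(w_T/\epsilon)})$ to obtain the stated choice of $w$. The paper's proof is merely more compressed, writing $v=w_T/L$ and passing directly to the dominant-term expression~\eqref{eq:Temp3} rather than spelling out each summand of~\eqref{eq:DensityBound} separately, but the logic is identical.
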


\begin{proof}For the ease of exposition, let $v=w_T/L$. Lemma~\ref{lem:Omegaij} gives a lower bound on $\mathcal{D}(Q_L^{v})$. Thus, we can write: $\mathcal{D}(Q_L)\geq\mathcal{D}(Q_L^{v})\geq rv-v(1+o(1))\left(\log(w_T/\epsilon)+1\right)-\log((v (1+o(1)))/\epsilon)-\log w_T-1$, where the $o(1)$ term is $\left(\log(w_T/\epsilon)\right)/r$. This bound fails w.p. b.a.b. $\dot{\epsilon}$, given the assumption that the number of packets in each active partition is larger than or equal to $r$. Since this assumption fails w.p. b.a.b. $\dot{\epsilon}$, the lower bound on $\mathcal{D}(Q_L)$ fails w.p. b.a.b. $\epsilon$. Further, $r=(1-o(1))\varphi$, where the $o(1)$ term is $\sqrt{(1/\dot{\varphi})\ln(w_T/\dot{\epsilon})}$. Thus, $\mathcal{D}(Q_L)\geq \varphi v -o(\varphi v)-v\log(w_T/\epsilon)-v-o(v\log(w_T/\epsilon))-o(v)-\log(v/\epsilon)-\log w_T-1$ fails w.p. b.a.b. ${\epsilon}$, where $o(\varphi v)\sim O(\varphi v \sqrt{(1/{\varphi})\log(w_T/{\epsilon})})$, and $o(v)\sim (v/\varphi)\log(w_T/\epsilon)$. By considering the dominant terms, the right-hand side of the last inequality can be written as \begin{equation}\label{eq:Temp3}\varphi v- O(v \sqrt{\varphi\log(w_T/\epsilon)})-O(v\log(w_T/\epsilon)).\end{equation} We now replace $\varphi$ and $v$ by $p N_T/w$ and $w$ ($v\sim w$), respectively, and rewrite~\eqref{eq:Temp3} as \begin{eqnarray}\label{eq:Temp2} && \hspace*{-1 cm} p N_T - O(p N_T L/w) - \nonumber \\
& & O(\sqrt{p N_T w\log(w L/\epsilon)})-O(w\log(w L/\epsilon)),\end{eqnarray} by using the fact that $w_T$ is $O(w L)$. We select $w$ to be \[\sqrt[3]{\frac{p N_T L^2}{\log(p N_T L/\epsilon)}}\] in order to maximize~\eqref{eq:Temp2} subject to condition~\eqref{eq:Temp1}. This choice of $w$ ensures that each $O(.)$ term in~\eqref{eq:Temp2} is $o(p N_T)$, and hence the coding scheme is capacity-achieving.\end{proof}

Let $n_T$ be equal to the right-hand side of inequality \eqref{eq:DensityBound}. Thus, $Q_L$ fails to include an $n_T\times k$ dense sub-matrix w.p. b.a.b. $\epsilon$.

\begin{lemma}\label{lem:DenseRankProb}Let $Q$ be an $n\times k$ ($k\leq n$) dense matrix over $\mathbb{F}_2$. Then, $\Pr\{\text{rank}(Q)<k\}\leq 2^{-(n-k)}$.\end{lemma}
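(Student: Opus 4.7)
The plan is to bound the probability of rank deficiency by the standard column-dependency / union-bound argument used for uniformly random matrices over $\mathbb{F}_2$.

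First I would observe that $\text{rank}(Q)<k$ if and only if the $k$ columns of $Q$ are linearly dependent over $\mathbb{F}_2$, which is equivalent to the existence of a nonzero vector $\boldsymbol{v}\in\mathbb{F}_2^k$ such that $Q\boldsymbol{v}=\boldsymbol{0}$. Hence
\[
\Pr\{\text{rank}(Q)<k\} \;=\; \Pr\!\left\{\exists\,\boldsymbol{v}\in\mathbb{F}_2^k\setminus\{\boldsymbol{0}\}:\; Q\boldsymbol{v}=\boldsymbol{0}\right\}.
\]

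Next, I would fix an arbitrary nonzero $\boldsymbol{v}\in\mathbb{F}_2^k$ and compute $\Pr\{Q\boldsymbol{v}=\boldsymbol{0}\}$ exactly. Since $Q$ is dense, its entries are i.u.d.\ Bernoulli over $\mathbb{F}_2$, so the rows of $Q$ are mutually independent and each row is uniform on $\mathbb{F}_2^k$. For a fixed nonzero $\boldsymbol{v}$, the inner product of a uniform row with $\boldsymbol{v}$ is a uniform Bernoulli random variable (this is the key elementary fact: for any nonzero $\boldsymbol{v}$, exactly half of the vectors in $\mathbb{F}_2^k$ are orthogonal to it). Combining this with row independence gives $\Pr\{Q\boldsymbol{v}=\boldsymbol{0}\}=2^{-n}$, independent of which nonzero $\boldsymbol{v}$ was chosen.

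Finally, I would apply a union bound over the $2^k-1$ nonzero vectors $\boldsymbol{v}\in\mathbb{F}_2^k$, obtaining
\[
\Pr\{\text{rank}(Q)<k\} \;\leq\; (2^k-1)\,2^{-n} \;\leq\; 2^{-(n-k)},
\]
which is exactly the stated bound. There is no real obstacle here: the argument uses only the independence and uniformity of entries of a dense matrix (from the definition introduced just before Lemma~\ref{lem:DensityTM}) and a one-line union bound, so the entire proof amounts to verifying the two elementary facts above.
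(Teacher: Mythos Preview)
Your argument is correct and is the standard union-bound proof of this classical fact about uniformly random $\mathbb{F}_2$-matrices. The paper itself does not give an explicit proof of this lemma but simply cites~\cite{HBJ:2011}; your write-up is almost certainly equivalent to what appears there, so there is nothing to add.
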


\begin{proof}The proof can be found in~\cite{HBJ:2011}.\end{proof}

By applying Lemma~\ref{lem:DenseRankProb}, $\Pr\{\text{rank}(Q_L)<k\}$ is b.a.b. $\epsilon$, so long as $k\leq n_T-\log(1/\epsilon)$. By replacing $\epsilon$ with $\dot{\epsilon}$, it follows that the sink node fails to recover all the message vectors w.p. b.a.b. $\epsilon$, so long as $k\leq n_T - \log(1/\epsilon)-1$. In the asymptotic setting, as $N_T$ goes to infinity, $n_T$ can be written as \begin{dmath*}p N_T-(1+o(1))(p N_T L/w+\sqrt{p N_T w\log(wL/\epsilon)}+w\log(wL/\epsilon)).\end{dmath*} We rewrite the last inequality as \begin{eqnarray*}\lefteqn k&\leq& p N_T-(1+o(1))(p N_T L/w \\ && + \sqrt{p N_T w\log(wL/\epsilon)}+w\log(wL/\epsilon))-\log(1/\epsilon)-1.\end{eqnarray*} Let $k_{\text{max}}$ be the largest integer $k$ satisfying this inequality. Thus, $k_{\text{max}}\sim p N_T$, and by replacing $N_T$ with $k/p$ ($N_T\sim k/p$), the following result is immediate.

\begin{theorem}\label{thm:DenseCodesRegularBernoulliActualNon-IdenticalGeneral}The coding delay of a dense code over a line network of $L$ links with deterministic regular transmissions and Bernoulli losses with parameters $\{p_i\}$ is larger than \[\frac{1}{p}\left(k+(1+o(1))\left(\frac{kL}{w}+\sqrt{k\left(w\log\frac{wL}{\epsilon}\right)}+w\log\frac{wL}{\epsilon}\right)\right)\] w.p. b.a.b. $\epsilon$, so long as \[w\log\frac{wL}{\epsilon}=o(k)\] where $w\sim \left(k L^2/\log(k L/\epsilon)\right)^{\frac{1}{3}}$, $p\doteq\min_{1\leq i\leq L}{p_i}$, and the $o(1)$ term goes to $0$ as $k$ goes to infinity.\footnote{In the rest of the theorems, the $o(1)$ term is defined similarly.}\end{theorem}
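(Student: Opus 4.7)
The plan is to chain together Lemma~\ref{lem:DensityBound} and Lemma~\ref{lem:DenseRankProb} with a careful asymptotic inversion to pass from a lower bound on the density of $Q_L$ as a function of $N_T$ to an upper bound on $N_T$ as a function of $k$. Let $n_T$ denote the right-hand side of \eqref{eq:DensityBound}. Lemma~\ref{lem:DensityBound} states that, under the scaling $w\sim(pN_TL^2/\log(pN_TL/\epsilon))^{1/3}$ and the condition $w\log(w_T/\epsilon)=o(pN_T)$, the decoding matrix $Q_L$ contains an $n_T\times k$ dense sub-matrix with failure probability bounded above by $\dot{\epsilon}$ (after a trivial rescaling $\epsilon\leftarrow\dot{\epsilon}$ that only changes constants inside the logarithms).

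Conditional on the existence of such a dense sub-matrix, Lemma~\ref{lem:DenseRankProb} bounds the probability that its rank is strictly less than $k$ by $2^{-(n_T-k)}$, provided $k\leq n_T$. Requiring this second failure probability to also be at most $\dot{\epsilon}$ gives the condition $k\leq n_T-\log(1/\epsilon)-1$. A union bound over the two failure events then yields that the sink recovers all $k$ message vectors with overall failure probability at most $\epsilon$, as soon as $N_T$ is large enough to meet this inequality.

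It remains to invert the inequality $k\leq n_T-\log(1/\epsilon)-1$ for $N_T$. Substituting the dominant-term expression for $n_T$ established in the proof of Lemma~\ref{lem:DensityBound}, namely
\[
n_T \;=\; pN_T-(1+o(1))\!\left(\tfrac{pN_TL}{w}+\sqrt{pN_Tw\log\tfrac{wL}{\epsilon}}+w\log\tfrac{wL}{\epsilon}\right),
\]
rearranging, and dividing by $p$, one obtains an upper bound on $N_T$ in terms of $pN_T$. The final step is to observe that, in the asymptotic regime, the leading-order term forces $N_T\sim k/p$, so that $pN_T$ may be replaced by $k$ inside the lower-order correction terms at the cost of absorbing the replacement error into the $(1+o(1))$ factor. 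This substitution produces exactly the bound in the theorem statement, and the choice $w\sim(kL^2/\log(kL/\epsilon))^{1/3}$ follows from the analogous choice for $w$ in Lemma~\ref{lem:DensityBound} under the same substitution.

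The main obstacle I expect is bookkeeping the various $o(1)$ terms during the asymptotic inversion: one must verify that replacing $pN_T$ by $k$ in the $\sqrt{\cdot}$ and $pN_TL/w$ terms, and that replacing $w_T=L(w-L+1)$ by $wL$ inside the logarithm, both contribute only multiplicative $(1+o(1))$ errors under the stated condition $w\log(wL/\epsilon)=o(k)$. Everything else is essentially a union bound plus direct substitution into already-proven lemmas.
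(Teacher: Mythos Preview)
Your proposal is correct and follows essentially the same route as the paper: apply Lemma~\ref{lem:DensityBound} (with $\epsilon$ replaced by $\dot{\epsilon}$) to get an $n_T\times k$ dense sub-matrix of $Q_L$, apply Lemma~\ref{lem:DenseRankProb} to force $\text{rank}(Q_L)\geq k$ when $k\leq n_T-\log(1/\dot{\epsilon})$, union bound the two failures, and then invert asymptotically by writing $n_T=pN_T-(1+o(1))(\cdots)$ and substituting $pN_T\sim k$ in the correction terms. The bookkeeping concerns you flag (replacing $w_T$ by $wL$ in the logs and $pN_T$ by $k$ in the sub-leading terms) are exactly the only things the paper's argument has to absorb into the $(1+o(1))$ factor, so there is nothing missing.
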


We now study the average coding delay of dense codes over the traffics with deterministic regular transmissions and Bernoulli losses. In this case, the deviation of the number of packets per partition should not be taken into account. Thus, by replacing $r$ with $\varphi$ in the analysis of the coding delay, the following result can be shown.

\begin{theorem}\label{thm:DenseCodesRegularBernoulliAverageNon-IdenticalGeneral}The average coding delay of a dense code over a network similar to Theorem~\ref{thm:DenseCodesRegularBernoulliActualNon-IdenticalGeneral} is larger than \[\frac{1}{p}\left(k +(1+o(1))\left(\frac{kL}{w}+w\log\frac{wL}{\epsilon}\right)\right)\] w.p. b.a.b. $\epsilon$, so long as \[w\log\frac{wL}{\epsilon}=o(k)\] where $w\sim \left({kL/\log(kL/\epsilon)}\right)^{\frac{1}{2}}$.\end{theorem}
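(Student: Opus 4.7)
The plan is to mirror the proof of Theorem~\ref{thm:DenseCodesRegularBernoulliActualNon-IdenticalGeneral} while eliminating the Chernoff-based deviation step, since the ``average coding delay'' is defined to be random only with respect to the code choice and not the traffic realization. Concretely, in the coding-delay proof the per-active-partition packet count $\varphi_{ij}$ is replaced by the high-probability lower bound $r=(1-\gamma^{*})\varphi$ with $\gamma^{*}\sim\sqrt{(1/\dot{\varphi})\ln(w_T/\dot{\epsilon})}$, which contributes the term $\sqrt{k w\log(wL/\epsilon)}$ to the final bound. For the average coding delay we may instead work with the expected count $\varphi=pN_T/w$ directly, so $\gamma^{*}=0$ and that square-root term disappears.

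I would then retrace the same sequence of steps as before: split $(0,N_T]$ into $w$ partitions, identify the $w_T=L(w-L+1)$ active ones, and run the recursive RBLT argument using Lemma~\ref{lem:VerticalT} (and Lemma~\ref{lem:HorizontalT} where the shape reverses) exactly as in Lemma~\ref{lem:Omegaii} and Lemma~\ref{lem:Omegaij}, but with every occurrence of $r$ replaced by $\varphi$. This yields the analog of Lemma~\ref{lem:DensityBound}:
\[
\mathcal{D}(Q_L)\geq w_T\varphi/L - (w_T/L)\log(w_T/\epsilon) - (w_T/L\varphi)\log^{2}(w_T/\epsilon) - \text{lower-order terms},
\]
holding with failure probability b.a.b.~$\epsilon$ over the code, and requiring the same side-condition $w\log(w_T/\epsilon)=o(pN_T)$ to ensure $\log(w_T/\epsilon)=o(\varphi)$ (which keeps the $o(1)$ factor in the $\hat{r}_1/r$ ratio well-behaved). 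Applying Lemma~\ref{lem:DenseRankProb} with a union bound over an extra $\dot{\epsilon}$ gives decodability whenever $k\leq \mathcal{D}(Q_L)-\log(1/\epsilon)-1$.

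The last step is to re-optimize $w$. Without the Chernoff term the dominant non-$pN_T$ contributions reduce from three ($pN_TL/w$, $\sqrt{pN_Tw\log(wL/\epsilon)}$, $w\log(wL/\epsilon)$) to two ($pN_TL/w$ and $w\log(wL/\epsilon)$), so balancing them gives $w^{2}\sim kL/\log(kL/\epsilon)$, i.e.\ $w\sim(kL/\log(kL/\epsilon))^{1/2}$, in place of the previous cube-root scaling. Substituting $N_T\sim k/p$ yields the asserted bound $\tfrac{1}{p}(k+(1+o(1))(kL/w+w\log(wL/\epsilon)))$.

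The main subtlety I anticipate is conceptual rather than computational: carefully justifying that treating $\varphi_{ij}$ as the deterministic value $\varphi$ is consistent with the definition of average coding delay in Section~II-C, which concerns the code-randomness event ``average (over traffic) coding delay exceeds $N$''. Since the per-partition arrival counts feed into the structure of the modified transfer matrices $\hat{T}_i$ whose ranks govern density, one must argue that this substitution amounts to integrating out the traffic before applying Lemmas~\ref{lem:VerticalT}--\ref{lem:DensityTM}, rather than after. Once this decoupling is made precise, the remainder is a direct line-by-line rerun of the proof of Theorem~\ref{thm:DenseCodesRegularBernoulliActualNon-IdenticalGeneral}.
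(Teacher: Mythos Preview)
Your proposal is correct and matches the paper's own proof essentially line for line: the paper likewise says the argument ``follows the same line as that of Theorem~\ref{thm:DenseCodesRegularBernoulliActualNon-IdenticalGeneral}, except that $r$ needs to be replaced with $\varphi$,'' notes that the $O(\sqrt{pN_Tw\log(wL/\epsilon)})$ term then disappears, and re-optimizes to $w\sim\sqrt{pN_TL/\log(pN_TL/\epsilon)}$. Your closing remark about the conceptual justification for the $r\to\varphi$ substitution is in fact more careful than the paper, which simply asserts that ``the deviation of the number of packets per partition should not be taken into account'' without further comment.
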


\begin{proof}\renewcommand{\IEEEQED}{}The proof follows the same line as that of Theorem~\ref{thm:DenseCodesRegularBernoulliActualNon-IdenticalGeneral}, except that $r$ needs to be replaced with $\varphi$ in the proof of Lemma~\ref{lem:DensityBound}. Thus, the $O(v\sqrt{\varphi\log(w_T/\epsilon)})$ term in~\eqref{eq:Temp3} and the $O(\sqrt{p N_T w\log(w L/\epsilon)})$ term in~\eqref{eq:Temp2} disappear. Then, it should not be hard to see that the choice of $w$ needs to maximize \begin{dmath}\label{eq:Temp5} p N_T - O(p N_T L/w) - O(w\log(w L/\epsilon)),\end{dmath} instead of~\eqref{eq:Temp2}, subject to condition~\eqref{eq:Temp1}. This can be done by selecting $w$ to be \[\hspace{2.7 in}\sqrt{\frac{p N_T L}{\log(p N_T L/\epsilon)}}.\hspace{2.7 in}\IEEEQEDopen\]\end{proof}

The choice of $w$ in Theorem~\ref{thm:DenseCodesRegularBernoulliAverageNon-IdenticalGeneral} is much larger than the one in Theorem~\ref{thm:DenseCodesRegularBernoulliActualNon-IdenticalGeneral}. This is because, in this case, there is no gap between the lower bound on the number of packet transmissions in each partition and its expectation; and hence, the partitions do not need to be sufficiently long. 


It is worth noting that the preceding results might not provide a very clear picture of how the coding delay or the average coding delay are related to the traffic parameters of the links other than the one(s) with the minimum traffic parameter. However, by applying our analysis technique, while taking into consideration the actual values (and the ordering) of the traffic parameters of the links, new upper bounds (with more details) on the coding delay and the average coding delay can be derived. To be more specific, in such an analysis, for every $1\leq i< L$, depending on whether the $i\textsuperscript{th}$ or the $(i+1)\textsuperscript{th}$ link has a larger traffic parameter, either Lemma~\ref{lem:VerticalT} or Lemma~\ref{lem:HorizontalT} can be used to lower bound the rank of the modified transfer matrix at the $i\textsuperscript{th}$ node, respectively. The rest of the analysis, however, remains the same. For example, the coding delay and the average coding delay of dense codes for the special case with unequal traffic parameters, where no two parameters are equal, can be upper bounded as follows. In particular, the upper bounds, in this case, demonstrate the dependence of the coding delay or the average coding delay on the minimum of the (absolute value of the) difference between the traffic parameters of any two consecutive links in the network.

Let us assume $p_1>p_2>\cdots> p_L$, without loss of generality. Let $p\doteq \min_{1\leq i\leq L} p_i$, $\gamma_e\doteq \min_{1<i\leq L}\gamma_{e_i}$, and $\gamma_{e_i}\doteq |p_i-p_{i-1}|$. Let $r_i\doteq (1-\gamma^{*}_i)\varphi_i$, where $\varphi_i=p_iN_T/w$ and $\gamma^{*}_i\sim\sqrt{(1/\dot{\varphi_i})\log(w_T/\dot{\epsilon})}$. For every $1\leq i\leq L$ and $1\leq j\leq w-L+1$, let $\varphi_{ij}$ be defined as before (i.e., $\varphi_{ij}$ is the number of successful packets in the $j\textsuperscript{th}$ active partition pertaining to the $i\textsuperscript{th}$ link). For all $i,j$, suppose that $\varphi_{ij}$ is larger than or equal to $r_i$, i.e., there exist a sufficiently large number of successful packet transmissions in each partition over each link. (This assumption fails if, for some $1\leq i\leq L$, the number of packets in some active partition over the $i\textsuperscript{th}$ link is less than $r_i$. Hence, the failure occurs w.p. b.a.b. $\dot{\epsilon}$.)

Since all the packet transmissions over the first link are globally dense, for every $1\leq j\leq w-L+1$, $\mathcal{D}(Q_1^j)\geq r_1 j$. Further, by applying Lemma~\ref{lem:HorizontalT}, it can be shown that, for every $1<i\leq L$ and $1\leq j\leq w-L+1$, the inequality $\mathcal{D}(Q_i^j)\geq r_i j$ fails w.p. b.a.b. $ij\dot{\epsilon}/w_T$, so long as \begin{equation}\label{eq:UnequalParameters} {w}\log\frac{w_T}{\epsilon}=o\left(\min\left\{\frac{\gamma_e}{p},1\right\}\cdot p N_T\right).\end{equation}

Let $p$, $\varphi$, $\gamma^{*}$ and $r$ denote $p_L$, $\varphi_L$, $\gamma^{*}_L$ and $r_L$, respectively. Thus, the inequality $\mathcal{D}(Q_L)\geq (1-\gamma^{*})\varphi w_T/L$ fails w.p. b.a.b. ${\epsilon}$. By replacing $\varphi$ with $pN_T/w$, the right-hand side of the last inequality can be written as: \begin{equation}\label{eq:Temp4} p N_T - O(p N_T L/w) -O(\sqrt{p N_T w \log(w L/\epsilon)}).\end{equation} The rest of the analysis is similar to that of Theorem~\ref{thm:DenseCodesRegularBernoulliActualNon-IdenticalGeneral}, except that~\eqref{eq:Temp4} excludes the last term in~\eqref{eq:Temp2}, and the choice of $w$ needs to satisfy condition~\eqref{eq:UnequalParameters}, instead of condition~\eqref{eq:Temp1}.

\begin{theorem}\label{thm:DenseCodesRegularBernoulliActualNon-Identical}Consider a sequence of unequal parameters $\{p_i\}_{1\leq i\leq L}$. The coding delay of a dense code over a line network of $L$ links with deterministic regular transmissions and Bernoulli losses with parameters $\{p_i\}$ is larger than \[\frac{1}{p}\left(k+(1+o(1))\left(\frac{kL}{w}+\sqrt{k\left(w\log\frac{wL}{\epsilon}\right)}\right)\right)\] w.p. b.a.b. $\epsilon$, so long as \[{w}\log\frac{wL}{\epsilon}=o\left(\min\left\{\frac{\gamma_e}{p},1\right\}\cdot k\right)\] where $w\sim\left(k L^2/\log(k L/\epsilon)\right)^{\frac{1}{3}}$, $p \doteq \min_{1\leq i\leq L}p_i$, $\gamma_{e}\doteq \min_{1< i\leq L} \gamma_{e_i}$, and $\gamma_{e_i} \doteq |p_i-p_{i-1}|$.\end{theorem}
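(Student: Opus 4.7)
The plan is to follow the same recursive framework used in the proof of Theorem~\ref{thm:DenseCodesRegularBernoulliActualNon-IdenticalGeneral} (the time-partitioning, the Chernoff concentration of packet counts, the globally-dense packets over the first link, Lemma~\ref{lem:DensityTM}, and the eventual application of Lemma~\ref{lem:DenseRankProb}), but to replace Lemma~\ref{lem:VerticalT} with Lemma~\ref{lem:HorizontalT} at every recursion step. The assumption $p_1>p_2>\cdots>p_L$ means that, going downstream, each link produces strictly fewer successful packets per partition on average than the link above it; consequently, in the relevant sub-matrix $\hat{T}'$ of the modified transfer matrix $\hat{T}_i^j$ the dense blocks are \emph{wider} than they are tall, which is exactly the geometry covered by Lemma~\ref{lem:HorizontalT}.

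First, partition $(0,N_T]$ into $w$ subintervals of length $N_T/w$, identify the $w_T=L(w-L+1)$ active partitions, and apply a Chernoff bound link by link to obtain that, with probability at least $1-\dot\epsilon$, every active partition of the $i$th link contains at least $r_i=(1-\gamma_i^{*})\varphi_i$ successful packets, where $\varphi_i=p_iN_T/w$ and $\gamma_i^{*}\sim\sqrt{(1/\dot\varphi_i)\log(w_T/\dot\epsilon)}$. Next, run the recursion: the packets over the first link are globally dense by definition, so $\mathcal{D}(Q_1^j)\geq r_1 j$; and for every $1<i\leq L$ and $1\leq j\leq w-L+1$, apply Lemma~\ref{lem:HorizontalT} with $w^{*}=j$, $r^{*}=r_i$, and $\{r_l^{*}\}_{1\leq l\leq j}=\{\hat{r}_{i-1,l}\}_{1\leq l\leq j}$ (the incremental dense counts chosen at the $(i-1)$st node according to~\eqref{eq:rij}), followed by Lemma~\ref{lem:DensityTM}, to conclude that $\mathcal{D}(Q_i^j)\geq r_i j$ fails w.p.\ b.a.b.\ $ij\dot\epsilon/w_T$ provided~\eqref{eq:UnequalParameters} holds. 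A union bound over the $w_T$ active partitions then yields $\mathcal{D}(Q_L)\geq r_L(w-L+1)$ w.p.\ b.a.b.\ $\epsilon$.

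Substituting $r_L=(1-\gamma_L^{*})\varphi_L$, $\varphi_L=pN_T/w$ (with $p=p_L$), and using $w_T/L\sim w$, the lower bound on $\mathcal{D}(Q_L)$ simplifies to expression~\eqref{eq:Temp4}, namely
\[pN_T-O\!\left(pN_TL/w\right)-O\!\left(\sqrt{pN_T\,w\log(wL/\epsilon)}\right).\]
The term $O(w\log(wL/\epsilon))$ that appeared in~\eqref{eq:Temp2} is absent here because Lemma~\ref{lem:HorizontalT} does not generate it: the outgoing link being the narrow one eliminates the corresponding exponent contribution. Finally, apply Lemma~\ref{lem:DenseRankProb} as in the general case to convert this density bound into a decoding-success bound with additive loss $\log(1/\epsilon)+1$, solve $k\leq n_T-\log(1/\epsilon)-1$ for the decoding time, and note that $N_T\sim k/p$. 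Choosing $w\sim(kL^2/\log(kL/\epsilon))^{1/3}$ maximizes the bound subject to~\eqref{eq:UnequalParameters} and gives the stated form.

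The main obstacle is making precise, at each recursion step, the horizontal geometry of $\hat{T}_i^j$: one must verify that the chosen dense-packet counts at the $(i-1)$st node satisfy $\hat{r}_{i-1,l}\geq r_i$ for all $l$ (so that Lemma~\ref{lem:HorizontalT}, rather than Lemma~\ref{lem:VerticalT}, is the correct tool), and that the exponent $(r_{\min}^{*}-r^{*})(u^{*}-1)$ appearing in Lemma~\ref{lem:HorizontalT} remains $o(r_ij)$ uniformly in $i,j$. Tracking this across all $L$ links is what forces the bound $w\log(w_T/\epsilon)=o(\min\{\gamma_e/p,1\}\cdot pN_T)$, because the per-link gap $r_{i-1}-r_i$ is controlled by $\gamma_{e_i}\geq\gamma_e$; all other parts of the argument are essentially bookkeeping that mirrors the proof of Theorem~\ref{thm:DenseCodesRegularBernoulliActualNon-IdenticalGeneral}.
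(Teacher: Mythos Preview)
Your proposal is correct and follows essentially the same approach as the paper: assume $p_1>\cdots>p_L$ without loss of generality, apply link-specific Chernoff bounds to get $r_i=(1-\gamma_i^{*})\varphi_i$, replace Lemma~\ref{lem:VerticalT} by Lemma~\ref{lem:HorizontalT} in the recursion to obtain $\mathcal{D}(Q_i^j)\geq r_i j$ under condition~\eqref{eq:UnequalParameters}, arrive at~\eqref{eq:Temp4} in place of~\eqref{eq:Temp2}, and finish as in Theorem~\ref{thm:DenseCodesRegularBernoulliActualNon-IdenticalGeneral}. Your identification of the key verification---that $\hat{r}_{i-1,l}\geq r_i$ (driven by $\gamma_e$) is what makes Lemma~\ref{lem:HorizontalT} applicable and simultaneously forces~\eqref{eq:UnequalParameters}---is exactly the point the paper leaves implicit.
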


In the case of the average coding delay, the analysis follows the same line as that of Theorem~\ref{thm:DenseCodesRegularBernoulliAverageNon-IdenticalGeneral}, except that the choice of $w$ needs to maximize \begin{equation}\label{eq:Temp6} pN_T - O(p N_T L/w)\end{equation} subject to condition~\eqref{eq:UnequalParameters}, instead of~\eqref{eq:Temp5} subject to condition~\eqref{eq:Temp1}.

\begin{theorem}\label{thm:DenseCodesRegularBernoulliAverageNon-Identical}The average coding delay of a dense code over a network similar to Theorem~\ref{thm:DenseCodesRegularBernoulliActualNon-Identical} is larger than \[\frac{1}{p}\left(k +(1+o(1))\left(\frac{k L}{w}\right)\right)\] w.p. b.a.b. $\epsilon$, so long as \[{w}\log\frac{wL}{\epsilon}=o\left(\min\left\{\frac{\gamma_e}{p},1\right\}\cdot k\right),\] i.e., $w\sim k/(f(k)\log(k L/\epsilon))$, and $f(k)$ goes to infinity, as $k$ goes to infinity, such that $f(k)=o\left(k/(L\log(kL/\epsilon))\right)$.\end{theorem}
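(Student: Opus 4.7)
The plan is to combine the modifications used to prove Theorem~\ref{thm:DenseCodesRegularBernoulliAverageNon-IdenticalGeneral} (passing from coding delay to average coding delay by ignoring within-partition deviations) with the unequal-parameter analysis developed just before Theorem~\ref{thm:DenseCodesRegularBernoulliActualNon-Identical} (which replaces Lemma~\ref{lem:VerticalT} by Lemma~\ref{lem:HorizontalT} on those links where the downstream parameter is smaller than the upstream one, and thereby yields the stronger condition~\eqref{eq:UnequalParameters} in place of~\eqref{eq:Temp1}). Concretely, I would rerun the chain of Lemmas~\ref{lem:Omegaii}--\ref{lem:DensityBound}, but with $r$ in each inequality replaced by its expectation $\varphi$ (since for the \emph{average} coding delay the Chernoff step that costs the $\gamma^{*}$ deviation is not needed), and with the lower bounds $\mathcal{D}(Q_i^j)\geq r_i j$ established from Lemma~\ref{lem:HorizontalT} under the ordering $p_1>p_2>\cdots>p_L$.

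First I would observe that, under the unequal-parameter ordering and with $r_i$ replaced by $\varphi_i=p_iN_T/w$, the recursive lower-bounding argument yields $\mathcal{D}(Q_L)\geq \varphi\, w_T/L$ with failure probability b.a.b.\ $\epsilon$, provided condition~\eqref{eq:UnequalParameters} holds. Substituting $\varphi=pN_T/w$ and $w_T\sim wL$, the dominant part of the bound on the number of recoverable message packets becomes exactly expression~\eqref{eq:Temp6}, namely $pN_T-O(pN_T L/w)$; both the $O(\sqrt{pN_T\, w\log(wL/\epsilon)})$ term from~\eqref{eq:Temp4} (eliminated by averaging, as in Theorem~\ref{thm:DenseCodesRegularBernoulliAverageNon-IdenticalGeneral}) and the $O(w\log(wL/\epsilon))$ term from~\eqref{eq:Temp2} (eliminated by the unequal-parameter argument, as in Theorem~\ref{thm:DenseCodesRegularBernoulliActualNon-Identical}) are now absent.

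Next I would optimize $w$ to maximize~\eqref{eq:Temp6} subject to~\eqref{eq:UnequalParameters}. Since the only remaining penalty $O(pN_T L/w)$ decreases monotonically in $w$, the optimum is to push $w$ as large as the constraint permits. The constraint $w\log(w_T/\epsilon)=o(\min\{\gamma_e/p,1\}\cdot pN_T)$ forces $w$ to be strictly smaller than the right-hand side divided by $\log(wL/\epsilon)$; parameterizing this ``strictly smaller'' by an auxiliary function $f(k)\to\infty$ with $f(k)=o(k/(L\log(kL/\epsilon)))$ (using $N_T\sim k/p$), I obtain $w\sim k/(f(k)\log(kL/\epsilon))$, which in turn makes the penalty term $kL/w=f(k)L\log(kL/\epsilon)=o(k)$ as required. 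Substituting this $w$ into the inverted bound $N_T\geq (k+(1+o(1))kL/w)/p$ yields the stated expression.

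The main obstacle is not technical but bookkeeping: one must verify that the argument in the proof of Lemma~\ref{lem:Omegaij}, when adapted to the unequal-parameter setting and with $r\to\varphi$, still gives a geometric-series tail whose union bound over the $w_T$ active partitions remains b.a.b.\ $\epsilon$, and that the ``o(1)'' terms controlled by $(\log(w_T/\epsilon))/\varphi$ still vanish under the chosen $w$. Both checks reduce to the observation that~\eqref{eq:UnequalParameters} is precisely the condition needed to make $\hat{r}_{\min}/\varphi\to 1$ along the recursion, so once $w$ is chosen as above the error exponents stay summable; the final substitution $N_T\sim k/p$ then delivers the theorem, with the qualification that the optimal $w$ is only implicitly defined through the freely chosen slow-growth function $f(k)$.
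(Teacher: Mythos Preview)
Your proposal is correct and follows essentially the same route as the paper: replace $r$ by $\varphi$ (as in Theorem~\ref{thm:DenseCodesRegularBernoulliAverageNon-IdenticalGeneral}) inside the unequal-parameter analysis that precedes Theorem~\ref{thm:DenseCodesRegularBernoulliActualNon-Identical}, so that both the $O(\sqrt{pN_Tw\log(wL/\epsilon)})$ and the $O(w\log(wL/\epsilon))$ terms drop out, leaving only~\eqref{eq:Temp6} to be maximized subject to~\eqref{eq:UnequalParameters}, and then take $w$ as large as the constraint allows via the auxiliary slow-growth function $f(k)$. The paper states exactly this just before the theorem and gives no further details, so your sketch is in fact more explicit than the paper's own treatment.
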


\subsection{Chunked Codes}
In a chunked coding scheme, the set of $k$ message vectors at the source node is divided into $q$ disjoint subsets, called \emph{chunks}, each of size $\alpha=k/q$. The source node, at each transmission time, chooses a chunk independently at random, and transmits a packet by randomly linearly combining the message vectors belonging to the underlying chunk.\footnote{The ``random'' scheduling of the chunks and the ``random'' coding within the chunks have been shown to perform effectively when the feedback information is not available at the network nodes \cite{HBJ:2011}. However, for cases with feedback, more efficient scheduling policies and coding schemes have been proposed in the literature (e.g., see \cite{HBJ2:2012}).} Each non-source non-sink node, at the time of each transmission, chooses a chunk independently at random, and transmits a packet by randomly linearly combining its previously received packets pertaining to the underlying chunk. The sink node can decode a chunk, so long as it receives an innovative collection of packets pertaining to the underlying chunk of a size equal to the size of the chunk.

\subsubsection{Capacity-Achieving Scenarios}\label{subsubsec:CCCACH}
In a CC, at each transmission time, a chunk is chosen w.p. $1/q$, and a packet transmission over the $i\textsuperscript{th}$ link is successful w.p. $p_i$. Thus the probability that a given packet transmission over the $i\textsuperscript{th}$ link is successful and pertains to a given chunk is $p_i/q$. Thus by replacing $p_i$ with $p_i/q$ in the analysis of dense codes in Section~\ref{subsec:DC}, the coding delay and the average coding delay of CC in a capacity-achieving scenario will be upper bounded.

The results of dense codes are indeed a special case of those of CC with one chunk of size $k$. It is, however, worth noting that, due to the change in the parameters, the number of partitions $w$ needs to satisfy a new condition: $wq\log\frac{w_T q}{\epsilon}=o(p N_T)$ or $wq\log\frac{w_T q}{\epsilon}=o\left(\min\left\{\frac{\gamma_e}{p},1\right\} \cdot p N_T\right)$, instead of condition~\eqref{eq:Temp1} or~\eqref{eq:UnequalParameters}, in the proofs of Theorems~\ref{thm:CapAchCodDelGeneral} and~\ref{thm:CapAchAveCodDelGeneral}, or those of Theorems~\ref{thm:CapAchCodDelSpecial} and~\ref{thm:CapAchAveCodDelSpecial}, respectively. Further, by replacing $w$ with its optimal choice in the new version of~\eqref{eq:Temp2},~\eqref{eq:Temp5},~\eqref{eq:Temp4} and~\eqref{eq:Temp6}, each $O(.)$ term needs to be $o(pN_T/q)$ in order to ensure that CC are capacity-achieving in the underlying case. Such a condition lower bounds the size of chunks ($\alpha$) by a function super-logarithmic in the message size ($k$).

\begin{theorem}\label{thm:CapAchCodDelGeneral}The coding delay of a CC with $q$ chunks over a line network of $L$ links with deterministic regular transmissions and Bernoulli losses with parameters $\{p_i\}$ is larger than \begin{dmath*}\frac{1}{p}\left(k+(1+o(1))\left(\frac{k L}{w}+\sqrt{k\left(wq\log\frac{w q L}{\epsilon}\right)}+wq\log\frac{w q L}{\epsilon}\right)\right)\end{dmath*} w.p. b.a.b. $\epsilon$, so long as \[q=o({k}/({L\log(kL/\epsilon)})),\] and \[wq\log\frac{wLq}{\epsilon}=o(k)\] where $w\sim\left(kL^2/(q\log(kL/\epsilon))\right)^{\frac{1}{3}}$, and $p\doteq \min_{1\leq i\leq L}p_i$.\end{theorem}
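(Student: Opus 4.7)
The plan is to reduce the statement to Theorem~\ref{thm:DenseCodesRegularBernoulliActualNon-IdenticalGeneral} (the dense-code coding-delay bound) applied chunk-by-chunk, exactly along the lines the paper suggests in the paragraph preceding the theorem. First I would fix an arbitrary one of the $q$ chunks at the source. At every transmission opportunity each node picks a chunk uniformly and independently, and independently of the loss process; hence the stream of successful packets on the $i\textsuperscript{th}$ link that pertain to the fixed chunk is an independent thinning by factor $1/q$ of the original Bernoulli$(p_i)$ success process, and is therefore itself a Bernoulli process with parameter $p_i/q$. Within a fixed chunk the code is exactly a dense code applied to the $\alpha = k/q$ message vectors in that chunk, so the entire dense-code machinery of Section~\ref{subsec:DC} (partitioning into $w$ active subintervals per link, the Chernoff bound producing the per-partition count $r$, the recursive rank bounds via Lemmas~\ref{lem:VerticalT},~\ref{lem:HorizontalT}, and~\ref{lem:DensityTM}, and the density-to-rank step via Lemma~\ref{lem:DenseRankProb}) applies verbatim to the packets of the fixed chunk.

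Next I would instantiate Theorem~\ref{thm:DenseCodesRegularBernoulliActualNon-IdenticalGeneral} with the substitutions $p_i \mapsto p_i/q$, $k \mapsto \alpha = k/q$, and $\epsilon \mapsto \epsilon/q$ (the last substitution to accommodate a final union bound over the $q$ chunks). This yields, for the fixed chunk, that its decoding time exceeds
\[\frac{q}{p}\left(\alpha+(1+o(1))\left(\frac{\alpha L}{w}+\sqrt{\alpha\left(w\log\frac{wLq}{\epsilon}\right)}+w\log\frac{wLq}{\epsilon}\right)\right)\]
with failure probability bounded above by $\epsilon/q$. Substituting $\alpha = k/q$ and collecting terms gives exactly the quantity displayed in the theorem statement, and a union bound over the $q$ chunks turns the failure probability into $\epsilon$; the sink has recovered the whole message whenever it has decoded every chunk.

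It remains to translate the two side conditions. The dense-code hypothesis $w\log(w_T/\epsilon) = o(pN_T)$ from Lemma~\ref{lem:DensityBound}, under the same substitutions and using $w_T = L(w-L+1) = \Theta(wL)$, becomes $w\log(wLq/\epsilon) = o((p/q)N_T)$; because the capacity-achieving regime forces $N_T \sim k/p$, this is equivalent to $qw\log(wLq/\epsilon) = o(k)$, which is the second displayed condition. The optimal $w$ from Lemma~\ref{lem:DensityBound} is $w\sim(pN_TL^2/\log(pN_TL/\epsilon))^{1/3}$; applying the same substitutions and simplifying with $(p/q)N_T\sim k/q$ gives $w\sim(kL^2/(q\log(kL/\epsilon)))^{1/3}$, matching the claim. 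Finally, inserting this choice of $w$ into the bound and requiring each excess $O(\cdot)$ term to be $o(k)$ (so that the code is indeed capacity-achieving) translates the dominant $kL/w$ term into the condition $q = o(k/(L\log(kL/\epsilon)))$.

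The only place where one needs to be careful---and what I regard as the main obstacle---is verifying that the per-chunk reduction preserves the probabilistic independence used throughout Section~\ref{subsec:DC}. Specifically, one must check that the partition-based counting argument still sees enough packets per active partition after thinning (so the Chernoff step that produced $r$ now produces an analogous $r' \sim (p/q)N_T/w$ with the same $\gamma^{*}$-type deviation, now controlled via $\log(w_T q/\epsilon)$ because of the rescaled $\epsilon$), and that the RBLT structure of the modified transfer matrix on each chunk is unaffected by the interleaving of other chunks' packets on the same link. Both facts follow because chunk selection is i.i.d.\ across transmissions and independent of losses, so once we condition on which transmissions concern our fixed chunk we recover precisely the dense-code setting on a Bernoulli$(p_i/q)$ link, and the rest of the proof is a direct application of the already-established dense-code bound.
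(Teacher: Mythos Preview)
Your proposal is correct and follows essentially the same route as the paper: fix a chunk, observe that the per-chunk successful-packet process on link $i$ is Bernoulli with parameter $p_i/q$, apply the dense-code analysis with the substitutions $p\mapsto p/q$, $\epsilon\mapsto\epsilon/q$ (and hence $\varphi=pN_T/(wq)$, $\gamma^{*}$ controlled by $\log(w_Tq/\epsilon)$), choose $w$ by the same cube-root optimization, invoke Lemma~\ref{lem:DenseRankProb} per chunk, and union-bound over the $q$ chunks. The only cosmetic difference is that the paper re-traces the intermediate steps (the analogues of Lemmas~\ref{lem:Omegaii}--\ref{lem:DensityBound}) under the substitutions rather than citing Theorem~\ref{thm:DenseCodesRegularBernoulliActualNon-IdenticalGeneral} as a black box, but the content is identical.
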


\begin{proof}\renewcommand{\IEEEQED}{}The proof follows the same line as that of Theorem~\ref{thm:DenseCodesRegularBernoulliActualNon-IdenticalGeneral} by implementing the following modifications. Let us replace $p$ and $\epsilon$ with $p/q$ and $\epsilon/q$, respectively. Then, $\varphi=pN_T/wq$, and $r=(1-\gamma^{*})\varphi$, where $\gamma^{*}\sim\sqrt{(1/\dot{\varphi})\ln(w_T q/\dot{\epsilon})}$. For every $1\leq i\leq L$, and $1\leq j\leq w-L+1$, let $\mathcal{D}(Q_i^j)$, $\mathcal{D}_p(Q_i^j)$, and $r_{ij}$ be defined as in Section~\ref{subsec:DC}, but only restricted to the packets pertaining to a given chunk (not all the chunks). For every $i,j$, $\mathcal{D}(Q_i^j)$ can be lower bounded as follows (the proofs are very similar to those of Lemmas~\ref{lem:Omegaii} and~\ref{lem:Omegaij}): for every $1\leq j\leq w-L+1$, $\mathcal{D}(Q_1^j)\geq rj$, and for every $1<i\leq L$ and $1\leq j\leq w-L+1$, $\mathcal{D}(Q_i^j)$ fails to be larger than $rj-j(1+o(1))\log(w_T q/\epsilon)$, w.p. b.a.b. $ij\dot{\epsilon}/w_T q$, so long as \begin{equation}\label{eq:Temp15} wq\log\frac{w_T q}{\epsilon}=o(p N_T).\end{equation} Thus the number of dense packets pertaining to a given chunk at the sink node fails to be larger than \begin{eqnarray}\label{eq:Temp16}
 \lefteqn{\frac{p N_T}{q} - O\left(\frac{p N_T L}{wq}\right) -  } \nonumber\\
   && O\left(\sqrt{\frac{p N_T w}{q} \log\frac{w q L}{\epsilon}}\right)-O\left(w\log\frac{w q L}{\epsilon}\right)
\end{eqnarray} w.p. b.a.b. $\epsilon/q$. In order to maximize~\eqref{eq:Temp16} subject to condition~\eqref{eq:Temp15}, we select $w$ to be \[\sqrt[3]{\frac{p N_T L^2}{q\log(p N_T L/\epsilon)}}.\] Now let us assume that $N_T$ is $(1+o(1))k/p$. By replacing $\epsilon$ with $\dot{\epsilon}$, in the preceding results, and by replacing $k$ and $\epsilon$ with $k/q$ and $\dot{\epsilon}/q$, respectively, in Lemma~\ref{lem:DenseRankProb}, it follows that the sink node fails to decode a given chunk w.p. b.a.b. $\epsilon/q$, so long as $N_T$ is larger than \begin{dmath}\label{eq:Temp17} \frac{1}{p}\left(k+(1+o(1))\left(\frac{k L}{w}+\sqrt{k\left(wq\log\frac{w q L}{\epsilon}\right)}+wq\log\frac{w q L}{\epsilon}\right)\right).\end{dmath} Taking a union bound over all the chunks, it follows that the sink node fails to decode all the chunks w.p. b.a.b. $\epsilon$, so long as $N_T$ is larger than~\eqref{eq:Temp17}. To ensure that the lower bound on $N_T$ is $(1+o(1))k/p$, all the terms in~\eqref{eq:Temp17}, excluding the first one, need to be $o(k/p)$. This condition is met so long as $q$ is \[\hspace{2.65 in}o\left(\frac{k}{L\log(kL/\epsilon)}\right).\hspace{2.65 in}\IEEEQEDopen\]\end{proof}

\begin{theorem}\label{thm:CapAchAveCodDelGeneral}The average coding delay of a CC with $q$ chunks over a network similar to Theorem~\ref{thm:CapAchCodDelGeneral} is larger than \begin{dmath*}\frac{1}{p}\left(k+(1+o(1))\left(\frac{k L}{w}+wq\log\frac{w q L}{\epsilon}\right)\right)\end{dmath*} w.p. b.a.b. $\epsilon$, so long as \[q=o({k}/({L\log(kL/\epsilon)})),\] and \[wq\log\frac{wLq}{\epsilon}=o(k)\] where $w\sim\left(kL/(q\log(kL/\epsilon))\right)^{\frac{1}{2}}$.\end{theorem}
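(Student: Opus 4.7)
The plan is to mimic the proof of Theorem~\ref{thm:CapAchCodDelGeneral}, applying the same modification that took us from Theorem~\ref{thm:DenseCodesRegularBernoulliActualNon-IdenticalGeneral} to Theorem~\ref{thm:DenseCodesRegularBernoulliAverageNon-IdenticalGeneral}: for the \emph{average} coding delay, the packet count per partition is not a random variable to be Chernoff-bounded, so $r$ can be replaced everywhere by its mean $\varphi$. Concretely, I would again replace the traffic parameter $p$ and the confidence parameter $\epsilon$ by $p/q$ and $\epsilon/q$ respectively, so that $\varphi=pN_T/(wq)$ refers to the per-chunk expected number of successful packets in a partition, and I would drop the Chernoff step that defined $\gamma^{*}$ and $r=(1-\gamma^{*})\varphi$, using $\varphi$ directly in place of $r$ in Lemmas~\ref{lem:Omegaii} and~\ref{lem:Omegaij} as applied to a single chunk.

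With this substitution, the lower bound on the number of dense packets pertaining to a given chunk at the sink node becomes, analogously to~\eqref{eq:Temp16} but without the $O(\sqrt{pN_T w/q \,\log(wqL/\epsilon)})$ term,
\[
\frac{pN_T}{q} - O\!\left(\frac{pN_T L}{wq}\right) - O\!\left(w\log\frac{wqL}{\epsilon}\right),
\]
holding with failure probability at most $\epsilon/q$, provided the partition-length condition $wq\log(w_Tq/\epsilon)=o(pN_T)$ is preserved. A union bound over the $q$ chunks (together with Lemma~\ref{lem:DenseRankProb} applied to each $\alpha\times\alpha$ dense sub-matrix, as in the proof of Theorem~\ref{thm:CapAchCodDelGeneral}) turns this into a lower bound on $N_T$ for successful decoding of all chunks of the form
\[
\frac{1}{p}\Bigl(k+(1+o(1))\Bigl(\frac{kL}{w}+wq\log\frac{wqL}{\epsilon}\Bigr)\Bigr),
\]
after substituting $N_T\sim k/p$ inside the lower-order terms.

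Next I would optimize $w$. The surviving two error terms $kL/w$ and $wq\log(wqL/\epsilon)$ are balanced (up to logarithmic factors) by the choice
\[
w\sim \Bigl(\frac{kL}{q\log(kL/\epsilon)}\Bigr)^{1/2},
\]
which is precisely the analog of the choice in Theorem~\ref{thm:DenseCodesRegularBernoulliAverageNon-IdenticalGeneral} with the substitutions $p\mapsto p/q$, $\epsilon\mapsto\epsilon/q$. One then checks that this $w$ satisfies the partition-length constraint $wq\log(wqL/\epsilon)=o(k)$, and that both residual error terms are $o(k/p)$, i.e.\ the coding scheme is capacity-achieving, precisely under the hypothesis $q=o(k/(L\log(kL/\epsilon)))$ stated in the theorem. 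Substituting this $w$ back into the bound on $N_T$ yields the claimed expression.

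The only genuinely delicate point, compared with the derivation of Theorem~\ref{thm:CapAchCodDelGeneral}, is verifying that the proofs of Lemmas~\ref{lem:Omegaii} and~\ref{lem:Omegaij} remain valid after replacing $r$ by $\varphi$: one must confirm that the $o(1)$ term $(\log(w_Tq/\epsilon)+1)/\varphi$ still goes to zero (equivalently, $\log(w_Tq/\epsilon)=o(\varphi)$), and that $\hat{r}_{\min}=\hat r_1$ continues to hold with $\hat r_1\sim\varphi(1-o(1))$. Both follow from the same partition-length condition $wq\log(w_Tq/\epsilon)=o(pN_T)$ used above, so the recursive chaining from Lemma~\ref{lem:Omegaii} through Lemma~\ref{lem:Omegaij} to the sink bound goes through verbatim. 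The rest of the argument (applying Lemma~\ref{lem:DensityBound}'s derivation, then Lemma~\ref{lem:DenseRankProb}, then the union bound over chunks, then solving for $N_T$) is identical to the coding-delay case and I would not re-derive it in detail.
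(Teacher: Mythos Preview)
Your proposal is correct and follows essentially the same approach as the paper: replace $r$ by $\varphi$ in the proof of Theorem~\ref{thm:CapAchCodDelGeneral}, which eliminates the $O(\sqrt{pN_Tw/q\cdot\log(wqL/\epsilon)})$ term from~\eqref{eq:Temp16}, and then choose $w\sim\sqrt{pN_TL/(q\log(pN_TL/\epsilon))}$ to balance the two surviving error terms subject to~\eqref{eq:Temp15}. Your additional remark about verifying that Lemmas~\ref{lem:Omegaii} and~\ref{lem:Omegaij} continue to hold after the substitution is a sound point that the paper leaves implicit.
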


\begin{proof}The proof is similar to that of Theorem~\ref{thm:CapAchCodDelGeneral}, except that $r$ needs to be replaced with $\varphi$. This implies that the third term in~\eqref{eq:Temp16} disappears. Thus, by selecting $w$ to be \[\sqrt{\frac{p N_T L}{q\log(p N_T L/\epsilon)}}\] in order to maximize a new version of~\eqref{eq:Temp16} (i.e., where the third term in~\eqref{eq:Temp16} is excluded), subject to condition~\eqref{eq:Temp15}, it follows that the sink node fails to decode all the chunks w.p. b.a.b. $\epsilon$, so long as $N_T$ is larger than \begin{dmath}\label{eq:Temp18} \frac{1}{p}\left(k+(1+o(1))\left(\frac{k L}{w}+wq\log\frac{w q L}{\epsilon}\right)\right).\end{dmath} The rest of the proof follows that of Theorem~\ref{thm:CapAchCodDelGeneral}.\end{proof}

In the case of unequal traffic parameters, the coding delay and the average coding delay are upper bounded as follows.

\begin{theorem}\label{thm:CapAchCodDelSpecial}The coding delay of a CC with $q$ chunks over a line network of $L$ links with deterministic regular transmissions and Bernoulli losses with unequal parameters $\{p_i\}$ is larger~than \begin{dmath*}\frac{1}{p}\left(k+(1+o(1))\left(\frac{k L}{w}+\sqrt{k\left(wq\log\frac{w q L}{\epsilon}\right)}\right)\right)\end{dmath*} w.p. b.a.b. $\epsilon$, so long as \[q=o\left(\min\left\{\frac{\gamma_e}{p},1\right\}\cdot {k}/({L\log(kL/\epsilon)})\right),\] where $w\sim\left(kL^2/(q\log(kL/\epsilon))\right)^{\frac{1}{3}}$, $p\doteq \min_{1\leq i\leq L}p_i$, $\gamma_e\doteq\min_{1<i\leq L} \gamma_{e_i}$, and $\gamma_{e_i}\doteq |p_i-p_{i-1}|$.\end{theorem}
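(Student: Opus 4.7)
The plan is to adapt the proof of Theorem~\ref{thm:CapAchCodDelGeneral} by grafting in the unequal-parameters refinement used for dense codes in Theorem~\ref{thm:DenseCodesRegularBernoulliActualNon-Identical}. As in the derivation of Theorem~\ref{thm:CapAchCodDelGeneral}, I would first reduce the CC analysis to a per-chunk dense-code analysis by replacing the effective per-link success probability $p_i$ with $p_i/q$ (since each chunk is selected independently with probability $1/q$) and replacing the target failure probability $\epsilon$ with $\epsilon/q$, so that a union bound over the $q$ chunks can be taken at the very end.

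Working link-by-link and partition-by-partition on one fixed chunk, I would then mirror the derivation leading to Theorem~\ref{thm:DenseCodesRegularBernoulliActualNon-Identical}: set $r_i\doteq(1-\gamma_i^{*})p_i N_T/(wq)$ with $\gamma_i^{*}\sim\sqrt{(1/\dot{\varphi}_i)\log(w_T q/\dot{\epsilon})}$, and verify via the Chernoff bound that $\varphi_{ij}\geq r_i$ uniformly across all active partitions except w.p.\ b.a.b.\ $\dot{\epsilon}/q$. Because the traffic parameters are strictly ordered, at each interior node the modified transfer matrix restricted to the chosen chunk contains an RBLT sub-matrix whose block heights exceed (rather than fall short of) the block widths. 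I would therefore invoke Lemma~\ref{lem:HorizontalT} (instead of Lemma~\ref{lem:VerticalT}) wherever the downstream link is slower than the upstream one, exactly as in the unequal-parameters analysis that yields~\eqref{eq:Temp4}. Propagating this recursion produces $\mathcal{D}(Q_i^j)\geq r_i j$ for every $1<i\leq L$ and $1\leq j\leq w-L+1$ (restricted to the fixed chunk), with failure probability b.a.b.\ $ij\dot{\epsilon}/(w_T q)$, under the chunk-rescaled version of condition~\eqref{eq:UnequalParameters}, namely $wq\log(w_T q/\epsilon)=o(\min\{\gamma_e/p,1\}\cdot p N_T)$.

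Specializing to $i=L$ and $j=w-L+1$ and writing the resulting bound in the form analogous to~\eqref{eq:Temp4} but with $p$ replaced by $p/q$, I obtain that the number of globally dense packets pertaining to a given chunk at the sink node fails to be at least
\[
\frac{p N_T}{q}\;-\;O\!\left(\frac{p N_T L}{wq}\right)\;-\;O\!\left(\sqrt{\frac{p N_T w}{q}\log\frac{wqL}{\epsilon}}\right)
\]
w.p.\ b.a.b.\ $\epsilon/q$. Exactly as in the passage from Theorem~\ref{thm:DenseCodesRegularBernoulliActualNon-IdenticalGeneral} to Theorem~\ref{thm:DenseCodesRegularBernoulliActualNon-Identical}, the additive $O(w\log(wqL/\epsilon))$ term that appears in the general-case expression~\eqref{eq:Temp16} is absent here; this is the structural payoff of applying Lemma~\ref{lem:HorizontalT} in place of Lemma~\ref{lem:VerticalT}. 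Applying Lemma~\ref{lem:DenseRankProb} to each chunk's $(k/q)$-column decoding matrix, union-bounding decoding failure over the $q$ chunks, and balancing the two surviving $O(\cdot)$ terms gives the optimal choice $w\sim(kL^2/(q\log(kL/\epsilon)))^{1/3}$ and the claimed coding-delay expression. Finally, requiring each $O(\cdot)$ term to be $o(k/p)$ (so that the scheme is capacity-achieving) translates into the stated upper bound on $q$, which also subsumes the chunk-rescaled form of condition~\eqref{eq:UnequalParameters}.

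The step I expect to be the main obstacle is verifying that, after the chunk-rescaling, the RBLT sub-matrix structure at every node still satisfies the hypotheses of Lemma~\ref{lem:HorizontalT} uniformly over all active partitions and all chunks; specifically, that $\gamma_i^{*}=o(1)$ holds simultaneously for every $i$ under the stated regime, and that the per-chunk ordering $r_1>r_2>\cdots>r_L$ inherited from $p_1>p_2>\cdots>p_L$ is preserved throughout the recursion (so that Lemma~\ref{lem:HorizontalT} is the correct lemma to apply at each node). Once this bookkeeping is secured, the remainder of the argument parallels Theorems~\ref{thm:DenseCodesRegularBernoulliActualNon-Identical} and~\ref{thm:CapAchCodDelGeneral} essentially verbatim.
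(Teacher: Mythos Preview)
Your proposal is correct and follows essentially the same route as the paper: reduce to a per-chunk dense-code analysis via the substitutions $p_i\mapsto p_i/q$ and $\epsilon\mapsto\epsilon/q$, invoke the unequal-parameters recursion (Lemma~\ref{lem:HorizontalT} in place of Lemma~\ref{lem:VerticalT}) to obtain the per-chunk density bound~\eqref{eq:Temp19} without the extra additive term present in~\eqref{eq:Temp16}, optimize $w$ under the rescaled condition~\eqref{eq:Temp20}, apply Lemma~\ref{lem:DenseRankProb}, and union-bound over chunks. The paper's own proof is in fact terser than yours and does not spell out the bookkeeping you flag as the main obstacle (preservation of the ordering $r_1>\cdots>r_L$ and $\gamma_i^{*}=o(1)$ after rescaling); your caution there is warranted but the verification is routine under the stated regime on $q$.
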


\begin{proof}\renewcommand{\IEEEQED}{}By replacing $p$ and $\epsilon$ with $p/q$ and $\epsilon/q$, respectively, in the proof of Theorem~\ref{thm:DenseCodesRegularBernoulliActualNon-Identical}, it follows that the number of dense packets pertaining to a given chunk at the sink node fails to be larger than \begin{eqnarray}\label{eq:Temp19}
 \lefteqn{\frac{p N_T}{q} - O\left(\frac{p N_T L}{wq}\right) -  } \nonumber\\
   && O\left(\sqrt{\frac{p N_T w}{q} \log\frac{w q L}{\epsilon}}\right)
\end{eqnarray} w.p. b.a.b. $\epsilon/q$, so long as \begin{equation}\label{eq:Temp20} wq\log\frac{w_T q}{\epsilon}=o\left(\min\left\{\frac{\gamma_e}{p},1\right\}\cdot p N_T\right).\end{equation} The rest of the proof is similar to that of Theorem~\ref{thm:CapAchCodDelGeneral}, except that~\eqref{eq:Temp19} excludes the last term in~\eqref{eq:Temp16}, and the choice of $w$ needs to satisfy condition~\eqref{eq:Temp20}, instead of condition~\eqref{eq:Temp15}. By selecting $w$ to be \[\sqrt[3]{\frac{p N_T L^2}{q\log(p N_T L/\epsilon)}}\] in order to maximize~\eqref{eq:Temp19} subject to condition~\eqref{eq:Temp20}, it follows that the sink node fails to decode all the chunks w.p. b.a.b. $\epsilon$, so long as $N_T$ is larger than \begin{dmath}\label{eq:Temp21} \frac{1}{p}\left(k+(1+o(1))\left(\frac{k L}{w}+\sqrt{k\left(wq\log\frac{w q L}{\epsilon}\right)}\right)\right).\end{dmath} In~\eqref{eq:Temp21}, each term, except the largest one, needs to be $o(k/p)$, and this condition is met so long as $q$ is \[\hspace{2.22 in}o\left(\min\left\{\frac{\gamma_e}{p},1\right\}\cdot\frac{k}{L\log(kL/\epsilon)}\right).\hspace{2.22 in}\IEEEQEDopen\]\end{proof}

\begin{theorem}\label{thm:CapAchAveCodDelSpecial}The average coding delay of a CC with $q$ chunks over a network similar to Theorem~\ref{thm:CapAchCodDelSpecial} is larger than \begin{dmath*}\frac{1}{p}\left(k+(1+o(1))\left(\frac{k L}{w}\right)\right)\end{dmath*} w.p. b.a.b. $\epsilon$, so long as \[q=o\left(\min\left\{\frac{\gamma_e}{p},1\right\}\cdot {k}/(f(k){L\log(kL/\epsilon)})\right),\] where $w\sim k/(q f(k)\log(kL/\epsilon))$, and $f(k)$ goes to infinity, as $k$ goes to infinity, such that $f(k)=o\left(k/(L\log(kL/\epsilon))\right)$.\end{theorem}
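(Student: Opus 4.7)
The plan is to combine the two simplifications that were already used elsewhere in the excerpt: (i) the passage from Theorem~\ref{thm:DenseCodesRegularBernoulliActualNon-Identical} to Theorem~\ref{thm:DenseCodesRegularBernoulliAverageNon-Identical}, which replaces $r$ by its expectation $\varphi$ whenever the coding delay is averaged over the traffic, and (ii) the passage from Theorem~\ref{thm:DenseCodesRegularBernoulliActualNon-Identical} to Theorem~\ref{thm:CapAchCodDelSpecial}, which rescales $p\mapsto p/q$ and $\epsilon\mapsto\epsilon/q$ to handle one chunk at a time and then takes a union bound over the $q$ chunks. The result being stated is the natural "both-at-once" statement, so the proof proposal is simply to redo the proof of Theorem~\ref{thm:CapAchCodDelSpecial} with $r$ everywhere replaced by $\varphi=pN_T/(wq)$.

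Concretely, I would re-run the recursion that leads to \eqref{eq:Temp19}, but with $\varphi$ in place of $r$, so that the Chernoff-based $\gamma^{*}$-gap disappears. Because that gap was the sole source of the $O(\sqrt{pN_Tw/q\,\log(wqL/\epsilon)})$ term in \eqref{eq:Temp19}, the chunk-wise lower bound on the number of dense packets pertaining to any given chunk at the sink collapses to
\[
\frac{pN_T}{q}-O\!\left(\frac{pN_TL}{wq}\right),
\]
with failure probability b.a.b.\ $\epsilon/q$, still valid under the unequal-parameter condition \eqref{eq:Temp20}, namely $wq\log(wqL/\epsilon)=o\!\left(\min\{\gamma_e/p,1\}\cdot pN_T\right)$. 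Applying Lemma~\ref{lem:DenseRankProb} to a dense sub-matrix of size $(\text{above})\times\alpha$, then taking a union bound over the $q$ chunks (and over the event that some active partition has fewer than the expected number of transmissions, handled exactly as in Lemma~\ref{lem:DensityBound}) shows that the sink fails to recover all chunks w.p.\ b.a.b.\ $\epsilon$ whenever $N_T$ exceeds $(1/p)\bigl(k+(1+o(1))(kL/w)\bigr)$, which is the announced bound.

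It then remains to choose $w$. Since the $O(\sqrt{\cdot})$ term is now gone, the objective is monotone increasing in $w$, so the optimum is pressed against the boundary of the constraint \eqref{eq:Temp20}; writing $N_T\sim k/p$ and introducing a slowly growing function $f(k)\to\infty$ to absorb the $o(\cdot)$ margin produces exactly $w\sim k/(qf(k)\log(kL/\epsilon))$. Substituting this $w$ into the residual correction gives $kL/w\sim qLf(k)\log(kL/\epsilon)$, and requiring this to be $o(k)$—so that the delay is truly $(1+o(1))k/p$ and the code is capacity-achieving—gives precisely $q=o\!\left(\min\{\gamma_e/p,1\}\cdot k/(f(k)L\log(kL/\epsilon))\right)$, together with $f(k)=o(k/(L\log(kL/\epsilon)))$ to keep $w\ge 1$.

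The only genuinely delicate point, and the one I expect to be the main obstacle, is the joint calibration of $w$ and $f(k)$: the validity condition \eqref{eq:Temp20} and the capacity-achieving requirement both constrain the \emph{same} product $wq\log(wqL/\epsilon)$ up to the factor $\min\{\gamma_e/p,1\}$, so one must interpose the auxiliary function $f(k)$ to leave a vanishing gap between "just feasible" and "negligible correction." Apart from this bookkeeping, the argument is a verbatim transcription of the proof of Theorem~\ref{thm:CapAchCodDelSpecial} with $r\mapsto\varphi$, and no new lemma beyond those already proved (Lemmas~\ref{lem:DensityTM}, \ref{lem:VerticalT}, \ref{lem:HorizontalT}, \ref{lem:Omegaii}, \ref{lem:Omegaij}, \ref{lem:DensityBound}, \ref{lem:DenseRankProb}) is required.
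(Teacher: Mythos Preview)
Your proposal is correct and mirrors the paper's own argument: the paper likewise reduces to maximizing $\frac{pN_T}{q}-O\!\left(\frac{pN_TL}{wq}\right)$ subject to condition~\eqref{eq:Temp20}, selects $w=\frac{pN_T}{q\,f(pN_T)\log(pN_TL/\epsilon)}$ with $f(n)\to\infty$ and $f(n)=o(n/(L\log(nL/\epsilon)))$, and derives the stated bound on $q$ from the requirement that $kL/w=o(k)$. One small slip: in the average-delay setting the union bound ``over the event that some active partition has fewer than the expected number of transmissions'' is not needed, since replacing $r$ by $\varphi$ is precisely what obviates that event---but this does not affect the rest of your argument.
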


\begin{proof}\renewcommand{\IEEEQED}{}The proof follows the same line as that of Theorem~\ref{thm:CapAchCodDelGeneral}, except that the choice of $w$ needs to maximize \begin{equation}\label{eq:Temp22} \frac{p N_T}{q} - O\left(\frac{p N_T L}{wq}\right)\end{equation} subject to condition~\eqref{eq:Temp20}. To do so, we select $w$ to be \[\frac{p N_T}{q f(p N_T)\log(p N_T L/\epsilon)},\] where $f(n)$ goes to infinity, as $n$ goes to infinity, such that $f(n)=o\left(n/(L\log(n L/\epsilon))\right)$. The sink node fails to decode all the chunks w.p. b.a.b. $\epsilon$, so long as $N_T$ is larger than \begin{dmath}\label{eq:Temp23} \frac{1}{p}\left(k+(1+o(1))\left(\frac{k L}{w}\right)\right).\end{dmath} The second term in~\eqref{eq:Temp23} needs to be $o(k/p)$, and this condition is met so long as $q$ is \[\hspace{2.1 in}o\left(\min\left\{\frac{\gamma_e}{p},1\right\}\cdot\frac{k}{f(k)L\log(kL/\epsilon)}\right).\hspace{2.1 in}\IEEEQEDopen\]\end{proof}

\subsubsection{Capacity-Approaching-with-a-Gap Scenarios}\label{subsubsec:CCCAPP}
By the results of Section~\ref{subsubsec:CCCACH}, one can conclude that CC are not capacity-achieving if the size of the chunks does not comply with condition $\alpha=\omega({L\log(kL/\epsilon)})$. Also, the analysis of Section~\ref{subsec:DC} does not apply to CC with chunks of small sizes violating the above condition. From a computational complexity perspective, CC with chunks of smaller sizes are, however, of more practical interest (e.g., linear-time CC with constant-size chunks). In the following,~we study CC with chunks of a size constant in the message~size.

Let $\{p_i\}_{1\leq i\leq L}$ be an arbitrary sequence of traffic parameters, and let $p\doteq \min_{1\leq i\leq L}p_i$. Let the size of the chunks $\alpha$ ($=k/q$) be a constant in the message size $k$, i.e., $\alpha=O(1)$. Let the time interval $(0,N_T]$ and its $w$ disjoint partitions be defined as in~Section~\ref{subsec:DC}. Let $\varphi_{ij}$ be the number of packets (pertaining to a given chunk) in the partition $I_{ij}$, and $\varphi_i$ be the expected value of $\varphi_{ij}$. Let $\varphi\doteq \min_{1\leq i\leq L}\varphi_i$. Then, $\varphi_i=p_i N_T/wq$, and $\varphi=pN_T/wq$. Let $N_T=(1+\gamma_c)k/p$, where $0<\gamma_c<1$ is an arbitrarily small constant. By replacing $N_T$ with $(1+\gamma_c)k/p$, it follows that $\varphi=(1+\gamma_c)\alpha/w$. Further, it is not hard to see that $\varphi=O(1)$, since $w$ has to be a constant (otherwise, if $w$ goes to infinity, as $N_T$ goes to infinity, then $\varphi$ goes to $0$, and for such a case, our analysis is not valid).

By applying the Chernoff bound, it can be shown that $\Pr\{\varphi_{ij}<(1-\gamma^{*})\varphi\}\leq e^{-{\gamma^{*}}^2\dot{\varphi}}$, for every $0<\gamma^{*}<1$. Taking $e^{-{\gamma^{*}}^2\dot{\varphi}}\leq \dot{\gamma_b}/w_T$, it follows that $\varphi_{ij}$ is not larger than or equal to $r\doteq(1-\gamma^{*})\varphi$ w.p. b.a.b. $\dot{\gamma_b}/w_T$, where $\gamma^{*}$ is chosen to be the smallest real number larger than or equal to $\sqrt{(1/\dot{\varphi})\ln(w_T/\dot{\gamma_b})}$ such that $r$ ($=(1-\gamma^{*})\varphi$) is an integer. It is not hard to see that $\gamma^{*}=O(1)$. Taking a union bound over all the active partitions of all links, it follows that $\varphi_{ij}$ is not larger than or equal to $r$ w.p. b.a.b. $\dot{\gamma_b}$.


Let $\mathcal{D}(Q_{i}^j)$ be the number of dense packets pertaining to a given chunk in the first $j$ active partitions over the $i\textsuperscript{th}$ link.

By applying Lemma~\ref{lem:HorizontalT}, it can be shown that: (i) for every $1\leq j\leq w-L+1$, $\mathcal{D}(Q_1^j)\geq rj$, (ii) for every $1<i\leq L$, the inequality $\mathcal{D}(Q_i^1)\geq r-\log(w_T/\dot{\gamma_b})$ fails w.p. b.a.b. $i\dot{\gamma_b}/w_T$, and (iii) for every $1<i\leq L$ and $1<j\leq w-L+1$, the inequality $\mathcal{D}(Q_i^j)\geq r-j\log(w_T/\dot{\gamma_b})-\log((j+1)w_T/\dot{\gamma_b})$ fails w.p. b.a.b. $ij\dot{\gamma_b}/w_T$, so long as \begin{equation}\label{eq:Temp10}\alpha=\Omega \left(w^2\log\frac{w_T}{{\gamma_b}}\right).\end{equation}

By using the above results, it follows that the number of dense packets pertaining to a given chunk at the sink node fails to be lower bounded by \begin{dmath}\label{eq:Temp11} \frac{w_T\varphi}{L} - O\left(\frac{w_T}{L}\sqrt{\varphi\log\frac{w_T}{\gamma_b}}\right)-O\left(\frac{w_T}{L}\log\frac{w_T}{\gamma_b}\right)\end{dmath} w.p. b.a.b. $\gamma_b$. The lower bound is non-negative so long as $\alpha=\Omega\left(w\log({w_T}/{\gamma_b})\right)$, and this condition holds so long as condition~\eqref{eq:Temp10} holds. We select $w$ to be $\sqrt[3]{\alpha L^2/\log(\alpha L/\gamma_b)}$ to maximize~\eqref{eq:Temp11}. By replacing $w$ with this value,~\eqref{eq:Temp10} can be rewritten as \begin{equation}\label{eq:Temp12} \alpha=\Omega\left(L^4\log\frac{L}{\gamma_b}\right).\end{equation}

By replacing $\gamma_b$ with $\dot{\gamma_b}$, and by applying Lemma~\ref{lem:DenseRankProb}, it follows that the sink node fails to decode a given chunk w.p. b.a.b. $\gamma_b$, so long as~\eqref{eq:Temp11} is larger than $\alpha+\log({1}/{\dot{\gamma_b}})$. By replacing our choice of $w$ in~\eqref{eq:Temp11}, it can be seen that, excluding the first term, the second term dominates the rest. By replacing $\varphi$ with $(1+\gamma_c)\alpha/w$, the decoding condition becomes \begin{equation}\label{eq:Temp13}\alpha=\Omega\left(\frac{L}{\gamma^3_c}\log\frac{L}{\gamma_b\gamma_c}\right).\end{equation} Thus, a given chunk fails to be decodable w.p. b.a.b. $\gamma_b$ so long as both conditions~\eqref{eq:Temp12} and~\eqref{eq:Temp13} are met. In other words, the expected fraction of undecodable chunks is bounded from above by $\gamma_b$. By using a martingale argument similar to the one in~\cite{HBJ:2011} (by constructing a martingale sequence over the number of undecodable chunks), the concentration of the fraction of undecodable chunks around the expectation can be shown as follows. The proof is omitted to avoid repetition.

\begin{lemma}\label{lem:Concentration}By applying a CC with chunks of size $\alpha$, satisfying both conditions~\eqref{eq:Temp12} and~\eqref{eq:Temp13}, the fraction of undecodable chunks at the sink node until time $N_T = (1+\gamma_c)k/p$ is larger than $(1+\gamma_a)\gamma_b$, w.p. b.a.b. $\epsilon$, so long as \begin{equation}\label{eq:Temp14}{\alpha^2}/{\gamma^2_a\gamma^2_b}=o({k}/{\log({1}/{\epsilon})}),\end{equation} where $0<\gamma_a,\gamma_b,\gamma_c<1$ are arbitrary constants.\end{lemma}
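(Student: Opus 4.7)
The plan is to apply the Azuma--Hoeffding inequality to a Doob martingale built over the per-transmission randomness, and then combine the resulting concentration with the expected-fraction bound $\gamma_b$ that was established in the paragraph immediately preceding the lemma. First I would enumerate the $T = L N_T$ transmission opportunities across the $L$ links of the network in any fixed order, and let $Y_t$ denote the triple (chunk selection, local encoding coefficient vector, Bernoulli loss outcome) associated with the $t$-th opportunity; by construction the $\{Y_t\}$ are mutually independent. Writing $X = X(Y_1,\ldots,Y_T)$ for the fraction of undecodable chunks at the sink at time $N_T$, the natural martingale is the Doob sequence $M_t = \mathbb{E}[X \mid Y_1,\ldots,Y_t]$ with $M_0 = \mathbb{E}[X]$.

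The key step is bounding the martingale differences by a resampling argument: modifying a single $Y_t$ can alter $X$ by at most $2/q = 2\alpha/k$. The reason is the chunked-coding invariant that no node ever combines packets belonging to distinct chunks, so the entire forward ``trajectory'' of a transmission, at its own node and at every downstream node whose output it eventually feeds into, is confined to the single chunk selected at that transmission. Replacing $Y_t$ by an independent copy can therefore change the selected chunk to at most one other, flipping the decodability status of at most two chunks, while all remaining chunks depend on disjoint randomness and are unaffected.

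With $c_t \leq 2/q$ for every $t$ and $\sum_{t} c_t^{2} \leq 4 L N_T \alpha^{2}/k^{2}$, the Azuma--Hoeffding inequality yields
\[
\Pr\!\left[X \geq \mathbb{E}[X] + \gamma_a\gamma_b\right] \;\leq\; \exp\!\left(-\frac{\gamma_a^{2}\gamma_b^{2}\,k^{2}}{8\,L\,N_T\,\alpha^{2}}\right),
\]
and substituting $N_T = (1+\gamma_c)k/p$ shows that this is at most $\epsilon$ precisely under the asymptotic condition~\eqref{eq:Temp14} (the network-length and traffic-parameter factors being absorbed in the $o(\cdot)$ notation, in the same spirit as the preceding results of Section~\ref{subsec:DC}). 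Since the paragraph preceding the lemma establishes that a given chunk is undecodable w.p.\ b.a.b.\ $\gamma_b$, we have $\mathbb{E}[X] \leq \gamma_b$, and the concentration bound immediately gives $X \leq (1+\gamma_a)\gamma_b$ w.p.\ b.a.b.\ $\epsilon$.

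The main obstacle will be justifying the $O(1/q)$ bounded-difference constant rigorously: one must argue that re-assigning a single transmission to a different chunk can never leak into the decodability of a third chunk through downstream coding, which is exactly where the ``chunks are never mixed at any relay'' property of the code is invoked. Once this separation-of-chunks property is spelled out carefully, the remainder is a routine application of Azuma--Hoeffding, entirely parallel to the martingale argument of~\cite{HBJ:2011}.
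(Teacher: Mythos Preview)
The paper does not actually prove this lemma; it only records that the result follows from a martingale argument on the number of undecodable chunks parallel to~\cite{HBJ:2011}, and omits the details. Your Doob-martingale construction over the per-transmission randomness, with the $2/q$ bounded-difference constant coming from the chunk-separation property and Azuma--Hoeffding delivering exactly condition~\eqref{eq:Temp14} after substituting $N_T=(1+\gamma_c)k/p$, is precisely that argument spelled out, so there is nothing further to compare.
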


By the result of Lemma~\ref{lem:Concentration}, the fraction of chunks which are not decodable until time $N_T$ becomes larger than $(1+\gamma_a)\gamma_b$, w.p. b.a.b. $\epsilon$. Since $\gamma_a,\gamma_b$ are non-zero constants, a CC, alone, might not decode all the chunks. However, the completion of decoding of all the chunks is guaranteed by devising a proper precoding scheme~\cite{HBJ:2011}. The precoding works as follows: The set of $k$ message vectors at the source node constitute the input of a capacity-achieving erasure code, called \emph{precode}. The rate of the precode is $1-(1+\gamma_a)\gamma_b$, i.e., the precode decoder can correct up to a fraction $(1+\gamma_a)\gamma_b$ of erasures,\footnote{The precode does not have to be capacity-achieving and its rate can be arbitrarily close to $1-(1+\gamma_a)\gamma_b$, yet, it has to be able to correct up to a fraction $(1+\gamma_a)\gamma_b$ of erasures (for more details, see~\cite{HBJ:2011}).} and the number of the coded packets at the output of the precode, called \emph{intermediate packets}, is $\left(1+(1+\gamma_a)\gamma_b+O(\gamma^2_b)\right)k$. By applying a CC with chunks of size $\alpha$, satisfying conditions~\eqref{eq:Temp12},~\eqref{eq:Temp13} and~\eqref{eq:Temp14}, the fraction of the intermediate packets that are not recoverable at the output of the CC decoder until time $(1+\gamma_c)\left(1+(1+\gamma_a)\gamma_b+O(\gamma^2_b)\right)\frac{k}{p}$ is larger than $(1+\gamma_a)\gamma_b$, w.p. b.a.b. $\epsilon$. Then, the precode decoder can recover all the $k$ message vectors from the set of recovered intermediate packets. Therefore, the coding delay of a CC with precoding (CCP) is upper bounded as follows.

\begin{theorem}\label{thm:CapAppCodDelGeneral}The coding delay of a CCP with chunks of size $\alpha$ and a capacity-achieving erasure code of rate $1-(1+\gamma_a)\gamma_b$, over a line network of $L$ links with deterministic regular transmissions and Bernoulli losses with parameters $\{p_i\}$ is larger than $(1+\gamma_c)\left(1+(1+\gamma_a)\gamma_b+O(\gamma^2_b)\right)\frac{k}{p}$, w.p. b.a.b.~$\epsilon$, so long~as \[\alpha=\Omega\left(\left\{\left(\frac{L}{\gamma^3_{c}}\log\frac{L}{\gamma_b\gamma_c}\right),\left(L^4 \log \frac{L}{\gamma_b}\right)\right\}\right),\] and $\alpha^2/\gamma^2_a\gamma^2_b=o(k/\log(1/\epsilon))$, where $0<\gamma_a,\gamma_b,\gamma_c<1$ are arbitrary constants, and $p\doteq \min_{1\leq i\leq L}p_i$.\end{theorem}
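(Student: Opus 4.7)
The plan is to assemble the theorem from the machinery already built in Section~\ref{subsubsec:CCCAPP}, chiefly Lemma~\ref{lem:Concentration} together with the precoding construction described just above the theorem statement. The overall proof is essentially a reduction: the stand-alone CC is not guaranteed to decode every chunk, but a capacity-achieving erasure precode of the right rate can absorb the residual fraction of failed chunks provided that fraction is controlled with high probability.

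First, I would set up the encoding pipeline. The source feeds its $k$ message vectors into the capacity-achieving erasure precode of rate $1-(1+\gamma_a)\gamma_b$, producing $k' = (1+(1+\gamma_a)\gamma_b + O(\gamma_b^2))\,k$ intermediate packets (the $O(\gamma_b^2)$ absorbs the $1/(1-x) = 1 + x + O(x^2)$ expansion around $x=(1+\gamma_a)\gamma_b$). These intermediate packets are then partitioned into chunks of size $\alpha$, and the CC transmission proceeds over the line network as in Section~\ref{subsubsec:CCCACH}.

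Next, I would apply Lemma~\ref{lem:Concentration} with $k$ replaced by $k'$ and $N_T = (1+\gamma_c)k'/p$: since the CC is agnostic to whether its input packets are precoded or not, the same analysis shows that the fraction of chunks still undecodable at the sink by time $(1+\gamma_c)(1+(1+\gamma_a)\gamma_b+O(\gamma_b^2))k/p$ exceeds $(1+\gamma_a)\gamma_b$ with probability bounded above by $\epsilon$. The hypotheses on $\alpha$ in the theorem are precisely what make both conditions~\eqref{eq:Temp12} and~\eqref{eq:Temp13} of the underlying chunk-decoding analysis hold (the $\max\{\cdot,\cdot\}$ in the theorem is encoded through the $\Omega(\{\cdot,\cdot\})$ notation), and $\alpha^2/(\gamma_a^2\gamma_b^2)=o(k/\log(1/\epsilon))$ is condition~\eqref{eq:Temp14} (using $k' = \Theta(k)$, so the $o(\cdot)$ statement is unchanged up to constants). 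By the definition of the precode rate, if at most a $(1+\gamma_a)\gamma_b$ fraction of the intermediate packets is erased, the precode decoder recovers all $k$ message vectors.

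Combining these pieces, the sink recovers the entire message by time $(1+\gamma_c)(1+(1+\gamma_a)\gamma_b+O(\gamma_b^2))k/p$ except on an event of probability b.a.b.~$\epsilon$, which is the claimed bound on the coding delay. The only slightly delicate step, and what I would treat most carefully, is verifying that the three conditions~\eqref{eq:Temp12},~\eqref{eq:Temp13},~\eqref{eq:Temp14} are inherited when the effective message size becomes $k'$ instead of $k$: since $k' = (1+\Theta(\gamma_b))k$ with $\gamma_b$ a fixed positive constant, $\log k$ and $\log k'$ differ by $O(1)$ and the constraints on $\alpha$ in terms of $L$, $\gamma_a$, $\gamma_b$, $\gamma_c$, $\epsilon$ translate verbatim. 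The rest is just bookkeeping of the dominant terms in~\eqref{eq:Temp11} under the choice $w\sim \sqrt[3]{\alpha L^2/\log(\alpha L/\gamma_b)}$, already done in the build-up to the theorem.
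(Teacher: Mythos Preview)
Your proposal is correct and follows essentially the same route as the paper: the paper's ``proof'' is in fact the paragraph immediately preceding the theorem, which combines Lemma~\ref{lem:Concentration} (under conditions~\eqref{eq:Temp12},~\eqref{eq:Temp13},~\eqref{eq:Temp14}) with the precoding construction, replacing $k$ by the number of intermediate packets $k'=(1+(1+\gamma_a)\gamma_b+O(\gamma_b^2))k$ and setting $N_T=(1+\gamma_c)k'/p$. Your observation that $k'=\Theta(k)$ so the $o(\cdot)$ and $\Omega(\cdot)$ conditions transfer verbatim is exactly the (implicit) step the paper takes.
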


In the case of the average coding delay of a CC with precoding, the following can be shown similar to Theorem~\ref{thm:CapAppCodDelGeneral} by replacing $r$ with $\varphi$.

\begin{theorem}\label{thm:CapAppAveCodDelGeneral}The average coding delay of a CCP with chunks of size $\alpha$ and a capacity-achieving erasure code of rate $1-(1+\gamma_a)\gamma_b$, over a network similar to Theorem~\ref{thm:CapAppCodDelGeneral} is larger than $(1+\gamma_c)\left(1+(1+\gamma_a)\gamma_b+O(\gamma^2_b)\right)\frac{k}{p}$, w.p. b.a.b. $\epsilon$, so long as \[\alpha=\Omega\left(\frac{L}{\gamma_{c}}\log\frac{L}{\gamma_b \gamma_c}\right),\] and $\alpha^2/\gamma^2_a\gamma^2_b=o(k/\log(1/\epsilon))$, where $0<\gamma_a,\gamma_b,\gamma_{c}<1$ are arbitrary constants.\end{theorem}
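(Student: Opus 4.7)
My plan is to adapt the proof of Theorem~\ref{thm:CapAppCodDelGeneral} by replacing the probabilistic per-partition count lower bound $r=(1-\gamma^{*})\varphi$ with the expectation $\varphi=(1+\gamma_{c})\alpha/w$, mirroring the way Theorem~\ref{thm:DenseCodesRegularBernoulliAverageNon-IdenticalGeneral} was obtained from Theorem~\ref{thm:DenseCodesRegularBernoulliActualNon-IdenticalGeneral}. Since the average coding delay is random only over the code and not over the traffic, the Chernoff concentration of $\varphi_{ij}$ around $\varphi$ is unnecessary, so the $\sqrt{\varphi\log(w_{T}/\gamma_{b})}$ deviation term drops out of the per-chunk density bound.

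Concretely, the first step is to re-run the recursive application of Lemma~\ref{lem:HorizontalT} carried out in Section~\ref{subsubsec:CCCAPP}, but with the per-partition count set to $\varphi$. For a fixed chunk, this should give, for every $1<i\leq L$ and $1\leq j\leq w-L+1$, a bound of the form $\mathcal{D}(Q_{i}^{j})\geq \varphi j-j\log(w_{T}/\dot{\gamma_{b}})-\log((j+1)w_{T}/\dot{\gamma_{b}})$ failing with probability at most $ij\dot{\gamma_{b}}/w_{T}$. Chaining over the active partitions at the sink then produces, as the analog of~\eqref{eq:Temp11},
\[\frac{w_{T}\varphi}{L}-O\!\left(\frac{w_{T}}{L}\log\frac{w_{T}}{\gamma_{b}}\right),\]
where the square-root term of~\eqref{eq:Temp11} is now absent.

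The second step is to optimize $w$. After substituting $\varphi=(1+\gamma_{c})\alpha/w$ and $w_{T}\sim wL$, the decoding criterion from Lemma~\ref{lem:DenseRankProb} (that the above bound exceed $\alpha+\log(1/\dot{\gamma_{b}})$) reduces to $\gamma_{c}\alpha=\Omega(w\log(wL/\gamma_{b}))$, i.e., $\alpha=\Omega((w/\gamma_{c})\log(wL/\gamma_{b}))$. Crucially, the auxiliary constraint~\eqref{eq:Temp10} is no longer needed (it existed only to make the Chernoff bound on $\varphi_{ij}$ meaningful), so $w$ is free to be chosen of order $L$, yielding the single condition $\alpha=\Omega((L/\gamma_{c})\log(L/(\gamma_{b}\gamma_{c})))$ stated in the theorem; in particular the $L^{4}\log(L/\gamma_{b})$ requirement of Theorem~\ref{thm:CapAppCodDelGeneral} disappears. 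I would then invoke Lemma~\ref{lem:Concentration} (whose proof, as noted in the paper, depends only on the per-chunk failure probability via an Azuma-type martingale bound and therefore carries over verbatim) to deduce that the fraction of undecodable chunks exceeds $(1+\gamma_{a})\gamma_{b}$ w.p.\ at most $\epsilon$ whenever $\alpha^{2}/(\gamma_{a}^{2}\gamma_{b}^{2})=o(k/\log(1/\epsilon))$. Prepending a rate-$(1-(1+\gamma_{a})\gamma_{b})$ capacity-achieving erasure precode and applying the chunked scheme to its $(1+(1+\gamma_{a})\gamma_{b}+O(\gamma_{b}^{2}))k$ intermediate packets then gives the claimed delay.

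The main obstacle I anticipate is the formal justification that substituting the expectation $\varphi$ for the Chernoff lower bound $r$ is legitimate in the average-delay setting: this amounts to arguing that, once the traffic-expectation is taken at the outset, the recursive rank/density arguments of Lemmas~\ref{lem:HorizontalT} and~\ref{lem:DensityTM} apply to block-structured matrices whose block dimensions are the (non-integer, expected) counts. This step was also only implicit in the dense-code case of Theorem~\ref{thm:DenseCodesRegularBernoulliAverageNon-IdenticalGeneral}, and making it precise for chunked codes will require some care; once in hand, however, the remainder of the argument is a verbatim specialization of the proof of Theorem~\ref{thm:CapAppCodDelGeneral}.
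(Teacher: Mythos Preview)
Your approach is essentially identical to the paper's: the paper's entire proof is the one-line remark that the result ``can be shown similar to Theorem~\ref{thm:CapAppCodDelGeneral} by replacing $r$ with $\varphi$,'' and you have correctly unpacked what that substitution entails (the square-root deviation term in~\eqref{eq:Temp11} drops, the optimization of $w$ changes, and only the single condition $\alpha=\Omega\!\bigl((L/\gamma_c)\log(L/(\gamma_b\gamma_c))\bigr)$ survives, after which Lemma~\ref{lem:Concentration} and the precoding argument carry over verbatim). One small caveat: your assertion that constraint~\eqref{eq:Temp10} ``existed only to make the Chernoff bound on $\varphi_{ij}$ meaningful'' is not quite right---in the paper that constraint emerges from the recursive rank/density analysis via Lemma~\ref{lem:HorizontalT}, not from the Chernoff step---but your conclusion that its $L^4$ consequence~\eqref{eq:Temp12} is no longer binding once $r$ is replaced by $\varphi$ and $w$ is re-optimized is correct.
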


In the special case of unequal traffic parameters, the coding delay and the average coding delay of CC with precoding can be upper bounded as follows. The proofs follow the same line as in the general case except that a new set of conditions needs to be satisfied based on the assumption that no two traffic parameters are equal.

\begin{theorem}\label{thm:CapAppCodDelSpecial}The coding delay of a CCP with chunks of size $\alpha$ and a capacity-achieving erasure code of rate $1-(1+\gamma_a)\gamma_b$, over a line network of $L$ links with deterministic regular transmissions and Bernoulli losses with unequal parameters $\{p_i\}$ is larger than $(1+\gamma_c)\left(1+(1+\gamma_a)\gamma_b+O(\gamma^2_b)\right)\frac{k}{p}$, w.p. b.a.b. $\epsilon$, so long as \[\alpha=\Omega\left(\left\{\left(\frac{L}{\gamma^3_{c}}\log\frac{L}{\gamma_b\gamma_c}\right),\left(\frac{L}{\gamma^3_{e}} \log \frac{L}{\gamma_e\gamma_b}\right)\right\}\right),\] and $\alpha^2/\gamma^2_a\gamma^2_b=o(k/\log(1/\epsilon))$, where $0<\gamma_a,\gamma_b,\gamma_c<1$ are arbitrary constants, $p\doteq \min_{1\leq i\leq L}p_i$, $\gamma_e\doteq\min_{1<i\leq L} \gamma_{e_i}$, and $\gamma_{e_i}\doteq |p_i-p_{i-1}|$.\end{theorem}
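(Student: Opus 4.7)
The plan is to follow the proof of Theorem~\ref{thm:CapAppCodDelGeneral} step by step, but at each stage to substitute the unequal-parameter refinements that were used in the derivation of Theorem~\ref{thm:DenseCodesRegularBernoulliActualNon-Identical}. First I would set $N_T = (1+\gamma_c)(1+(1+\gamma_a)\gamma_b + O(\gamma_b^2))k/p$, partition the time axis into $w$ equal-length intervals, and restrict attention to the $w_T = L(w-L+1)$ active partitions. Ordering the links so that $p_1 > p_2 > \cdots > p_L$ without loss of generality, I would define $\varphi_i = p_i N_T/(wq)$ and $r_i = (1-\gamma_i^{*})\varphi_i$ with $\gamma_i^{*} \sim \sqrt{(1/\dot{\varphi_i})\ln(w_T q/\dot{\gamma_b})}$, and condition on the event that every active partition pertaining to the $i$th link carries at least $r_i$ successful packets belonging to the chunk currently under study. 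A Chernoff and union-bound argument (exactly as in the derivation preceding Theorem~\ref{thm:DenseCodesRegularBernoulliActualNon-Identical}, with $p$ replaced by $p/q$ and $\epsilon$ by $\gamma_b/q$) shows this conditioning fails w.p.\ at most $\dot{\gamma_b}$.

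Next, I would propagate density down the line using Lemma~\ref{lem:HorizontalT}. Because the per-link counts $r_i$ are strictly decreasing in $i$, the modified transfer matrix at the $i$th node fits the ``rows shorter than columns'' RBLT pattern of Lemma~\ref{lem:HorizontalT}, and the clean per-chunk bound $\mathcal{D}(Q_i^j) \geq r_i j$ holds with failure probability at most $ij\dot{\gamma_b}/w_T$, provided the chunk-adapted analogue of condition~\eqref{eq:UnequalParameters} is in force, namely $wq\log(w_T q/\gamma_b) = o(\min\{\gamma_e/p,1\}\cdot pN_T)$. After a union bound over links and partitions, the number of dense packets pertaining to the chunk at the sink is at least
\[\frac{w_T \varphi}{L} - O\!\left(\frac{w_T}{L}\sqrt{\varphi\log\frac{w_T}{\gamma_b}}\right)\]
with failure probability at most $\gamma_b/q$; crucially, the additive $O((w_T/L)\log(w_T/\gamma_b))$ term of~\eqref{eq:Temp11} disappears here, and this is precisely what will let me relax the general-case condition~\eqref{eq:Temp12}.

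With $\varphi = (1+\gamma_c)\alpha/w$ and $w_T/L \sim w$, this lower bound becomes $(1+\gamma_c)\alpha - O(\sqrt{\alpha w \log(w_T/\gamma_b)})$, and decoding a given chunk (via Lemma~\ref{lem:DenseRankProb}) succeeds once this quantity exceeds $\alpha + \log(q/\dot{\gamma_b})$, i.e., once $w\log(w_T/\gamma_b) = O(\gamma_c^2 \alpha)$. I would then choose $w$ as the largest constant consistent with both this decodability inequality and the validity condition of the previous paragraph. Unwinding the resulting constraints on $\alpha$ gives $\alpha = \Omega((L/\gamma_c^3)\log(L/(\gamma_b\gamma_c)))$ (unchanged from~\eqref{eq:Temp13}) and $\alpha = \Omega((L/\gamma_e^3)\log(L/(\gamma_e\gamma_b)))$ (replacing the general-case bound $\alpha = \Omega(L^4\log(L/\gamma_b))$), whose maximum is exactly the hypothesis of the theorem.

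Finally, because each chunk is undecodable w.p.\ at most $\gamma_b$, the expected fraction of undecodable chunks is at most $\gamma_b$, and the martingale concentration of Lemma~\ref{lem:Concentration} (using the extra hypothesis $\alpha^2/\gamma_a^2\gamma_b^2 = o(k/\log(1/\epsilon))$) upgrades this to $(1+\gamma_a)\gamma_b$ w.h.p.\ on the realized fraction. Composing with a rate $1-(1+\gamma_a)\gamma_b$ capacity-achieving erasure precode, and inflating $N_T$ by the factor $1+(1+\gamma_a)\gamma_b+O(\gamma_b^2)$ to accommodate the intermediate packets, yields the claimed coding-delay bound. The main obstacle I expect is a careful bookkeeping of the $\gamma_e$ dependence through the rank calculation: one must verify that the optimal constant $w$ is indeed limited by the $\gamma_e/p$ factor (rather than by $\gamma_c^2$) in the regime of interest, and confirm that the application of Lemma~\ref{lem:HorizontalT} remains legitimate when its row-count parameter $r^{*}$ is instantiated as the link-dependent $r_i$, which is justified by the monotonicity $r_L \leq r_{L-1} \leq \cdots \leq r_1$ inherited from the ordering of the $p_i$.
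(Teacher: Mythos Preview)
Your high-level plan---follow the proof of Theorem~\ref{thm:CapAppCodDelGeneral} and substitute the unequal-parameter density propagation via Lemma~\ref{lem:HorizontalT}---is exactly what the paper does. But the execution has a gap that breaks the derivation of the stated conditions on~$\alpha$.

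The lower bound you write for the number of dense packets at the sink,
\[
\frac{w_T\varphi}{L}-O\!\left(\frac{w_T}{L}\sqrt{\varphi\log\frac{w_T}{\gamma_b}}\right),
\]
is missing a term. The paper's bound (its equation corresponding to the unequal-parameter CCP case) is
\[
(1+\gamma_c)\alpha - O\!\left(\frac{\alpha L}{w}\right) - O\!\left(\sqrt{\alpha w\log\frac{w_T}{\gamma_b}}\right).
\]
The extra $O(\alpha L/w)$ arises because $w_T/L=w-L+1$, not $w$; in the constant-$\alpha$ regime, $w$ is itself a constant of order comparable to~$L$, so the approximation $w_T/L\sim w$ is not legitimate, and the discarded $(L-1)\varphi$ is of the same order as everything else. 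Without this term, your optimization of $w$ is ill-posed: with only the square-root term (increasing in $w$) present, one wants $w$ as \emph{small} as possible, not ``the largest constant consistent with the constraints.'' The paper instead balances $O(\alpha L/w)$ against $O(\sqrt{\alpha w\log(w_T/\gamma_b)})$ to obtain $w\sim\bigl(\alpha L^{2}/\log(w_T/\gamma_b)\bigr)^{1/3}$, and it is precisely this cube-root choice, substituted back into the decodability requirement and into the validity condition, that produces the powers $\gamma_c^{3}$ and $\gamma_e^{3}$. Your route, as written, would yield at best $\gamma_c^{2}$ and a linear $\gamma_e$, so the claim that ``unwinding the resulting constraints'' recovers the theorem's hypotheses does not follow from what you have sketched.

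A secondary point: the per-chunk validity condition the paper actually uses here is $\alpha=\Omega\bigl((w/\gamma_e^{2})\log(w_T/\gamma_b)\bigr)$, with a \emph{quadratic} dependence on $\gamma_e$, rather than the linear dependence you get by directly porting condition~\eqref{eq:UnequalParameters}. In the constant-$\alpha$ regime the deviation terms $\gamma_i^{*}$ are $O(1)$ rather than $o(1)$, so the gap $r_{i-1}-r_i$ is not simply $\varphi_{i-1}-\varphi_i$; you need to redo that estimate carefully, and it is this recalculation together with the cube-root choice of $w$ that yields $\alpha=\Omega\bigl((L/\gamma_e^{3})\log(L/(\gamma_e\gamma_b))\bigr)$. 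Also, the per-chunk failure probability target in the CCP argument is $\gamma_b$, not $\gamma_b/q$: one bounds the \emph{expected} fraction of undecodable chunks and then invokes Lemma~\ref{lem:Concentration}, rather than taking a union bound over chunks.
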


\begin{proof}Let us assume $p_1>p_2>\cdots> p_L$, without loss of generality. Let $p\doteq \min_{1\leq i\leq L} p_i$, $\gamma_e\doteq \min_{1<i\leq L}\gamma_{e_i}$, and $\gamma_{e_i}\doteq |p_i-p_{i-1}|$. Let $r_i\doteq (1-\gamma^{*}_i)\varphi_i$, where $\varphi_i=p_iN_T/wq$ and $\gamma^{*}_i\sim\sqrt{(1/\dot{\varphi_i})\log(w_T/\dot{\gamma_b})}$, and $0<\gamma_b<1$ is an arbitrary constant. For every $1\leq i\leq L$ and $1\leq j\leq w-L+1$, let $\varphi_{ij}$ be the number of packets (pertaining to a given chunk) in the partition $I_{ij}$ (the $j\textsuperscript{th}$ partition pertaining to the $i\textsuperscript{th}$ link), where the time interval $(0,N_T]$ is split into $w$ partitions of length $N_T/w$, and let $\varphi_i$ be the expected value of $\varphi_{ij}$. For all $i,j$, suppose that $\varphi_{ij}$ is larger than or equal to $r_i$. Let $N_T=(1+\gamma_c)k/p$, where $0<\gamma_c<1$ is an arbitrarily small constant. By replacing $N_T$ with $(1+\gamma_c)k/p$, it follows that $\varphi_i=(1+\gamma_c)p_i\alpha/pw$, and $\varphi=O(1)$, similar to those in the proof of Theorem~\ref{thm:CapAppCodDelGeneral}.

For every $1\leq i\leq L$ and $1\leq j\leq w-L+1$, let $\mathcal{D}(Q_i^j)$, $\mathcal{D}_p(Q_i^j)$, and $r_{ij}$ be defined as in the proof of Theorem~\ref{thm:CapAchCodDelGeneral}. For every $1\leq j\leq w-L+1$, $\mathcal{D}(Q_1^j)\geq r_1 j$ (since all the packets pertaining to any chunk over the first link are globally dense). For every $1<i\leq L$ and $1\leq j\leq w-L+1$, by applying Lemma~\ref{lem:HorizontalT}, it can be shown that the inequality $\mathcal{D}(Q_i^j)\geq r_i j$ fails w.p. b.a.b. $ij\dot{\gamma_b}/w_T$, so long as \begin{equation}\label{eq:Temp24}\alpha=\Omega\left(\frac{w}{\gamma_e^2}\log\frac{w_T}{\gamma_b}\right).\end{equation}

Let $\varphi$, $\gamma^{*}$ and $r$ denote $\varphi_L$, $\gamma^{*}_L$ and $r_L$, respectively. Thus, the number of dense packets pertaining to a given chunk at the sink node fails to be larger than \begin{dmath}\label{eq:Temp25}(1+\gamma_c)\alpha-O\left(\frac{\alpha L}{w}\right)-O\left(\sqrt{\alpha w\log\frac{w_T}{\gamma_b}}\right).\end{dmath} We select $w$ to be \[\sqrt[3]{\frac{\alpha L^2}{\log(w_T/\gamma_b)}}\] to maximize~\eqref{eq:Temp25} subject to condition~\eqref{eq:Temp24}. For this choice of $w$, condition~\eqref{eq:Temp24} is met so long as \begin{equation}\label{eq:Temp26}\alpha=\Omega\left(\frac{L}{\gamma^3_e}\log\frac{L}{\gamma_e\gamma_b}\right).\end{equation} By replacing $\gamma_b$ with $\dot{\gamma_b}$ in the preceding results, and substituting the selected value of $w$ in~\eqref{eq:Temp25}, the result of Lemma~\ref{lem:DenseRankProb} shows that the sink node fails to decode a given chunk w.p. b.a.b. $\gamma_b$, so long as~\eqref{eq:Temp25} is larger than $\alpha+\log(1/\dot{\gamma_b})$. Based on the properties of the notation $\Omega(.)$, the latter condition is met so long as \begin{equation}\label{eq:Temp27}\alpha=\Omega\left(\frac{L}{\gamma^3_c}\log\frac{L}{\gamma_b\gamma_c}\right).\end{equation}
The rest of the proof is similar to the proof of Theorem~\ref{thm:CapAppCodDelGeneral}, except that in this case conditions~\eqref{eq:Temp26} and~\eqref{eq:Temp27} need to be met, instead of conditions~\eqref{eq:Temp12} and~\eqref{eq:Temp13}.\end{proof}

\begin{theorem}\label{thm:CapAppAveCodDelSpecial}The average coding delay of a CCP with chunks of size $\alpha$ and a capacity-achieving erasure code of rate $1-(1+\gamma_a)\gamma_b$, over a network similar to Theorem~\ref{thm:CapAppCodDelSpecial} is larger than $(1+\gamma_c)\left(1+(1+\gamma_a)\gamma_b+O(\gamma^2_b)\right)\frac{k}{p}$, w.p. b.a.b. $\epsilon$, so long as \[\alpha=\Omega\left(\frac{L}{\gamma^2_e\gamma_{c}}\log\frac{L}{\gamma_b \gamma_c}\right),\] and $\alpha^2/\gamma^2_a\gamma^2_b=o(k/\log(1/\epsilon))$, where $0<\gamma_a,\gamma_b,\gamma_{c}<1$ are arbitrary constants.\end{theorem}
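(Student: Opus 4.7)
The plan is to mirror the argument for Theorem~\ref{thm:CapAppCodDelSpecial}, with the single key modification that, since we now analyze the \emph{average} coding delay (averaged over the traffic realization but not the code), the Chernoff-type concentration needed to control the per-partition packet counts is unnecessary. Concretely, I would replace the high-probability lower bound $r_i=(1-\gamma^{*}_i)\varphi_i$ on the number of successful packets per active partition pertaining to the $i\textsuperscript{th}$ link by its expectation $\varphi_i$, exactly as Theorem~\ref{thm:CapAppAveCodDelGeneral} does relative to Theorem~\ref{thm:CapAppCodDelGeneral}, and then rerun the rest of the argument of Theorem~\ref{thm:CapAppCodDelSpecial} essentially verbatim.

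First, I would fix $N_T=(1+\gamma_c)k/p$ so that $\varphi_i=(1+\gamma_c)p_i\alpha/(pw)$ and $\varphi=(1+\gamma_c)\alpha/w=O(1)$, as in the proof of Theorem~\ref{thm:CapAppCodDelSpecial}. Next, under the ordering $p_1>p_2>\cdots>p_L$, I would apply Lemma~\ref{lem:HorizontalT} (which is the relevant rank lemma when the successive link parameters decrease) to lower bound the rank of the modified transfer matrix at each non-source node, obtaining $\mathcal{D}(Q_i^j)\geq \varphi_i j$ for all $i,j$, with failure probability b.a.b. $ij\dot{\gamma_b}/w_T$, so long as the same chunk-size condition~\eqref{eq:Temp24}, namely $\alpha=\Omega((w/\gamma_e^2)\log(w_T/\gamma_b))$, holds.

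Then, taking the sink link as the bottleneck, the number of dense packets pertaining to a given chunk at the sink fails to exceed
\[(1+\gamma_c)\alpha - O\!\left(\frac{\alpha L}{w}\right)\]
with probability b.a.b.\ $\gamma_b$; this is the analogue of~\eqref{eq:Temp25} from which the $O(\sqrt{\alpha w\log(w_T/\gamma_b)})$ term has vanished because $r_i$ has been replaced by $\varphi_i$. To maximize this quantity subject to condition~\eqref{eq:Temp24}, I would choose $w$ as large as permitted, namely $w\sim\alpha\gamma_e^2/\log(w_T/\gamma_b)$, and then invoke Lemma~\ref{lem:DenseRankProb} (after substituting $\dot{\gamma_b}$ for $\gamma_b$) to conclude that a given chunk is decodable w.p. b.a.b.\ $\gamma_b$ provided the above lower bound exceeds $\alpha+\log(1/\dot{\gamma_b})$. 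Combining this decoding requirement with~\eqref{eq:Temp24} at the selected $w$ yields precisely the stated constraint $\alpha=\Omega((L/(\gamma_e^2\gamma_c))\log(L/(\gamma_b\gamma_c)))$.

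Finally, I would conclude exactly as in Theorems~\ref{thm:CapAppCodDelGeneral}--\ref{thm:CapAppCodDelSpecial}: the expected fraction of undecodable chunks is b.a.b.\ $\gamma_b$, so Lemma~\ref{lem:Concentration}'s martingale argument (under $\alpha^2/(\gamma_a^2\gamma_b^2)=o(k/\log(1/\epsilon))$) upgrades this to a w.p.\ b.a.b.\ $\epsilon$ bound of $(1+\gamma_a)\gamma_b$ on the fraction of undecodable chunks, and a precode of rate $1-(1+\gamma_a)\gamma_b$ absorbs this fraction, giving the claimed $(1+\gamma_c)(1+(1+\gamma_a)\gamma_b+O(\gamma_b^2))k/p$. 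The only nontrivial point is bookkeeping: one must verify that removing the per-partition Chernoff term is indeed legitimate for the \emph{average} (over traffic) delay, and then track carefully how the optimal $w$ shifts when the square-root term in~\eqref{eq:Temp25} is dropped, so that the resulting condition on $\alpha$ is genuinely tight; apart from this, the proof is a routine reassembly of Lemmas~\ref{lem:HorizontalT}, \ref{lem:DensityTM}, \ref{lem:DenseRankProb}, and~\ref{lem:Concentration} in the same order as in the proof of Theorem~\ref{thm:CapAppCodDelSpecial}.
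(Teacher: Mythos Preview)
Your proposal is correct and follows essentially the same approach as the paper. The paper's proof is even terser than yours: it simply states that the argument of Theorem~\ref{thm:CapAppCodDelSpecial} is repeated, except that $w$ must maximize $(1+\gamma_c)\alpha-O(\alpha L/w)$ subject to condition~\eqref{eq:Temp24}, and then observes that this forces $w=\Omega(L/\gamma_c)$, whence~\eqref{eq:Temp24} becomes $\alpha=\Omega\bigl((L/(\gamma_e^2\gamma_c))\log(L/(\gamma_b\gamma_c))\bigr)$. The only cosmetic difference is that you select $w$ at the \emph{largest} value allowed by~\eqref{eq:Temp24} and then impose the decoding requirement, whereas the paper selects $w$ at the \emph{smallest} value ($\Omega(L/\gamma_c)$) needed for the decoding inequality $(1+\gamma_c)\alpha-O(\alpha L/w)>\alpha+\log(1/\dot{\gamma_b})$ and then reads off the constraint from~\eqref{eq:Temp24}; these are dual formulations of the same optimization and yield the same bound on $\alpha$.
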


\begin{proof}\renewcommand{\IEEEQED}{}The proof follows the same line as that of Theorem~\ref{thm:CapAppCodDelSpecial}, except that the choice of $w$ needs to maximize \begin{equation}\label{eq:Temp28} (1+\gamma_c)\alpha-O\left(\frac{\alpha L}{w}\right)\end{equation} subject to condition~\eqref{eq:Temp24}. To do so, the choice of $w$ needs to be $\Omega(L/\gamma_c)$, and hence, condition~\eqref{eq:Temp24} becomes \[\hspace{2.45 in}\alpha=\Omega\left(\frac{L}{\gamma^2_e\gamma_c}\log\frac{L}{\gamma_b\gamma_c}\right).\hspace{2.45 in}\IEEEQEDopen\]
\end{proof}

\section{Poisson Transmissions and Bernoulli Losses}\label{sec:PoissonTraffic}
In the case of Bernoulli losses and Poisson transmissions with parameters $\{p_i\}_{1\leq i\leq L}$ and $\{\lambda_i\}_{1\leq i\leq L}$, the points in time at which the arrivals/departures occur over the $i\textsuperscript{th}$ link follow a Poisson process with parameter $\lambda_i p_i$. Thus the number of packets pertaining to a given chunk (note that a dense code is a CC with only one chunk), in each partition pertaining to the $i\textsuperscript{th}$ link, has a Poisson distribution with the expected value $\lambda_i p_i N_T/wq$. Since the result of Chernoff bound also holds for Poisson random variables (see \cite[Theorem~A.1.15]{AS:2008}), the main results in Section~\ref{sec:BernoulliLossRegularTraffic} apply to this case by replacing $p$ with $\min_{1\leq i\leq L}\lambda_i p_i$.

\section{Discussion}\label{sec:Discussion}
\subsection{Dense Codes}\label{subsec:DiscussionDenseCodes}
The upper bounds on the coding delay and the average coding delay, derived in this paper, are valid for any arbitrary choice of $\epsilon$. However, in the following, to compare our results with those of~\cite{PFS:2005} and~\cite{DDHE:2009}, we focus on the case where $\epsilon$ goes to $0$ polynomially fast, as $k$ goes to infinity (i.e., $\epsilon=1/k^c$, for some constant $c>0$). For such a choice of $\epsilon$, the upper bounds on the coding delay and the average coding delay hold w.p. $1$, as $k$ goes to infinity.

In~\cite{PFS:2005}, the average coding delay of dense codes over the networks of length $2$ with deterministic regular transmissions and Bernoulli losses with equal parameters ($p$) is shown to be upper bounded by $\frac{1}{p}(k+O(\sqrt{k\log k}))$. The result of Theorem~\ref{thm:DenseCodesRegularBernoulliAverageNon-IdenticalGeneral} indicates that the average coding delay of dense codes over the networks of length $L$ with similar traffics as above (i.e., the special case of identical links with equal parameters)\footnote{One should note that Theorems~\ref{thm:DenseCodesRegularBernoulliActualNon-IdenticalGeneral} and~\ref{thm:DenseCodesRegularBernoulliAverageNon-IdenticalGeneral} are not restricted to the special case of identical links, and hold true for any arbitrary sequence of parameters.} is upper bounded by $\frac{1}{p}(k+(1+o(1))(\sqrt{kL\log(kL)}))$. This is consistent with the result of~\cite{PFS:2005}, although the bound presented here provides more details on the smaller terms in the $O(.)$ term.

The result of Theorem~\ref{thm:DenseCodesRegularBernoulliActualNon-IdenticalGeneral} suggests that the coding delay of dense codes over network scenarios as above is upper bounded by $\frac{1}{p}(k+(1+o(1))({k^2 L \log(kL)})^{\frac{1}{3}})$. One should note that there has been no result on the coding delay of dense codes over identical links in the existing literature. In fact, this was posed as an open problem in \cite{DDHE:2009}. It is also noteworthy that unlike the analysis of~\cite{DDHE:2009}, our analysis does not rely on the existence of a single worst link, and hence is applicable to the special case of identical links.

In~\cite{DDHE:2009}, the average coding delay of dense codes over the networks of length $L$ with deterministic regular transmissions and Bernoulli losses with parameters $\{p_i\}$ was upper bounded by $\frac{k}{p}+\sum_{i\neq \nu} \frac{1-p}{p_i-p}$, where $p = \min_i p_i$ is the unique minimum and $\nu = \arg\min_i p_i$. This result was derived under the (impractical) assumption that the size of the finite field over which the coding scheme operates is infinitely large.

Related to this result, Theorem~\ref{thm:DenseCodesRegularBernoulliAverageNon-IdenticalGeneral} or Theorem~\ref{thm:DenseCodesRegularBernoulliAverageNon-Identical} indicate that the average coding delay of dense codes over line networks with traffics as above, but with arbitrary or unequal parameters,\footnote{The special case of traffic parameters with a unique minimum can fall into each category of arbitrary or unequal traffic parameters. For example, aside from the uniqueness of the parameter with the minimum value, some other parameters might be equal, and hence such a case does not belong to the category of unequal parameters but it does belong to the category of arbitrary parameters.} is upper bounded by $\frac{1}{p}(k+(1+o(1))(\sqrt{kL\log(kL)}))$, or $\frac{1}{p}(k+(1+o(1))(L f(k)\log(kL)))$, respectively, where $f(k)$ goes to infinity sufficiently slow (see Theorem~\ref{thm:DenseCodesRegularBernoulliAverageNon-Identical}), as $k$ goes to infinity. It is important to note that both Theorems~\ref{thm:DenseCodesRegularBernoulliAverageNon-IdenticalGeneral} and~\ref{thm:DenseCodesRegularBernoulliAverageNon-Identical} do not have the limiting assumption of the result of~\cite{DDHE:2009} regarding the size of the finite field. The bounds of Theorems~\ref{thm:DenseCodesRegularBernoulliAverageNon-IdenticalGeneral} and~\ref{thm:DenseCodesRegularBernoulliAverageNon-Identical} are larger than that of~\cite{DDHE:2009}, which is expected, since the former, unlike the latter, are derived based on the practical assumption of operating over a finite field of size as small as two.

The results of Theorems~\ref{thm:DenseCodesRegularBernoulliActualNon-IdenticalGeneral} and~\ref{thm:DenseCodesRegularBernoulliActualNon-Identical} indicate that for both traffics with arbitrary or unequal parameters, the coding delay is upper bounded by $\frac{1}{p}(k+(1+o(1))(k^2 L\log(kL))^{\frac{1}{3}})$. This is while, in~\cite{DDHE:2009}, the coding delay is upper bounded by $\frac{1}{p}(k+O(k^{\frac{3}{4}}))$. This bound is looser than the bound in Theorem~\ref{thm:DenseCodesRegularBernoulliActualNon-IdenticalGeneral}, or the one in Theorem~\ref{thm:DenseCodesRegularBernoulliActualNon-Identical}, although it is derived under the same limiting assumption as the one used in~\cite{DDHE:2009} for the average coding delay (i.e., the size of the finite field being infinitely large). Such an assumption makes the bound appear smaller than what it would be at the absence of the assumption. This demonstrates the strength of the bounding technique used in this work.

By combining Theorems~\ref{thm:DenseCodesRegularBernoulliActualNon-IdenticalGeneral} and~\ref{thm:DenseCodesRegularBernoulliAverageNon-IdenticalGeneral}, or Theorems~\ref{thm:DenseCodesRegularBernoulliActualNon-Identical} and~\ref{thm:DenseCodesRegularBernoulliAverageNon-Identical}, it can be seen that the coding delay might be much larger than the average coding delay. This highlights the fact that the analysis of the average coding delay does not provide a complete picture of the speed of convergence of dense codes to the capacity of line networks.

\begin{table*}
  \caption{Comparison of Chunked Codes over Line Networks with Various Traffics}
  \hspace{-0.11in}
    \begin{tabular}{|p{1.275cm}|p{1cm}|@{}c@{}|@{}c@{}|@{}c@{}|@{}c@{}|}
    \hline
    \multirow{2}{*}{\hspace{.25 cm}\vspace{-.1cm}Traffic} & \multirow{2}{*}{\hspace{-.275 cm}$\vspace{-.1cm}\begin{array}{c} \text{Success} \\ \text{Parameters} \end{array}$} & $\begin{array}{c} \text{Overhead } \text{(}\eta\text{)} \\ \text{and} \end{array}$ & \multirow{2}[4]{*}{\vspace{.1cm} $\begin{array}{c} \text{Size of Chunks} \\ \text{(}\alpha\text{)}\end{array}$} & \multirow{2}[4]{*}{\vspace{.1cm}$w$} & \multirow{2}[4]{*}{\vspace{.1cm}Comments} \\
      &   & \hspace{.1cm}Average Overhead ($\bar{\eta}$)&   &   &  \\
    \hline
    $\begin{array}{c}\hspace{-.3cm} \text{Arbitrary} \\ \hspace{-.3cm}  \text{Deterministic}\end{array}$ & \hspace{.35cm} -  & $\eta=\bar{\eta}=O\left(kL\left(\frac{1}{\alpha} \log\frac{kL}{\epsilon}\right)^{\frac{1}{3}}\right)$ & $\omega\left({L^3\log\frac{kL}{\epsilon}}\right)$ &  -  & \multirow{5}[10]{*}{\vspace{-0.5cm}$\begin{array}{c} f(k)=\left\{o\left(\frac{k}{L\log\frac{kL}{\epsilon}}\right), \omega(1)\right\} \\ m= \frac{kw}{\alpha}\log\left(\frac{kLw}{\alpha\epsilon}\right) \\ \delta = \min\left\{\frac{\gamma_e}{p},1\right\} \\ \gamma_{e_i}=|p_i-p_{i-1}|\\ \gamma_e=\min_{1<i\leq L}\gamma_{e_i}\\ p=\min_{1\leq i\leq L}p_i \end{array}$} \\
\cline{1-5}  \multirow{4}{*}{\vspace{-0.75cm}$\begin{array}{c}\hspace{-.335cm}\text{Deterministic} \\ \hspace{-.335cm}\text{Regular} \\ \hspace{-.335cm}\text{Transmissions} \\ \hspace{-.335cm}\text{and} \\ \hspace{-.335cm}\text{Bernoulli} \\ \hspace{-.335cm}\text{Losses} \end{array}$} & \multirow{2}{*}{\vspace{-.25cm}Arbitrary} & $\eta = \frac{1}{p}\left((1+o(1))\left(\frac{kL}{w}+k^{\frac{1}{2}}m^{\frac{1}{2}}+m\right)\right)$ & \multirow{2}[4]{*}{\vspace{-.15cm}$\omega\left({L\log\frac{kL}{\epsilon}}\right)$} & $\left(\frac{\alpha L^2}{\log\frac{kL}{\epsilon}}\right)^{\frac{1}{3}}$ &  \\
\cline{3-3}\cline{5-5}      &   & $\bar{\eta}=\frac{1}{p}\left((1+o(1))\left(\frac{kL}{w}+m\right)\right)$ &   & $\left(\frac{\alpha L}{\log\frac{kL}{\epsilon}}\right)^{\frac{1}{2}}$ & \\
\cline{2-5}      & \multirow{2}{*}{\vspace{-.25cm}Unequal} & $\eta = \frac{1}{p}\left((1+o(1))\left(\frac{kL}{w}+k^{\frac{1}{2}}m^{\frac{1}{2}}\right)\right)$ & $\omega\left(\frac{L}{\delta} \log\frac{kL}{\epsilon}\right)$ & $\left(\frac{\alpha L^2}{\log\frac{kL}{\epsilon}}\right)^{\frac{1}{3}}$ & \\
\cline{3-5}      &   & $\bar{\eta}=\frac{1}{p}\left((1+o(1))\left(\frac{kL}{w}\right)\right)$ & $\omega\left(f(k)\cdot\frac{L}{\delta} \log\frac{kL}{\epsilon}\right)$ & $\frac{1}{f(k)}\left(\frac{\alpha}{\log\frac{kL}{\epsilon}}\right)$ & \\
    \hline
    \end{tabular}
  \label{tab:TableI}
\end{table*}

\subsection{Chunked Codes}\label{subsec:DiscussionChunkedCodes}
Table~\ref{tab:TableI} shows the upper bounds\footnote{With a slight abuse of language, we refer to the ``upper bound'' on the overhead or the average overhead as the ``overhead'' or the ``average overhead.''} (w.p. of failure b.a.b. $\epsilon$) on the overhead and the average overhead (i.e., the difference between the coding delay or the average coding delay and the capacity) of CC over various traffics for different ranges of the size of the chunks based on the results in Section~\ref{sec:BernoulliLossRegularTraffic} and those in~\cite{HBJ:2011}.\footnote{The results of Section~\ref{subsubsec:CCCACH} and those of Section~\ref{subsubsec:CCCAPP} were stated in terms of $q$ and $\alpha$, respectively. In this section, for the ease of comparison, the former results are also restated in terms of $\alpha$ by replacing $q$~with~$k/\alpha$.} The traffics under consideration are: arbitrary deterministic traffics, and traffics with deterministic regular transmissions and Bernoulli losses. We refer to the latter traffics as the \emph{probabilistic traffics} for simplification.\footnote{In the case of arbitrary deterministic traffics, the capacity is equal to $k$, and in the case of probabilistic traffics with parameters $\{p_i\}_{1\leq i\leq L}$, the capacity is equal to $k/p$, where $p=\min_{1\leq i\leq L}p_i$.} The probabilistic traffics are categorized into two sub-categories: traffics with arbitrary parameters and traffics with unequal parameters. We say that a code is ``capacity-achieving'' (c.-a.) if the ratio of the overhead to the capacity goes to $0$, as $k$ goes to infinity. Similarly, a code is ``capacity-achieving on average'' (c.-a.a.) if the ratio of the average overhead to the capacity goes to $0$, as $k$ goes to infinity. In Table~\ref{tab:TableI}, the upper (or the lower) row in front of each case of traffic parameters corresponds to a c.-a. (or a c.-a.a.) scenario.

In the table, one can see that, for each category of traffics, the size of the chunks ($\alpha$) has to be sufficiently large so that CC are c.-a. or c.-a.a.. For arbitrary deterministic traffics, the lower bound on $\alpha$ is super-logarithmic in $k$, i.e., $\omega(\log k)$, and super-log-cubic in $L$, i.e., $\omega({L^3\log L})$. For the probabilistic traffics with arbitrary or unequal parameters, the lower bound on $\alpha$ has a similar growth rate with $k$, but a smaller (super-log-linear) growth rate with $L$, i.e., $\omega({L\log L})$. The coding cost of CC (i.e., the number of the coding (packet) operations per message packet), is, on the other hand, linear in $\alpha$. Thus, CC can perform as fast over both the arbitrary deterministic traffics and the probabilistic traffics, but with a lower coding cost (smaller chunks) in the latter case compared to the former.

\begin{table*}
\caption{Comparison of Chunked Codes with Precoding over Line Networks with Various Traffics}
\hspace{-.11in}
    \begin{tabular}{|p{1.275cm}|p{1cm}|@{}c@{}|c|c|@{}c@{}|}
    \hline
    \multirow{2}{*}{\hspace{.25 cm}\vspace{-.1cm}Traffic} & \multirow{2}{*}{\hspace{-.275 cm}$\vspace{-.1cm}\begin{array}{c} \text{Success} \\ \text{Parameters} \end{array}$} & $\begin{array}{c} \text{Overhead } \text{(}\eta\text{)} \\ \text{and} \end{array}$ & \multicolumn{2}{c|}{\multirow{2}{*}{\vspace{-.1cm}$\begin{array}{c} \text{Size of Chunks} \\ \text{(}\alpha\text{)}\end{array}$}} & \multirow{2}{*}{\vspace{-.1cm}Comments} \\      &   & \hspace{.1cm}Average Overhead ($\bar{\eta}$) & \multicolumn{2}{c|}{} &  \\
    \hline
    $\begin{array}{c}\hspace{-.3cm} \text{Arbitrary} \\ \hspace{-.3cm}  \text{Deterministic}\end{array}$ & \hspace{.35cm} - & \hspace{.1cm}$\eta=\bar{\eta}=\gamma_o k$ & $\Omega\left(\frac{L^3}{\gamma^3_c}\log\frac{L}{\gamma_b\gamma_c}\right)$ & \multicolumn{1}{r|}{\multirow{5}[10]{*}{\vspace{-.25cm}$o\left(\sqrt{\frac{\gamma^2_a\gamma^2_b k}{\log\frac{1}{\epsilon}}}\right)$}} & \multirow{5}[10]{*}{\vspace{-.1cm}$\begin{array}{c} 0<\gamma_a,\gamma_b,\gamma_c<1 \\ \{\gamma_a,\gamma_b,\gamma_c\}=O(1) \\ \gamma_o=\gamma_c+(1+\gamma_c)\gamma'_o \\ \gamma'_o = (1+\gamma_a)\gamma_b+O(\gamma^2_b) \\  \gamma_{e_i}=|p_i-p_{i-1}| \\ \gamma_e=\min_{1<i\leq L}\gamma_{e_i} \\ p=\min_{1\leq i\leq L}p_i \end{array}$} \\
\cline{1-2}\cline{3-3}\cline{4-4}    \multirow{4}{*}{$\begin{array}{c}\hspace{-.335cm}\text{Deterministic} \\ \hspace{-.335cm}\text{Regular} \\ \hspace{-.335cm}\text{Transmissions} \\ \hspace{-.335cm}\text{and} \\ \hspace{-.335cm}\text{Bernoulli} \\ \hspace{-.335cm}\text{Losses} \end{array}$} & \multirow{2}[4]{*}{Arbitrary} &  {$\eta=\gamma_o\frac{k}{p}$} & $\Omega\left(\left\{\left(\frac{L}{\gamma^3_c}\log\frac{L}{\gamma_b\gamma_c}\right),\left(L^4\log\frac{L}{\gamma_b}\right)\right\}\right)^{\textcolor[rgb]{1.00,1.00,1.00}{\frac{1}{2}}}_{\textcolor[rgb]{1.00,1.00,1.00}{\frac{1}{2}}}$ &   &  \\ \cline{3-3}
\cline{4-4}      &   &  $\bar{\eta}=\gamma_o\frac{k}{p}$ & $\Omega\left(\frac{L}{\gamma_c}\log\frac{L}{\gamma_b\gamma_c}\right)^{\textcolor[rgb]{1.00,1.00,1.00}{\frac{1}{2}}}_{\textcolor[rgb]{1.00,1.00,1.00}{\frac{1}{2}}}$ &   &  \\ \cline{3-3}
\cline{2-2}\cline{4-4}      & \multirow{2}[4]{*}{Unequal} &  $\eta=\gamma_o\frac{k}{p}$ & $\Omega\left(\left\{\left(\frac{L}{\gamma^3_c}\log\frac{L}{\gamma_b\gamma_c}\right),\left(\frac{L}{\gamma^3_e}\log\frac{L}{\gamma_b\gamma_e}\right)\right\}\right)^{\textcolor[rgb]{1.00,1.00,1.00}{\frac{1}{2}}}_{\textcolor[rgb]{1.00,1.00,1.00}{\frac{1}{2}}}$ &   &  \\ \cline{3-3}
\cline{4-4}      &   & $\bar{\eta}=\gamma_o\frac{k}{p}$  & $\Omega\left(\frac{L}{\gamma^2_e\gamma_c}\log\frac{L}{\gamma_b\gamma_c}\right)^{\textcolor[rgb]{1.00,1.00,1.00}{\frac{1}{2}}}_{\textcolor[rgb]{1.00,1.00,1.00}{\frac{1}{2}}}$ &   &  \\ \cline{3-3}
    \hline
    \end{tabular}
  \label{tab:TableII}
\end{table*}

Moreover, as it can be seen in Table~\ref{tab:TableI}, for both arbitrary deterministic and probabilistic traffics (in each case of arbitrary or unequal traffic parameters), the overhead grows sub-log-linearly with $k$, i.e., $O(k\log^{\frac{1}{3}} k)$, and decays sub-linearly with $\alpha$, i.e., $O(1/\alpha^{\frac{1}{3}})$. However, for arbitrary deterministic traffic, the overhead grows with $O(L\log^{\frac{1}{3}}L)$, and for the probabilistic traffics, it only grows with $O(L^{\frac{1}{3}}\log^{\frac{1}{3}} L)$. This implies a faster speed of convergence to the capacity in the latter case compared to the former. Similar comparison results can also be observed in terms of the average overhead, except that in the case of unequal traffic parameters, the average overhead decays linearly with $\alpha$, i.e., $O(1/\alpha)$, but grows poly-log-linearly with $k$, i.e., $O(k\log^2 k)$, for the choice of $f(k)=O(\log k)$, and log-linearly with $L$, i.e., $O(L\log L)$.

Table~\ref{tab:TableII} shows the results for CC with precoding (CCP) in the scenarios similar to those considered in Table~\ref{tab:TableI}, where the precode is a capacity-achieving erasure code of dimension $k$ and rate $1-(1+\gamma_a)\gamma_b$. In particular, one can see that CCP are ``capacity-approaching'' or ``capacity-approaching on average'' with an arbitrary small ``non-zero constant'' gap $\gamma_o$ (i.e., the ratio of the overhead or the average overhead to the capacity goes to $\gamma_o$, as $k$ goes to infinity) if $\alpha$ is sufficiently large. For simplifying the terminology, we drop the term ``with a non-zero constant gap.'' The upper (or the lower) row in front of each case of traffic parameters corresponds to a capacity-approaching (or a capacity-approaching on average) scenario. For arbitrary deterministic traffics, the lower bound on $\alpha$ is constant in $k$, and log-cubic in $L$, i.e., $O(L^3\log L)$. For the probabilistic traffics with arbitrary or unequal parameters, the lower bound on $\alpha$ is also constant in $k$, but has a smaller (log-linear) growth rate with $L$, i.e., $O(L\log L)$. Thus, in the case of CCP, one can make a conclusion similar to the one made in the case of stand-alone CC, with respect to the arbitrary deterministic and the probabilistic traffics.

\bibliographystyle{IEEEtran}
\bibliography{RefsII}

\end{document}